\documentclass[a4paper,UKenglish, thm-restate]{article}

\usepackage{amsmath,amssymb,amsfonts,amsthm,todonotes}
\usepackage{fullpage}
\usepackage{thmtools}
\usepackage{thm-restate}
\theoremstyle{plain} 
\newtheorem{theorem}{Theorem}
\newtheorem{lemma}[theorem]{Lemma}

\newtheorem{claim}[theorem]{Claim}

\theoremstyle{definition} 
\newtheorem{definition}[theorem]{Definition}

\newtheorem{observation}[theorem]{Observation}

\theoremstyle{remark} 

\usetikzlibrary{decorations.pathreplacing,shapes.geometric}
\usepackage[ruled,vlined,linesnumbered]{algorithm2e}
 \newcommand\DGP{\textsc{Dense Graph Partition}}
 
 \newenvironment{proofofclaim}{%
  \begin{proof}%
}{%
  \end{proof}%
}

 \usepackage[affil-it]{authblk}

\providecommand{\keywords}[1]{\textbf{\textit{Keywords---}} #1}

\usepackage[colorlinks=true, allcolors=blue]{hyperref}
\usepackage{orcidlink}

\newcommand{\email}[1]{\href{mailto:#1}{\texttt{#1}}}

\title{Connected (Dense) Partition for Tree-Like Graphs} 

\author[1]{Katrin Casel\,\orcidlink{0000-0001-6146-8684}}
\author[2]{Archontia C. Giannopoulou\,\orcidlink{0009-0001-4368-2852}}
\author[3]{Aikaterini Niklanovits\,\orcidlink{0000-0002-4911-4493}}

\affil[1]{Humboldt-Universität Berlin, Germany\authorcr 
\email{katrin.casel@hu-berlin.de}}
\affil[2]{Department of Informatics and Telecommunications, National and Kapodistrian University of Athens, Greece\authorcr
\email{archontia.giannopoulou@gmail.com}}
\affil[3]{Hasso Plattner Institute, Germany\authorcr
\email{Aikaterini.Niklanovits@hpi.de}}

\date{}

\begin{document}

\maketitle

\begin{abstract}
We focus on two variants of graph partitioning problems, connected partition and dense partition. Formally, given a graph $G=(V,E)$ and a partition of its vertices $\mathcal P=\{P_1,\ldots, P_k\}$ we say that $\mathcal P$ is a connected partition of $G$ if each $P_i$ induces a connected graph in $G$. Many classical variants of this problem impose additional restrictions both on the number of parts as well as on the size of each part.
Moreover, given a partition $\mathcal P=\{P_1,\ldots, P_k\}$  we define its density by $d(\mathcal P):=\sum_{i=1}^k |E(P_i)|/|V(P_i)|$. The problem \textsc{Maximum Dense Graph Partition} asks to construct a partition of maximum density. We study this problem both with and without fixed number of sets $k$.
We prove the following~results:
\begin{enumerate}
    \item A polynomial time algorithm for \textsc{Maximum Dense Graph Partition} of thick forests, a subclass of chordal graphs, generalizing the previously known polynomial time algorithm on block graphs.
    \item A generic dynamic programming algorithm to construct (if possible) a connected partition into $k$ sets of prescribed sizes on graphs with bounded treewidth. This yields algorithms for both variants of Dense Graph Partition and an efficient construction for the Gy\H{o}ri-Lov\'{a}sz theorem.
    \item The $\mathsf{NP}$-hardness of \textsc{Maximum Dense Graph Partition} to $k$ parts restricted to split graphs, indicating that thick trees are the \emph{boundary} for the polynomial computability of this problem.
\end{enumerate}
\end{abstract}

\keywords{Connected Partition, Dense Partition, Clustering, Thick Trees, Split Graphs}

\section{Introduction}

Graph partitioning problems constitute a central topic in graph algorithms and combinatorial optimization, with applications ranging from clustering \cite{celebi2015partitional} and community detection \cite{newman2004detecting} to network decomposition \cite{yahia2018network} and job allocation \cite{ali1992graph}. In order for those problems to find applications certain additional restrictions arise naturally with the most general being each part of the partition to induce a connected graph, see for example \cite{DBLP:journals/dam/LucertiniPS93, DBLP:journals/jea/MohringSSWW06, DBLP:journals/tods/FanYXZLYLCX18, DBLP:conf/osdi/GonzalezLGBG12}. 
Connectivity alone however is not a sufficient objective for a useful partitioning. Prominent additional constraints are fixing the number and sizes of parts or optimizing for high density. 

The most general form of size-constraints is the option of requesting a partition into $k$ connected subgraphs of specified sizes $n_1,\dots, n_k$.   
We call these types of partition \emph{Gy\H{o}ri-Lov\'{a}sz Partition} in reference to  Gy\H{o}ri \cite{Gyori1976} and Lov\'{a}sz \cite{Lovasz1977} who independently proved that $k$-connectivity is both a sufficient and necessary condition on a graph for such a partition to exist for any choice of  $n_1,\dots, n_k$. Both proofs of existence unfortunately  do not translate to an efficient computation of such a partition, and it still remains open if this is possible for $k>4$ \cite{DBLP:journals/ipl/ZuzukiTN90, GLcase3, DBLP:conf/wg/WadaK93, DBLP:journals/ipl/NakanoRN97, Hoyer2016}.
 Further, without the restriction to $k$-connected graphs,  already the specialized problem of partitioning into parts of almost equal size is $\mathsf{NP}$-hard~\cite{ItoZN06}.

An alternative to specified sizes is optimizing for the classical definition of the density (see for example~\cite{goldberg1984finding,DBLP:conf/icalp/KhullerS09,DBLP:journals/dam/DarlayBM12}), that is, the density of a graph  is equal to its average degree, meaning the amount of its edges divided by the amount of its vertices. Formally, this objective gives the Maximum Dense Graph Partition problem (MDGP), in which an undirected graph is given and the goal is to find a partition of its vertices $\mathcal P=\{P_1,\ldots, P_k\}$ such that the sum of the densities of the subgraphs induced by each $P_i$ is maximized. Note that in MDGP the number of parts of the partition is not specified, and the objective of maximizing density implies that all parts are connected. While finding a subgraph of largest density is famously known to be solvable in polynomial time~\cite{goldberg1984finding}, MDGP is $\mathsf{NP}$-hard~\cite{DBLP:journals/dam/DarlayBM12}. To understand the difference between densest subgraph and  MDGP, consider for example a graph on six vertices where three of them form a triangle and each of the remaining three is adjacent to a distinct vertex of the triangle, as shown in Figure~\ref{fig:example1}.
Taking greedily the densest sugraph as a part for a MDGP partition would result in just considering the triangle as one part and then also have three isolated vertices. The optimal MDGP solution is a perfect matching of the graph, that is, the three edges that are adjacent to the vertices of degree~$1$. Good solutions to MDGP thus cannot use individual greedy ideas but need to partition more globally. 
 Under a different name, this problem is also studied in game theory in the context of social fairness. Specifically, Aziz et al.~\cite{DBLP:conf/ijcai/AzizGGMT15} studied the problem Fractional Hedonic Game, and more particularly the Maximum Utilitarian Welfare problem (MUW). In this setting, instead of parts we talk about coalitions where a formation of coalitions corresponds to a partition of the graph. The value of the optimal formation of coalitions equals to twice the value of the optimal dense partition.

\begin{figure}[h]
\begin{center}
\begin{tikzpicture}[every node/.style={fill=black,inner sep=1.5pt},scale=0.8]
\node[shape=circle,draw=black] (a1) at (-1,0){};
\node[shape=circle,draw=black] (b1) at (1,0){};
\node[shape=circle,draw=black] (c1) at (0,1.5){};

\node[shape=circle,draw=black] (a2) at (-1.7,-0.7){};
\node[shape=circle,draw=black] (b2) at (1.7,-0.7){};
\node[shape=circle,draw=black] (c2) at (0,2.5){};

\path [-, thick] (a1) edge (a2);
\path [-, thick] (b1) edge (b2);
\path [-, thick] (c1) edge (c2);
\path [-,gray] (a1) edge (b1);
\path [-,gray] (a1) edge (c1);
\path [-,gray] (b1) edge (c1);

\end{tikzpicture}

\caption{If we split the graph in such a way that a part contains exactly the clique of size $3$ the overall density is $1$. If we choose the perfect matching as a partition then the overall density is $3\cdot\frac{1}{2}$.}
\label{fig:example1}
\end{center}
\end{figure}

Both  Maximum Dense Graph Partition and Gy\H{o}ri-Lov\'{a}sz Partition have been mostly studied under restricted graph classes to identify tractable inputs.

MDGP is known to be polynomial time solvable on trees~\cite{DBLP:journals/jair/Bilo0FMM18} and  graphs with minimum degree $(n-3)$~\cite{DBLP:journals/jcss/BazganCC25}. Further,  Hanaka et al.~\cite{DBLP:journals/jco/HanakaIO25} provided a polynomial algorithm for block graphs and also showed that MDGP can be solved in $n^{\mathcal O(\tau)}$ time where $\tau$ is the vertex cover number of the input graph. Moreover, they provide an $(nW)^{\mathcal O(w)}$ time algorithm where $w$ is the treewidth of the input graph and $W$ the maximum absolute edge weight.
On the negative side, MDGP is known to be $\mathsf{NP}$-hard  even restricted to bipartite, cubic, or $(n-4)$-regular graphs~\cite{DBLP:journals/jcss/BazganCC25}.
MDGP is also studied in the variation where the number of parts to be built is specified, and we denote this version by $\text{MDGP}_k$. 
Most  hardness results listed for MDGP transfer to $\text{MDGP}_k$, and is known that $\text{MDGP}_k$ is solvable in polynomial time on trees \cite{DBLP:conf/ictcs/Li21}.

It is known that Gy\H{o}ri-Lov\'{a}sz Partition is solvable in polynomial time when restricted to $k$-connected chordal graphs~\cite{DBLP:conf/wg/CaselFINZ23}, or $k$-connected biconvex graphs~\cite{DBLP:conf/esa/NiklanovitsSVZ25}.
The specialized variation with all $n_i$ being equal, as decision problem called Equitable Connected Partition (ECP), is known to be fixed parameter tractable when parameterized by the vertex cover number and XP when parameterized by treewidth~\cite{equi}. On the negative side, this specialized ECP is known to be  W[1]-hard when parameterized by pathwidth, feedback vertex set  and $k$ combined, even for planar graphs. Further, ECP is already $\mathsf{NP}$-hard on grid graphs \cite{DBLP:journals/networks/BeckerLLS98}.

Also for approximation, restriction to specific graph classes was used to derive better results. 
The $2$-approximation for MDGP on general graphs by Aziz et al.~\cite{DBLP:conf/ijcai/AzizGGMT15} was improved by Bazgan et al.~\cite{DBLP:journals/jcss/BazganCC25} to ratio $4/3$ on cubic graphs, and even to an EPTAS on graphs with minimum degree $n-t$ for any constant $t\geq 4$. The optimization problem of Equitable Connected Partition is usually called Balanced Connected Partition $\text{BCP}_k$ with the objective being either to minimize the size of the largest, or to maximize the size of the smallest part.  The $3$-approximation for both versions of $\text{BCP}_k$  by \cite{DBLP:conf/approx/BorndorferCINSZ21} was improved to ratio 2 on claw-free graphs.

\subsection*{Our Contribution}
We extend the line of research on Maximum Dense Graph Partition and Gy\H{o}ri-Lov\'{a}sz Partition on specified graph classes, particularly exploring the gaps left open by the existing literature. 

We first study the structure of optimal solutions for MDGP  on well partitioned chordal graphs.
To this end, we define the notion of a \emph{generalized star}. Informally, a graph is a generalized star, if it can be obtained from a star by blowing each vertex up to be a clique, keeping the leaves connected to at least half the vertices in the center clique. With this notion, we show the following structural properties for MDGP on well partitioned chordal graphs.

\begin{restatable*}{theorem}{genstars}  
\label{thm::genstars}
For $\text{MDGP}$  restricted to well partitioned chordal graphs, there exists an optimum solution where each part induces a generalized star or a clique.   
 \end{restatable*}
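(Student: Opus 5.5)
We take an optimum solution $\mathcal P$ that, among all optimum solutions, maximizes the number of parts (equivalently, one that cannot be refined without losing density). We then show that every part that is neither a clique nor a generalized star can be split or rearranged without decreasing the total density, which contradicts this choice.

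Recall that a well partitioned chordal graph has a vertex partition into cliques (the bags) arranged along a tree $T$. For two bags $B,B'$ adjacent in $T$, the edges between them form a complete bipartite graph between some $X\subseteq B$ and $X'\subseteq B'$. We use this tree to analyse the parts.

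First, each part $P$ is connected, so it meets a subtree $T_P$ of $T$. We fix a part $P$ that is not a clique. We show that we may assume $T_P$ is a star, that is, some bag $C$ meets $P$ and every other bag meeting $P$ is adjacent to $C$ in $T$.

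If $T_P$ contains a path of three bags, consider an edge $BB'$ of $T_P$. It splits $P$ into $P_1\cup P_2$, and the edges crossing the cut form a complete bipartite graph $K_{a,b}$ with $a\le |P_1\cap B|$ and $b\le |P_2\cap B'|$. We compare $d(P)=|E(P)|/|P|$ with $d(P_1)+d(P_2)$. Write $|E(P)|=|E(P_1)|+|E(P_2)|+ab$. Splitting is at least as good unless the $ab$ cross edges outweigh the loss from the denominator. Since each side of a long path contains at least a whole clique portion, we can show via a weighted-average argument that for a suitably chosen tree edge (for instance an edge incident to a leaf bag of $T_P$ that is not a centre) the inequality $d(P_1)+d(P_2)\ge d(P)$ holds.

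A useful elementary lemma here is the following. If $P=P_1\cup P_2$ with $e$ cross edges, then $d(P)\le d(P_1)+d(P_2)$ whenever $e\le \frac{|P_2|}{|P|}|E(P_1)|+\frac{|P_1|}{|P|}|E(P_2)|+\ldots$, which reduces to comparing $e$ with $|P_1||P_2|\cdot(\text{densities})$.

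Now suppose $T_P$ is a star with centre bag $C$ and leaf bags $L_1,\dots,L_r$. Let $Q_i=P\cap L_i$ and $N_i=N(Q_i)\cap C\cap P$. Each $Q_i$ is a clique and is completely joined to $N_i$. The generalized star condition requires $|N_i|\ge |P\cap C|/2$. Suppose some leaf violates this, say $|N_1|<|P\cap C|/2$. We then split off $Q_1$ as its own part, which is a clique.

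The cross-edge count is $|Q_1||N_1|$. Using $d(Q_1)=(|Q_1|-1)/2$, and the fact that the remainder $P\setminus Q_1$ contains the clique $P\cap C$, which contributes density at least $(|P\cap C|-1)/2$, we verify that $d(P\setminus Q_1)+d(Q_1)\ge d(P)$ when $|N_1|<|P\cap C|/2$. This calculation is the heart of why the threshold is exactly one half. Intuitively, each vertex of $Q_1$ brings at most $|N_1|+|Q_1|-1$ incident edges inside $P$, while the average degree in $P$ exceeds roughly $|P\cap C|-1$, so a vertex of $Q_1$ is below average. Removing below-average contributors in bulk does not decrease the density sum.

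We also need the edge cases. If $P\cap C$ is empty, or if the centre could be chosen differently, then $P$ is essentially two cliques joined by a complete bipartite graph. In that case $P$ is a generalized star with one leaf, or it is a clique.

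The main obstacle is the first reduction, from a general subtree $T_P$ to a star, together with making the split inequality rigorous when leaves are only partially joined. The difficulty is that the density of $P$ may be dominated by a large leaf clique rather than the centre. The first thing to try is to choose the centre $C$ as the bag containing the largest clique portion of $P$, and then to apply the averaging lemma repeatedly. Each application either strictly increases the number of parts or keeps the density while simplifying $P$. Termination follows from the extremal choice of $\mathcal P$.
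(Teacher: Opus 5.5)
Your overall architecture (take an optimum with the maximum number of parts, reduce a part to a star of bags, then handle leaves that see too little of the centre) mirrors the paper's. However, the step you call the heart of the argument is false.

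Detaching one weak leaf can strictly lose density. Take a centre $K_6$, a vertex $v$ adjacent to two centre vertices, and ten pendant vertices each adjacent to one centre vertex. Then $d(P)=27/17$, but $d(P\setminus\{v\})+d(\{v\})=25/16<27/17$. Your heuristic errs twice. First, a single vertex can be split off without loss iff its degree in $P$ is at most $|E(P)|/|P|$, which is half the average degree, not the average degree. Second, $|E(P)|/|P|$ is not bounded below by anything like $(|P\cap C|-1)/2$ once many sparse leaves are present.

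Moreover, the definition asks for $\lfloor l/2\rfloor+1$ centre neighbours, i.e.\ strictly more than half. So you must also treat $|N_i|=l/2$ for even $l$, and there no detachment of leaves helps. With centre $K_6$ on $c_1,\dots,c_6$, $v$ adjacent to $c_1,c_2,c_3$ and $u$ adjacent to $c_3,\dots,c_6$, one has $d(P)=11/4$. Removing $v$, $u$ or both gives $19/7$, $18/7$ and $5/2$ respectively.

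What is missing is the second alternative of Lemma~\ref{lemma:star-decomposition}. There you re-partition $P$ into the cliques $P_i\cup Z_i$, where $Z_1,\dots,Z_r$ partition the centre with $Z_i\subseteq N(P_i)$; the total density is $(|P|-r)/2$ (here $3>11/4$). You then prove that either this or detaching \emph{all} weak leaves at once loses nothing. Write $n_2$ for the centre size, $N$ for the number of leaf vertices and $p_{\max}$ for the largest leaf size. If both alternatives failed, one would get $\frac{n_2+N-r}{2}<d(P)<\frac{n_2+p_{\max}-1}{2}$, contradicting $N\ge p_{\max}+r-1$.

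That lemma also needs every leaf to be homogeneous and the centre to be covered by the leaf neighbourhoods. The paper secures this by taking a well partition of $G[P]$ with the maximum number of bags. Your claim that $Q_i$ is completely joined to $N_i$ silently assumes this refinement. So does your edge-case remark: two bags joined along $X\times Y$ with both $X$ and $Y$ proper subsets give diameter $3$, so the part is neither a clique nor a generalized star.

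The reduction from a general $T_P$ to a star is not carried out---you concede it is the main obstacle---and the rule you suggest is wrong. (Note also that every star with two leaves already contains a path of three bags; the relevant case is tree diameter at least $3$.) Your elementary lemma should read: splitting $P$ into $P_1,P_2$ with $e$ crossing edges loses nothing iff $|E(P_1)|\,|P_2|^2+|E(P_2)|\,|P_1|^2\ge e\,|P_1|\,|P_2|$.

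Cutting at a leaf bag can violate this. Join two copies of $K_4$, on $\{a\}\cup M_1$ and $\{z\}\cup M_2$, by an induced path $p_1,\dots,p_{10}$ from a vertex of $M_1$ to a vertex of $M_2$. Use the bags $\{a\},M_1,\{p_1\},\dots,\{p_{10}\},M_2,\{z\}$, which form a well partition with the maximum number of bags. Then $d(P)=23/18$, but cutting off either leaf bag leaves $20/17$.

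The paper's Lemma~\ref{lemma: well-partitioned diameter 2} needs a genuine case analysis:
\begin{itemize}
\item A tree edge at which neither side is fully joined can always be cut; the criterion becomes a sum of squares.
\item Otherwise, leaf bags are fully joined by maximality of the bag partition.
\item A leaf whose neighbouring bag is joined on less than half can be split off; for a singleton leaf this again requires two candidate cuts.
\item Otherwise, one walks along a diametral path to the first edge where the joined portion of the far bag drops below half, and cuts there (in the example, the edge $p_{10}M_2$), or cuts near the far end.
\end{itemize}
Your extremal choice of $\mathcal P$ is a fine way to iterate such local steps. Without the clique-cover alternative for stars and this case analysis for long trees, however, the argument does not go through.
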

Observe that both generalized stars and cliques have diameter at most 2. Thus, at least on well partitioned chordal graphs, we can resolve the open question, if there always exists an optimal MDGP solution into parts of diameter at most 2. Interestingly, one of the steps the proof for Theorem~\ref{thm::genstars} is split into, is exactly proving that each part in an optimum solution for a well partitioned chordal graph has diameter at most 2. 
 
Theorem ~\ref{thm::genstars} in particular gives us a structural insight into solutions for any subclass of well partitioned chordal graphs.
Not allowing two minimal separators to share a vertex, unless they are identical, turns well partitioned chordal graphs into the smaller graph class of \emph{thick forests}. 
Restricted to thick forests, we show that generalized stars have an  even more specific structure which allows us to formulate the density of a thick tree based on the sizes of its leaves and its center.   We argue that because the set of vertices of each pair of minimal separators of a thick tree are either equal or disjoint we are able to derive a representation of a thick tree as a block-cut tree. Via dynamic programming on this block-cut tree, we show the following.

\begin{restatable*}{theorem}{thicktrees}  
\label{thm::thicktrees}
 $\text{MDGP}$ is solvable in polynomial time on thick trees, and hence on thick forests.   
 \end{restatable*}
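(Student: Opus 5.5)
We derive Theorem~\ref{thm::thicktrees} from Theorem~\ref{thm::genstars}, using that thick trees are well partitioned chordal graphs. Recall that a thick tree is obtained from a tree $T$ by replacing each node $x$ with a clique $B_x$ and each edge $xy$ with a complete bipartite join between some $S_{xy}\subseteq B_x$ and $S_{yx}\subseteq B_y$. Moreover, two minimal separators are equal or disjoint, so in each bag the sets $S_{xy}$ over neighbours $y$ of $x$ are pairwise disjoint. By Theorem~\ref{thm::genstars}, some optimal partition has every part inducing a clique or a generalized star. In a thick tree a clique lies inside one bag or inside $S_{xy}\cup S_{yx}$ for one tree edge. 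A generalized star has a centre clique $C$ in one bag $B_x$, and its leaf cliques $L_1,\dots,L_r$ lie in distinct neighbouring bags $B_{y_1},\dots,B_{y_r}$. Each $L_i\subseteq S_{y_ix}$ is fully joined to $C\cap S_{xy_i}$, and since the separators are disjoint, the sets $C\cap S_{xy_i}$ are pairwise disjoint.

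The density of a generalized star is determined by $c=|C|$, the sizes $\ell_i=|L_i|$ and $a_i=|C\cap S_{xy_i}|$:
\[
\frac{\binom{c}{2}+\sum_i\bigl(\binom{\ell_i}{2}+\ell_i a_i\bigr)}{c+\sum_i \ell_i}.
\]
An exchange argument should show that the leaf condition ($a_i\ge c/2$) and optimality together allow us to take $C\cap S_{xy_i}=S_{xy_i}$ whenever $L_i$ is nonempty. Then a part is described by numbers only: its centre bag, how many vertices it takes from the private part of that bag and from each separator, and how many vertices it takes from each neighbouring separator. Vertices inside a bag with the same neighbourhood type are twins, so only counts matter.

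The algorithm is a dynamic program over the block-cut-like tree $T$ rooted anywhere. For a node $x$ and an edge $xy$ to its parent, we store the best density of the subtree rooted at $x$ as a function of two interface quantities. The first is the number $q\le |S_{xy}|$ of vertices of $S_{xy}$ left to a part centred in $B_y$. The second is the number of vertices of $S_{yx}$ absorbed as a leaf of a part centred at $x$; in that case we also record the centre size $c$, together with an indicator for a clique spanning the edge $xy$. All these values are bounded by $n$, so there are polynomially many states.

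At $x$ we must decide the parts centred in $B_x$. Within one bag there can be several parts, but by twin exchange we expect at most one non-clique part per bag, plus cliques. Combining the children needs a knapsack-style inner DP over the children of $x$, indexed by the number of centre vertices already used. Here lies the main obstacle: the density is a ratio, not additive, so the centre's contribution cannot be split child by child. I would handle this by fixing, in an outer loop, the centre size $c$ and the total leaf count $\Lambda=\sum_i \ell_i$; the denominator is then fixed at $c+\Lambda$. The numerator is additive over children, so for each $(c,\Lambda)$ the inner DP maximizes it, with a knapsack over children in the third index $\Lambda$. Each choice costs $O(n^2)$ outer guesses and $O(n^3)$ inner work per node, so the whole algorithm runs in polynomial time. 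For thick forests, we apply the algorithm to each component and add the results.
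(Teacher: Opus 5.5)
Your high-level plan matches the paper: reduce to cliques and generalized stars via Theorem~\ref{thm::genstars}, run a DP over a tree structure, and treat components separately. Your handling of the ratio objective is also sound. Fixing the centre size $c$ and total leaf size $\Lambda$ makes the star's numerator additive, so a knapsack over the children can maximise it; this is a legitimate alternative to the paper's indices $(l,L,Q)$ in $\Phi(l,L,Q)$.

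The gap is the structural reduction that is meant to make every part describable by counts in your bag representation. It is partly false and partly only asserted.
\begin{itemize}
\item Thickness says that minimal separators are equal or disjoint. It does not say that the sets $S_{xy}$ attaching a bag to its neighbours are pairwise disjoint. In the tree $K_{1,3}$, whose optimum is the whole star, the leaves attach through the same set (the centre vertex) in every representation.
\item Worse, if the sets $C\cap S_{xy_i}$ were pairwise disjoint, the leaf condition $a_i\ge\lfloor c/2\rfloor+1>c/2$ would permit at most one leaf. So your model cannot express the multi-leaf generalized stars that Theorem~\ref{thm::genstars} hands you. Your inner knapsack ``indexed by the number of centre vertices already used'' presupposes exactly this disjoint picture.
\item ``At most one non-clique part per bag'' is false. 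Take a triangle $s_1s_2s_3$ with $m\ge 2$ pendant vertices at each $s_i$. Its unique optimum is the three stars $K_{1,m}$ centred at the $s_i$. Yet any bag representation must put two of the $s_i$ into a common bag, since three pairwise adjacent bags would form a cycle in the tree.
\item The normalisation $S_{xy_i}\subseteq C$ is asserted without proof.
\end{itemize}

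The missing idea is Lemma~\ref{lemma: center on separator}. The centre of a generalized star with at least two leaves lies inside one minimal separator. Its leaves lie in distinct maximal cliques containing that separator, so they are joined to the entire centre. Hence the density is $\Phi(l,L,Q)$ and depends only on counts.

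This forces the DP to be organised around minimal separators, i.e.\ the cut nodes of the block-cut tree, rather than around bags. At a cut node one tracks how its vertices are split among the parent, the centre, the child blocks and a leftover clique (the paper's $t,l,s_j,r$). At a block node one needs Lemma~\ref{lemma: at most one}: each block closes at most one star imported from below, which keeps the states polynomial. Your guess-the-denominator trick could replace the $(L,Q)$ indices inside that DP. With the bag-based states you propose, however, optimal partitions such as the two examples above are not captured.
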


This generalizes the result of Hanaka et al.~\cite{DBLP:journals/jco/HanakaIO25} who provided a polynomial time algorithm for MDGP on block graphs.
Note that this result cannot be extended to $\text{MDGP}_k$ since it is based on the structural insight given by Theorem~\ref{thm::genstars}. With the simple example of any well partitioned chordal graph that is not itself a generalized star or a clique, and the choice of $k=1$, one can see that Theorem~\ref{thm::genstars} does not hold for $\text{MDGP}_k$.

To argue why the extra restriction to thick trees might be needed, we give a complementing hardness result on split graphs, a subclass of well partitioned chordal graphs. In particular, we show the following.

\begin{restatable*}{theorem}{splitnph}  
\label{thm::splitnph}
 $\text{MDGP}_k$ restricted to split graphs is $\mathsf{NP}$-hard.    
 \end{restatable*}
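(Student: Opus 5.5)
We reduce from a strongly $\mathsf{NP}$-hard partition problem with fixed part structure; the natural choice is \textsc{Exact Cover by 3-Sets} (X3C), or equivalently 3-\textsc{Partition}. Given a universe $U=\{u_1,\dots,u_{3q}\}$ and a family $\mathcal S=\{S_1,\dots,S_m\}$ of 3-subsets, we build a split graph $G$ as follows. The clique side $K$ has one vertex $s_j$ for each set $S_j$. The independent side $I$ has one vertex $u_i$ for each element, with $u_i$ adjacent to $s_j$ exactly when $u_i\in S_j$. We pad each set-vertex with private pendant independent vertices if needed, and choose $k=q+1$: there are $q$ parts intended to be ``stars'' $\{s_j\}\cup S_j$, plus one part collecting all remaining clique vertices. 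The intended solution is then: for each set in an exact cover, the part $\{s_j\}\cup S_j$ has $3$ edges on $4$ vertices, contributing density $3/4$; the rest of $K$ forms a clique of $m-q$ vertices, contributing density $(m-q-1)/2$. This gives a target value $D^\ast = 3q/4+(m-q-1)/2$.

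For the reverse direction we must show that any partition into $k$ parts with density at least $D^\ast$ encodes an exact cover. The key steps are as follows.

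First, we argue that splitting the clique $K$ among several parts is costly. A clique of size $t$ has density $(t-1)/2$, which is superadditive in the sense that $\frac{a-1}{2}+\frac{b-1}{2}<\frac{a+b-1}{2}$, so an optimum solution keeps as many clique vertices together as possible.

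Second, we bound the contribution of a part containing $r$ clique vertices and some independent vertices. Since independent vertices have degree at most $3$ into $K$ (or are pendant), adding independent vertices to the big clique part lowers its density: a vertex with $d\le 3$ neighbours joined to a clique of size $t\ge 8$ reduces the average. Every independent vertex must lie in some part, though, so they must be absorbed by the $q$ small parts. With $3q$ element vertices and only $q$ small parts, each containing few clique vertices, a counting argument shows that each small part has exactly one $s_j$ and its three neighbours.

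Third, we balance the numbers, which we expect to be the main obstacle. We need to rule out mixed configurations, for instance a small part with two set-vertices and five elements, or the big clique absorbing a few elements. The loss from such deviations must outweigh any gain. We would handle this by scaling: we replace each element vertex $u_i$ by a large independent gadget, or blow up $K$ by adding $M$ dummy clique vertices adjacent to nothing in $I$. This makes the big part's density dominate, so that any deviation moving a clique vertex or an element changes the objective by a computable amount of order $1/M$ versus order $1$. We then check that the exact optimum equals $D^\ast$ if and only if an exact cover exists. Finally, we verify that $G$ is split, with $K$ a clique and $I$ independent, and that the construction is polynomial, which is immediate.
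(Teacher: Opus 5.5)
There is a genuine gap. The reverse direction is not merely unproven; it is false for your construction. Suppose one part $B$ contains only clique vertices and $P_1,\dots,P_q$ are the other parts. Then the total density is exactly
\[
d(\mathcal P)=\frac{|K|-1}{2}+\sum_{i=1}^{q}\Bigl(d(P_i)-\frac{|P_i\cap K|}{2}\Bigr).
\]
Your target $D^\ast=\frac{m-1}{2}+\frac q4$ therefore asks the small parts for a total net gain of $q/4$, i.e.\ $\frac34-\frac12=\frac14$ per star $\{s_j\}\cup S_j$.

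Other part shapes do better than a star. Take three set-vertices whose sets share two elements $u,u'$, together with $u,u'$. This part induces $K_5$ minus an edge, with density $\frac95$ and net gain $\frac95-\frac32=\frac{3}{10}>\frac14$. Moreover, with $q+1$ parts nothing stops one part from being an independent set that absorbs all uncovered elements at net gain $0$. Five such $K_5$-minus-an-edge parts pay for one garbage part, and no exact cover is needed.

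Concretely, take $U=\{a_i,b_i: i\in[6]\}\cup\{c_1,\dots,c_9\}$, $q=7$, and the sets $\{a_i,b_i,c_j\}$ for $i\in[6]$, $j\in[3]$, together with $\{c_4,c_5,c_6\}$ and $\{c_7,c_8,c_9\}$, so $m=20$. There is no exact cover: the six sets covering $a_1,\dots,a_6$ would need six distinct elements among $c_1,c_2,c_3$. Now partition $G$ into eight parts: for each $i\in[6]$, the part $\{a_i,b_i\}$ plus its three set-vertices; the part $\{c_1,\dots,c_9\}$; and the two remaining set-vertices. Its density is $6\cdot\frac95+0+\frac12=\frac{113}{10}>\frac{45}{4}=D^\ast$.

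The same phenomenon survives the restriction that every element lies in at most three sets, since it only needs two elements lying in the same three sets. (Your claim that independent vertices have degree at most $3$ already requires that restricted variant; it is false for general X3C.) So the step ``a counting argument shows that each small part has exactly one $s_j$ and its three neighbours'' is exactly where the argument breaks.

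The remedies in your third step do not touch this mechanism. Adding $M$ dummy clique vertices shifts both $d(\mathcal P)$ and $D^\ast$ by $M/2$. It leaves the identity above, and hence the counterexample, intact. Blowing each element up into many twins makes multi-centre parts that share elements even more profitable than stars: their net gain tends to $\frac32$, versus $\frac12$ for a star.

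What is missing is a gadget that forces the shape of every part, not just one that rewards keeping the clique together. The paper does this by reducing from \textsc{Dominating Set} and adding huge filler sets $W_i,Y_i,X_i,Z_i,Y'$. The intended solution is then a proportional partition into complete split graphs with value $d_l(|K|,\alpha)+n\,d_c(\alpha)$. Using the $d_l/d_c$ decomposition of split-graph density, the paper shows that any partition reaching this value has fewer than $n^9$ missing edges per part and puts each filler block in its own part. It is also nearly proportional, which forces the clique copies $V^K$ to be spread so that a dominating set can be read off.
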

 Hence, one could assume that the amount of vertices shared by a pair of minimal separators of a graph essentially encodes the difficulty of $\text{MDGP}_k$.
 
However, it could also be that the restriction on the number of parts is responsible for the hardness, as we were not able to adjust the reduction to also work for MDGP. The proof of Theorem~\ref{thm::splitnph}  is a reduction from the \textsc{Dominating Set} problem such that a yes-instance is translated to a split graphs that can be partitioned into $k$ complete split graphs that are \emph{balanced}. In this context, we say that a partition $P_1,\dots, P_k$ of the vertices of a split graph $G=(I\cup K,E)$ is balanced, if $\frac{|P_i\cap I|}{|P_i\cap K|}=\frac{|I|}{|K|}$ for each $i$; this is somewhat the split graph version of the optimal partition characterization for complete bipartite graphs (Corollary 5 in~\cite{DBLP:journals/jcss/BazganCC25}). 
It is however not true that such balanced partitions of complete split graphs give the best possible density. Thus we were, at least so far, not able to prove that any partition with the density corresponding to a yes-instance must have at most $k$ parts.

Reducing to the problem version $\text{MDGP}_k$ forces this restriction on the number of parts to help give structure in the reduction. Showing that any partition of target density translates to a dominating set is however still quite complicated, as we cannot assume balanced solutions. This most involved part of the proof of Theorem~\ref{thm::splitnph} argues about loss of density for parts that are too unbalanced or have too many edges missing from the part being a complete split graph.

At last, we consider parameterization by treewidth, and derive a generic algorithm for connected partitions with the following theorem.

\begin{restatable*}{theorem}{conpart}  
\label{thm::conpart}
Given a graph $G=(V,E)$ with a tree decomposition of width $w$ and $k$ integers $n_1,\ldots, n_k$ with $\sum_{i=1}^kn_i=|V|$, a connected partition $P_1,\dots, P_k$ of $G$ with $|P_i|=n_i$ for each $i$ can be computed in 
 $\mathcal O(nw^2k^{2(w+1)}(n+1)^{2k}4^{\binom{w+1}{2}})$.
 \end{restatable*}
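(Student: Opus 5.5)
We prove Theorem~\ref{thm::conpart} by dynamic programming over a nice tree decomposition of width $w$ with $\mathcal O(nw)$ nodes, with introduce-vertex, introduce-edge, forget and join nodes. Every vertex is forgotten exactly once, below the root, and for the bag $X_t$ of a node $t$ we let $G_t$ be the graph formed by the vertices and edges introduced in the subtree of $t$.

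A state at $t$ records how a partial solution restricted to $G_t$ looks from the bag. It consists of:
\begin{itemize}
\item a \emph{labelling} $\lambda: X_t\to\{1,\dots,k\}$, giving the part each bag vertex belongs to ($k^{w+1}$ choices);
\item a \emph{connectivity relation} on $X_t$, namely the set of pairs of equally labelled bag vertices that already lie in the same component of $G_t[P_i]$ (at most $2^{\binom{w+1}{2}}$ choices);
\item a \emph{size vector} $(s_1,\dots,s_k)$ with $0\le s_i\le n_i$, counting the vertices of each part placed so far ($(n+1)^k$ choices);
\item a \emph{closed-part set}, recording which parts have components consisting only of forgotten vertices.
\end{itemize}
The table entry is a boolean stating whether some partial partition realizes the state under the following invariant. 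Every component of $G_t[P_i]$ meets $X_t$, except when part $i$ is closed; in that case $P_i$ has no bag vertex and is a single already completed component with $s_i=n_i$. The closed information can be folded into the size vector together with a flag, which stays within the stated bound. The extra factors $k^{w+1}$, $(n+1)^k$ and $2^{\binom{w+1}{2}}$ in the running time come from combining pairs of states at join nodes.

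The transitions are as follows.
\begin{itemize}
\item \textbf{Introduce vertex $v$.} Choose a label $i$, increment $s_i$, and make $v$ a singleton class. This is forbidden if part $i$ is closed.
\item \textbf{Introduce edge $uv$.} If $\lambda(u)=\lambda(v)$, merge their classes in the connectivity relation.
\item \textbf{Forget $v$.} If $v$ is connected to another bag vertex with the same label, simply delete $v$. If $v$ forms a singleton class and other bag vertices carry its label $i$, reject, since that component could never reconnect. If $v$ is the last bag vertex of part $i$, mark part $i$ as closed and require $s_i=n_i$.
\item \textbf{Join.} The two children must agree on the labelling. The connectivity relation becomes the transitive closure of the union of the two relations; computing it costs $\mathcal O(w^2)$. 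The sizes add, minus the doubly counted bag vertices. The closed sets must be disjoint and must not conflict with labels in the bag.
\end{itemize}
At the root, which has an empty bag, we accept if all parts are closed with $s_i=n_i$. Backpointers recover the partition.

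Correctness follows by induction over the decomposition, with both directions of the invariant. Soundness holds because merges only happen along edges. Completeness holds because restricting any valid connected partition to $G_t$ yields exactly a tracked state. Each part is connected because a part is only closed when it consists of a single component, and no part can later gain vertices after being closed. Nonempty parts are guaranteed since $n_i\ge 1$. If some $n_i=0$ were allowed, that part would trivially be marked closed from the start.

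The main obstacle is the join node. Here one must argue that merging the two connectivity relations correctly captures connectivity in $G_t$. This relies on the fact that the vertex sets of the two child subgraphs intersect only in $X_t$, so any path between bag vertices decomposes into segments that each lie in one child and meet at bag vertices. One must also make sure closed parts are not double counted across the two children. The running-time bound then follows by multiplying the number of state pairs at join nodes by the $\mathcal O(w^2)$ merge cost and by the $\mathcal O(n)$ nodes, which yields the claimed $\mathcal O(nw^2k^{2(w+1)}(n+1)^{2k}4^{\binom{w+1}{2}})$.
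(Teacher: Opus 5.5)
Your proposal is correct and takes essentially the same approach as the paper: a dynamic program over a nice tree decomposition that tracks the labelling of bag vertices by parts, per-part size counters and connectivity among bag vertices. Your explicit connectivity relation and closed flags are a cleaner encoding of the paper's auxiliary edge set (completed to cliques at forget nodes) and its feasibility checks at introduce and join nodes. The only slip is the final count: the nice decomposition has $\mathcal O(nw)$ nodes but only $\mathcal O(n)$ join nodes, and only join nodes incur the squared state count, so the stated bound does follow (the closed flag adds nothing, since part $i$ is closed exactly when $s_i>0$ and no bag vertex carries label $i$).
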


The algorithm used for Theorem~\ref{thm::conpart} is the expected technical dynamic programming algorithm on tree decompositions, that tracks all information needed about forgotten edges.  Incorporating forbidden partial solutions where given terminals are misplaced directly gives a constructive algorithm for the  Gy\H{o}ri-Lov\'{a}sz theorem, generalizing the result of Enciso et al.~\cite{equi} for the special case problem ECP.

\begin{restatable*}{corollary}{glpart}  
\label{corr::glpart}
Given a graph $G=(V,E)$ with a tree decomposition of width $w$, $k$ integers $n_1,\ldots, n_k$ with $\sum_{i=1}^kn_i=|V|$ and $k$ terminals $v_1,\dots,v_k\in V$, a connected partition $P_1,\dots, P_k$ of $G$ with $|P_i|=n_i$ and $v_i\in P_i$ for each $i$ can be computed in 
$\mathcal O(nw^2k^{2(w+1)}(n+1)^{2k}4^{\binom{w+1}{2}})$.
 \end{restatable*}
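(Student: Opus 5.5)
Corollary~\ref{corr::glpart} follows by reusing the dynamic program behind Theorem~\ref{thm::conpart} with one extra filtering rule. I recall the structure of that algorithm as follows.

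Work on a nice tree decomposition of width $w$. For each node $t$ with bag $X_t$, a state records three things:
\begin{itemize}
\item a labelling $\lambda\colon X_t\to\{1,\dots,k\}$ saying which part each bag vertex belongs to (at most $k^{w+1}$ choices);
\item the size vector $(s_1,\dots,s_k)\in\{0,\dots,n\}^k$ of the parts restricted to the vertices introduced so far;
\item connectivity information for the partial solution, namely which pairs of bag vertices with the same label are already connected inside the processed subgraph (accounting for the $4^{\binom{w+1}{2}}$ factor), together with which labels already contain a forgotten component that can no longer grow (``closed'' parts).
\end{itemize}
A state is valid if some partition of the processed vertices realises it, such that every component of a part that is not closed still meets the bag. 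The correctness argument is the one for Theorem~\ref{thm::conpart}: introduce, forget, and join transitions preserve exactly the realisable states. A state is rejected as soon as a component of some part is forgotten while that part has other vertices elsewhere, since the part could then no longer become connected.

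To force the terminals, I change only the introduce step. When $v_i$ is introduced, only labellings with $\lambda(v_i)=i$ are allowed. Equivalently, every state in which a terminal carries the wrong label is discarded. Because each vertex is introduced exactly once along every root-to-leaf path, and labels of bag vertices are carried consistently through forget and join nodes, the constraint holds in every partial solution that survives to the root. Conversely, any valid partition with $v_i\in P_i$ induces, at every node, a state that passes the filter, so it is still found.

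At the root (empty bag) we accept exactly when the size vector equals $(n_1,\dots,n_k)$ and every part is connected. A solution is then reconstructed by standard backtracking. The extra check costs $\mathcal O(1)$ per state, and the state space is a subset of the original one, so the running time bound of Theorem~\ref{thm::conpart} carries over unchanged.

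The only point needing care is that labels in the dynamic program are \emph{not} interchangeable once terminals are fixed: part $i$ must have size $n_i$ and contain $v_i$. In Theorem~\ref{thm::conpart} the parts are already indexed, with sizes tracked per index rather than up to permutation, so this causes no difficulty. If that algorithm had instead used symmetry reduction over labels, I would drop it here, and the bound above is not affected.
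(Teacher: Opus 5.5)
Your proposal is correct and matches the paper's own argument: the paper proves the corollary by adding to the dynamic program of Theorem~\ref{thm::conpart} a rule that rejects any state in which a terminal is assigned to the wrong part, which leaves the state space and the running time bound unchanged. One small point: in the paper's nice tree decomposition the leaf bags have size one, so a terminal can first receive its label at a leaf node. The filter should therefore be applied in your ``discard every state with a wrongly labelled terminal'' form, not only at introduce nodes.
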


Further, the information tracked about edges, can also be used to calculate and optimize for density (instead of tracking sizes and connectivity). Thus this generic algorithm can also be used for MDGP$_k$. Further, any optimum solution for MDGP only contains parts that induce connected subgraphs, thus it is also straightforward to adjust the generic algorithm to MDGP - at any point, there are at most treewidth many unfinished clusters to be tracked. These simple observations directly yield the following, improving on the results of Hanaka et al.~\cite{DBLP:journals/jco/HanakaIO25}.

\begin{restatable*}{corollary}{mdgpkk}  
\label{corr::mdgpkk}
Given a graph $G=(V,E)$ with a tree decomposition of width $w$, MDGP$_k$ can be solved in  
$\mathcal O(nw^2k^{2(w+1)}(n+1)^{2k}4^{\binom{w+1}{2}})$, and MDGP can be solved in $\mathcal O(nw^{2(w+1)}(n+1)^{2w}4^{\binom{w+1}{2}})$
 \end{restatable*}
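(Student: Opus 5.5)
The plan is to reuse the dynamic program behind Theorem~\ref{thm::conpart} unchanged in its state space, and only swap the size bookkeeping for density bookkeeping. We fix a nice tree decomposition with $\mathcal O(nw)$ nodes. A table entry at node $t$ with bag $X_t$ stores four things. First, an assignment $\phi: X_t\to\{1,\dots,k\}$ of bag vertices to parts; this gives the factor $k^{w+1}$, and the same factor again at join nodes, hence $k^{2(w+1)}$. Second, the connectivity information among bag vertices of equal label, encoded by the $4^{\binom{w+1}{2}}$ factor as in Theorem~\ref{thm::conpart}. Third, the vector of part sizes $(s_1,\dots,s_k)\in\{0,\dots,n\}^k$ restricted to the vertices introduced so far. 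Fourth, in place of a feasibility bit, the vector of edge counts $(e_1,\dots,e_k)$, where $e_i$ counts edges already seen inside part $i$.

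Edge counts are bounded by $n^2$, which would inflate the table. So instead of indexing by $(e_i)$, we keep for each state (labels, connectivity, sizes) the whole partial solution. Alternatively, we observe that once the sizes $s_i$ of all parts are fixed, density $\sum_i e_i/s_i$ is not a separable sum to be maximised in general. Concretely, the DP over sizes yields, for every final size vector $(n_1,\dots,n_k)$, the maximum weighted edge total for that vector. Because every $s_i$ is fixed in the state, the contribution of a newly introduced edge $uv$ with $\phi(u)=\phi(v)=i$ is determined only at the end, when $s_i=n_i$. To handle this, we guess the final size vector $(n_1,\dots,n_k)$ up front and assign each edge inside part $i$ the weight $1/n_i$. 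The table then stores the maximum accumulated weight, and at the root we check that the sizes match the guess.

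The $(n+1)^{2k}$ factor already present in Theorem~\ref{thm::conpart} absorbs this. One $(n+1)^k$ factor is spent on the running sizes, and the other on the guessed targets, replacing the combination step at join nodes, which only needs the sum of child sizes bounded by target. The transitions work as follows:
\begin{itemize}
\item At introduce-edge (or introduce-vertex) steps, we add $1/n_i$ for each new intra-part edge.
\item At join nodes, we add the child values, taking care that bag-internal edges are counted once, which is ensured by the usual introduce-edge convention.
\item At forget nodes, we enforce that a forgotten vertex's part is still connected to the bag or is complete, exactly as in Theorem~\ref{thm::conpart}.
\end{itemize}
Correctness follows since the objective is additive over edges once targets are fixed.

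For MDGP, $k$ is free, but any optimum has connected parts; positive density needs an edge, and merging isolated vertices never helps. At any node, at most $|X_t|\le w+1$ parts are "open" (intersect the bag). Closed parts contribute a finished value $e_i/|P_i|$ and need no further tracking. Hence labels range over $w+1$ open slots instead of $k$, giving $w^{2(w+1)}$. We track sizes and guessed targets only for open parts, giving $(n+1)^{2w}$. When a part closes at a forget node, its target is verified and the slot freed.

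The main obstacle I expect is this slot recycling and the correctness of the guessed-target trick under joins: two children may each close different parts under the same slot name. I would address this by canonically relabeling open slots by their first bag vertex, so join compatibility is just equality of the labeling restricted to the bag.
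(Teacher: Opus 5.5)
\textbf{Gap 1: connectivity is wrongly enforced for MDGP$_k$.} Your MDGP$_k$ algorithm keeps the connectivity bookkeeping of Theorem~\ref{thm::conpart}: the edge set $E$ and the forget-node rule that a part must stay attached to the bag or be complete. So it only optimizes over $k$-partitions with connected parts, and that restriction is not valid for MDGP$_k$. With the number of parts fixed, you cannot split a disconnected part, so optimal solutions can contain disconnected parts.

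A concrete counterexample: take a $K_4$ on $h_1,\dots,h_4$ and attach pendant edges $a_1a_2$ and $b_1b_2$ via the edges $a_1h_1$ and $b_1h_2$, and set $k=2$.
\begin{itemize}
\item The partition $\{\{a_1,a_2,b_1,b_2\},\{h_1,\dots,h_4\}\}$ has density $\tfrac12+\tfrac32=2$.
\item A short case analysis shows every partition into two connected parts has density at most $\tfrac12+\tfrac43=\tfrac{11}{6}$. This is attained by $\{\{a_1,a_2\},V\setminus\{a_1,a_2\}\}$.
\item If $G$ has more than $k$ components, your table has no accepting state at all.
\end{itemize}
This is why the paper, for MDGP$_k$, tracks forgotten counts \emph{instead of} sizes and connectivity. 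It invokes connectivity of optimal parts only for MDGP, where it is what bounds the number of active parts by the bag size. For MDGP$_k$ you should simply drop the connectivity component, and with it the $4^{\binom{w+1}{2}}$ factor. Your MDGP part (open slots, closing at forget nodes) matches the paper's idea. Your slot-recycling worry is harmless: parts closed inside a subtree are already paid for, and open parts are shared through the identical join bag.

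\textbf{Gap 2: the running-time accounting spends the $(n+1)^{2k}$ factor twice.} In Theorem~\ref{thm::conpart} that factor is entirely consumed by join nodes. There, every size vector of one child must be combined with every size vector of the other. A guessed target $(n_1,\dots,n_k)$ bounds the running sizes but does not remove this pairing; it is an additional outer loop.

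Even counted carefully, the size-dependent cost of your method is $\sum_{n_1+\dots+n_k=n}\prod_i (n_i+1)^2$. This is of order $n^{3k-1}$ for fixed $k$, and hence not $\mathcal O((n+1)^{2k})$ once $k\ge 2$. Likewise for MDGP: the children at a join must agree on the targets of the open slots but still pair their running sizes. That gives roughly $(n+1)^{3(w+1)}$ rather than $(n+1)^{2w}$.

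The linearization itself is sound: once final sizes are fixed, each intra-part edge contributes weight $1/n_i$. It is a cleaner way to handle the nonlinear objective than the paper's per-part forgotten-edge counters, which range up to $\binom{n}{2}$ and do not visibly yield the stated bound either. But as written, your argument gives polynomial time for fixed $k$ (respectively $w$) with these larger exponents, not the bound claimed in the corollary.
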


The generalized algorithm we present is somewhat the generic way one would approach MDGP directly. The relationship to connected partitions given here, together with the W[1]-hardness of ECP for parameter pathwidth might be an indication that MDGP is not FPT parameterized by treewidth. On the other hand, the only thing missing from giving an FPT algorithm for MDGP with parameter treewidth is an efficient way to describe parts in an optimum solution that do not have a size bounded by some function in the treewidth. Parts that induce a star can unfortunately have such unbounded sizes. It appears however, that stars are also the only parts that can be too large. 

While we do not settle the open question about treewidth for MDGP with our algorithm, it still gives valuable insight: If MDGP is not FPT parameterized by treewidth, a reduction must use optimum solutions with large parts. If one can show that only stars are problematic together with an efficient way to track their density progress, our algorithm gives FPT time.

\section{Preliminaries}\label{sec: prelim}
 All the graphs we refer to in this paper are simple, unweighted and finite. We mainly follow standard graph theoretic notation unless stated otherwise. In particular we denote an undirected graph $G$ by $G=(V,E)$, where $V$ is its vertex set and $E$ its edge set. Moreover we use $|V|$ and $|E|$ to denote the number of vertices and edges of $G$ respectively, when the graph we refer to is clear from the context. If we need to specify the graph we refer to we use $V(G)$ and $E(G)$ instead of $V$ and $E$ respectively.
Given a graph $G=(V,E)$ we also denote the \emph{open neighborhood} of $V\subseteq V(G)$ by $N_G(V):=\cup_{v\in V}\{u\notin V:vu\in E\}$ and the \emph{closed neighborhood} by $N_G[V]:=V\cup N_G(V)$. Again we may omit the subscript when the graph is clear from the context. We say that a vertex $v\in V(G)$ \emph{dominates} $u\in V(G)$ if $u\in N[v]$.
Moreover, for $v\in V(G)$ we call $|N(v)|$ the \emph{degree} of $v$ and denote it by $\text{deg}_G(v)$. Furthermore $\max_{v\in V}\{\text{deg}(v)\}$ is called the \emph{maximum degree of $G$} and is denoted by $\Delta (G)$.

We say that a graph $G$ is \emph{connected} if there is a path connecting any pair of its vertices. Given a graph $G=(V, E)$ and a vertex set $S\subseteq V$, we say that $S$ is a \emph{separator} of $G$ if $G[V\setminus S]$ is not connected. We say that a connected graph $G$ is $k$-connected if for every $S\subseteq V$ with $|S|\leq k-1$ the graph $G\setminus S$ remains connected. Given a graph $G$ we denote the number of connected components of $G$ by $cc(G)$. Moreover, a graph $G$ has \emph{diameter} $d$, denoted by $\text{diam}(G)$ if the maximum shortest path between any pair of vertices in $G$ has length $d$.

A set of vertex sets $\mathcal P=\{P_1,\ldots, P_k\}$ such that $P_i \subseteq V$ for all $i\in [k]$ is a \emph{partition of size $k$} of $G$ if and only if $V(G)=\cup_{i=1}^k P_i$ and $P_i\cap P_j=\emptyset$ for all $i, j\in [k]$ such that $i\neq j$. We also denote by $G[S]$ for some subset $S\subseteq V(G)$ the \emph{graph induced by $S$}, that is, $G[S]:=(S, \{vu:v, u\in S\text{ and }vu\in E(G)\})$. Given a partition $\mathcal{P}$ of size $k$ of $G$, we say that $\mathcal P$ forms a \emph{connected partition of $G$} if $G[P_i]$ is connected for all $i\in [k]$.

The \emph{density} of a graph $G=(V,E)$ is denoted by $d(G)$ and is the ratio between the number of edges of a graph $G$ and the number of its vertices, meaning $d(G)=|E|/|V|$. Moreover, for any $S\subseteq V(G)$ with $S\neq \emptyset$, $d(S)=d(G[S])$. Given a partition $\mathcal P$ of a graph $G$ we define the \emph{density of the partition $\mathcal P$} as $d(\mathcal P)=\sum_{P\in\mathcal P}d(P)$.

 We say that a graph is \emph{chordal} if all its induced cycles are triangles. A subclass of chordal graphs are well partitioned chordal graphs. In particular, a connected graph $G$ is a \emph{well partitioned chordal graph} if there exist a partition $\mathcal{P}$ of $V(G)$ and a tree $\mathcal{T}'$ having $\mathcal{P}$ as a vertex set such that the following hold: 
 \begin{enumerate}
 \item each part $X\in\mathcal P$ is a clique in $G$, 
 \item for each edge $XY\in E(\mathcal T')$, there are subsets $X'\subseteq X$ and $Y'\subseteq Y$ such that $E(G[X,Y])=X'\times Y'$, and 
 \item for each pair of distinct $X, Y\in V(\mathcal T')$ with $XY\not\in E(\mathcal T')$, $E(G[X,Y])=\emptyset$.
 \end{enumerate}
Moreover, a graph $G$ is a \emph{thick tree} if it is chordal and every two minimal separators of $G$ are either equal or disjoint. It is easy to notice that thick trees are a subclass of well partitioned chordal graphs, and also that block graphs are a subclass of thick trees. We say that a graph $G$ is a \emph{block graph} if every biconnected component of it is a clique. We also say that a graph $G$ is \emph{bipartite} if its vertices can be partitioned into two sets each inducing an independent set. Lastly, we also work on split graphs, where $G$ is a \emph{split graph} if there is a partition of its vertices into two parts $K,I$, where $I$ induces an independent set and $K$ induces a clique. For both bipartite and split graphs we say that a graph is \emph{complete} if it has the maximum amount of edges while still being bipartite or split respectively.

Given a graph $G=(V,E)$, a tree decomposition of $G$ is a pair $\mathcal T=(T,\mathcal X)$, where $T=(I,F)$ is a tree and $\mathcal X=(X_i)_{i\in I}\subseteq \mathcal P(V)$ is a family of subsets $X_i$ of $V$ called \emph{bags}, such that the following properties hold.
    \begin{enumerate}
        \item $\cup_{i\in I}X_i=V$
        \item For all $uv\in E$, there exists an $i\in I$ with $\{u,v\}\subseteq X_i$
        \item For all $v\in V$, $X^{-1}:=\{i\in I|v\in X_i\}$ is connected in $T$.
    \end{enumerate}
    Moreover let $G$ be a graph and $\mathcal T=(T, X=(X_i)_{i\in I})$ a tree decomposition of $G$. The \emph{width of $\mathcal T$} is defined as $\max\{|X_i||i\in I\}-1$ and the \emph{treewidth of $G$} as $\text{tw}(G):=\min\{width(\mathcal T)\mid \mathcal T \text{ is a tree decomposition of } G\}$.
    Lastly, we define a \emph{nice tree decomposition} as a rooted binary tree with four different types of nodes:
		\begin{enumerate}
			\item Leaf nodes have no children and bag size $1$.
			\item Introduce nodes have one child. The child has the same vertices as the parent with one deleted.
			\item Forget nodes have one child. The child has the same vertices as the parent with one added.
			\item Join nodes have two children, both identical to the parent.
		\end{enumerate}
    We note that there is an algorithm that converts a tree decomposition into a nice one with $\mathcal{O}(nk)$ bags in time $\mathcal{O}(nk^2)$ (see, for example~\cite{CyganFKLMPPS15}).

\section{Dense Partition On Well Partitioned Chordal Graphs}\label{section: well-partitioned chordal}
In this section we analyze the structure of each part of an optimal dense partition on well partitioned chordal graphs. In particular, we prove that each part has diameter at most $2$ and induces either a clique, or a specific structure we call generalized star.
We say that a set of vertices induces a generalized star if both its center and its leaves are cliques of various sizes, and leaves are well connected to the center. Formally,
\begin{definition}
We call a graph $G$, a generalized star and denote it by $S_{l,l_1,\ldots, l_r}$ if there is a partition $\mathcal{P}=\{P_{0},P_{1},\dots,P_{r}\}$ with $r\geq 1$, of $G$ such that
\begin{itemize}
\item $G[P_{i}]$ induces a clique, for every $0\leq i\leq r$,
\item $|V(G[P_{0}])|=l$,
\item for every $i\in [r]$, $|V(G[P_{i}])|=l_{i}$ and $N_{G}(P_{i})\subseteq P_{0}$,
and
\item for every $i\in [r]$, every vertex $v\in P_{i}$ has at least $\lfloor\frac{l}{2}\rfloor+1$ neighbors in $P_{0}$.
\end{itemize}
We refer to the clique of size $l$ induced by $P_{0}$ as the center of the generalized star and to each clique of size $l_{i}$ induced by $P_{i}$ as a leaf of the generalized star.
Finally, for any such partition for which all but the last condition hold, we call $G$ an incomplete star.
\end{definition}
Notice that since each vertex of each leaf of a generalized star is adjacent to more than half of the vertices of the center clique, the diameter of a generalized star is $2$.

The rest of this section is dedicated to proving the following theorem.
\genstars
Towards proving Theorem~\ref{thm::genstars}, we first compare the density of keeping an incomplete leaf attached to the center of a generalized star or considering it as a separate part (or considering the leaf plus one central vertex as a separate part). We see that when the leaf vertices are adjacent to at most half of the center then separating it from the rest of the star does not decrease the overall density of the partition.

\begin{lemma}\label{lemma:star-decomposition}
Let $G$ be an incomplete star with partition
$\mathcal{P}=\{P_{0},P_{1},\dots,P_{r}\}$ such that for every $i\in[r]$ every
vertex in $P_{i}$ has at least one neighbor in $P_{0}$,
$\bigcup_{i\in[r]}N(P_{i})=P_{0}$, and for every $u,v\in P_{i}$,
$N(u)\cap P_{0}=N(v)\cap P_{0}$. Then $G$ can be partitioned into generalized
stars and cliques whose densities sum up to at least $d(G)$.
\end{lemma}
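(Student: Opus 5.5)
We argue by induction on the number $r'$ of \emph{bad} leaves, i.e.\ leaves $P_i$ whose (common) neighbourhood $N_i:=N(P_i)\cap P_0$ satisfies $|N_i|\le\lfloor l/2\rfloor$. If $r'=0$, then every vertex of every leaf has at least $\lfloor l/2\rfloor+1$ neighbours in $P_0$. Hence $G$ itself is a generalized star when $r\ge1$, and there is nothing to prove. Otherwise, we pick a bad leaf $P_i$ and compare $d(G)$ with the density of a split of $G$ into $G-P_i$ and one further part. There are two candidates for the split: $\{G-P_i,\;P_i\}$, or $\{G-P_i-v,\;P_i\cup\{v\}\}$ for a suitable $v\in N_i$. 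Both $G-P_i$ and $G-P_i-v$ are again incomplete stars, and the remaining hypotheses are easy to maintain. Leaves keep the property that all their vertices share a neighbourhood. If a leaf loses its whole neighbourhood when $v$ is removed, that leaf becomes a separate clique component, which we simply cut off; removing edges to it does not decrease density below the sum of the component densities. The condition $\bigcup N(P_j)=P_0$ may fail after removal, in which case we restrict to the new centre and treat any central vertices outside all neighbourhoods as fine, since they stay in the centre. The sets $P_i$ and $P_i\cup\{v\}$ are cliques. Applying induction to the remaining incomplete star then yields the partition.

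The key step is the inequality. Write $n=|V(G)|$, $m=|E(G)|$, $a=|P_i|$ and $b=|N_i|\le l/2$. The edges removed with $P_i$ number $\binom a2+ab$. The first option gives density $\frac{m-\binom a2-ab}{n-a}+\frac{a-1}{2}$, and we need this to be at least $m/n$. Rearranging, the condition becomes
\[
\frac{a-1}{2}+\frac{m}{n}\cdot\frac{a}{n-a}\ \ge\ \frac{\binom a2+ab}{n-a},
\]
that is, $(n-a)\frac{a-1}{2}+\frac{am}{n}\ge \binom a2 + ab$. Equivalently, we need $\frac{m}{n}\ge b-\frac{(n-2a)(a-1)}{2a}$. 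Informally, the leaf contributes its edges at a rate at most $b+\frac{a-1}{2}$ per vertex, and we must check that this is not above the average of the star.

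If $d(G)$ is small compared with $b$, the first option may fail. In that case the second option is attractive: $P_i\cup\{v\}$ is a clique of size $a+1$, which has density $a/2$. We must then bound the loss coming from removing $v$, whose degree is at most $l-1$ plus the leaves attached to it. The plan is to show that at least one of the two options works. We will use $b\le l/2$ together with the fact that the centre clique alone already has density $(l-1)/2$, and that $m/n$ is a weighted mediant of the per-part contributions.

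The main obstacle is this case analysis. We must show that whenever removing $P_i$ alone loses density, $d(G)$ is below $b+\frac{a-1}{2}$ with enough slack that moving one centre vertex into the leaf clique gains. Our first approach is to choose $v\in N_i$ of minimum degree in $G$, and to compare the degree average with $m/n$ through the inequality $\sum_{u\in P_0}\deg(u)\ge$ (the edge count). Should that fail, we fall back on removing all bad leaves simultaneously and comparing mediants via $\frac{\sum m_j}{\sum n_j}\le\max_j \frac{m_j}{n_j}$.
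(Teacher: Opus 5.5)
\textbf{There is a genuine gap.} It sits exactly at what you call the ``main obstacle'': you never prove that some bad leaf admits one of your two single-leaf moves without loss of density, and this claim is false.

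\emph{Counterexample.} Let the centre be a $K_4$ on $\{h_1,h_2,l_1,l_2\}$. Let $x_1,x_2,x_3$ be singleton leaves adjacent exactly to $h_1,h_2$; they are bad, since $2\le\lfloor 4/2\rfloor$. Let $y$ be a singleton leaf adjacent to $h_1,l_1,l_2$; it is good. All hypotheses of the lemma hold, and $n=8$, $m=15$, $d(G)=15/8$. For each bad leaf:
\begin{itemize}
\item deleting $x_i$ gives $13/7<15/8$;
\item taking the part $\{x_i,h_2\}$ gives $\frac12+\frac43=\frac{11}{6}<\frac{15}{8}$;
\item taking $\{x_i,h_1\}$ gives $\frac12+\frac76=\frac53$.
\end{itemize}
The remainders are connected, so splitting into components does not help, and all bad leaves are symmetric. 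Your fallback fails too: removing all bad leaves leaves $9$ edges on $5$ vertices, and $9/5<15/8$.

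Nevertheless, $\{x_1,h_1\},\{x_2,h_2\},\{x_3\},\{y,l_1,l_2\}$ has density $2$. The gain only appears when the centre is handed out to several leaves simultaneously. An induction that removes one leaf at a time and only uses the bound $d(\text{remainder})$ on the rest cannot capture this.

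\textbf{The missing idea is this global clique partition.} Since $\bigcup_i N(P_i)=P_0$, you can split $P_0$ into $Z_1,\dots,Z_r$ with $Z_i\subseteq N(P_i)$. Each $P_i\cup Z_i$ is then a clique, and the total density is exactly $(n-r)/2$. Play this against the partition into $\Sigma=G[P_0\cup\bigcup_{i\notin B}P_i]$ plus the weak leaves, where $B$ is the set of bad leaves. Suppose both fail. Then $d(G)>(n-r)/2$ and $d(\Sigma)<d(G)$, i.e.\ $d(G)<e_B/N_B$, where $e_B$ counts edges incident to the weak leaves and $N_B$ their vertices. Using $|N_i|\le |P_0|/2$ and $|P_i|\le p_{\max}$, one gets $e_B/N_B\le(|P_0|+p_{\max}-1)/2$. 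Writing $N=\sum_i|P_i|$ and $n=|P_0|+N$, this forces $N<r+p_{\max}-1$, contradicting $N\ge p_{\max}+r-1$. No induction is needed.

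Avoiding induction also sidesteps two bookkeeping problems in your scheme:
\begin{itemize}
\item The number of bad leaves need not decrease. Removing $v$ from a centre of odd size can make a good leaf bad.
\item Once the covering hypothesis fails, keeping uncovered vertices in the centre breaks your step. For the path $x-u-w$ with centre $\{u,w\}$ and leaf $\{x\}$, both moves give $\frac12<\frac23$. This path is a generalized star only after recentring at $u$.
\end{itemize}
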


\begin{proof}
Write $n_{2}=|P_{0}|$, $\ell_{i}=|N(P_{i})|$ for $i\in[r]$, $n=|V(G)|$ and
$m=|E(G)|$, so that $d(G)=m/n$. Let also $N=\sum_{i\in[r]}|P_{i}|$ and note that
$n=n_{2}+N$. We call a leaf $P_{i}$ \emph{weak} if
$\ell_{i}\leq\lfloor n_{2}/2\rfloor$ and let $B=\{i\in[r]:P_{i}\text{ is weak}\}$.
We describe two partitions and show that at least one of them is as desired.

\medskip
\noindent\emph{Partition into cliques.}
Let $(Z_{1},Z_{2},\dots,Z_{r})$ be a partition of $P_{0}$ into (possibly empty sets) such that 
$Z_{i}\subseteq N(P_{i})$.
Indeed, since $\bigcup_{i\in[r]}N(P_{i})=P_{0}$, we may assign every vertex of $P_{0}$
to a leaf in whose neighborhood it lies. Every vertex of $P_{i}$ is adjacent to every
vertex of $Z_{i}\subseteq N(P_{i})$, and both $P_{i}$ and $Z_{i}$ are cliques,
so each $P_{i}\cup Z_{i}$ is a clique. 
Moreover, the sets
$P_{1}\cup Z_{1},\dots,P_{r}\cup Z_{r}$ form a partition of $V(G)$. 
Since a clique on $s$ vertices has density $(s-1)/2$, their densities sum to
\[
\sum_{i\in[r]}\frac{|P_{i}|+|Z_{i}|-1}{2}=\frac{n-r}{2}.
\]
Thus this partitioning into cliques is as desired whenever $\frac{n-r}{2}\geq\frac{m}{n}$, that is,
whenever
\begin{equation}\label{eq:Awins}
n(n-r)\geq 2m.
\end{equation}

\medskip
\noindent\emph{Partition into a star and cliques.}
For $i\notin B$, it holds that $\ell_{i}\geq\lfloor n_{2}/2\rfloor+1$, so
$\Sigma:=G\!\left[P_{0}\cup\bigcup_{i\notin B}P_{i}\right]$ is a generalized star.
In particular, it is a clique if $B=[r]$, that is, if every leaf is weak. 
We partition $G$ into that following parts.
$\Sigma$ is one part and each weak leaf $P_{i}$,
$i\in B$, is kept as a separate part which induces a clique.
This yields a partition of $G$ into a generalized star and cliques.
The total density of these parts is $d(\Sigma)+\sum_{i\in B}\frac{|P_{i}|-1}{2}\geq d(\Sigma)$.
Let $N_{B}=\sum_{i\in B}|P_{i}|$ and let $e_{B}$ be the number of edges incident
to $\bigcup_{i\in B}P_{i}$. As these leaves are cliques adjacent only to $P_{0}$,
$e_{B}=\sum_{i\in B}\big(\binom{|P_{i}|}{2}+|P_{i}|\ell_{i}\big)$. Since $\Sigma$
has $n-N_{B}$ vertices and $m-e_{B}$ edges,
\[
d(\Sigma)\geq d(G)
\iff\frac{m-e_{B}}{\,n-N_{B}\,}\geq\frac{m}{n}
\iff N_{B}\,m\geq n\,e_{B}
\iff \frac{m}{n}\geq\frac{e_{B}}{N_{B}} .
\]
Thus partitioning into a generalized star and cliques is as desired whenever $d(G)\geq e_{B}/N_{B}$.

\smallskip
\noindent\emph{One of the two inequalities holds.}
Assume, in order to reach a contradiction, that neither does. As
\eqref{eq:Awins} fails, $d(G)>\frac{n-r}{2}$. 
Moreover, since partitioning into a generalized star and cliques fails, 
$d(\Sigma)<d(G)$ and hence $d(G)<e_{B}/N_{B}$, as otherwise the total
density of the partition would be at least $d(G)$. Moreover, $B\neq\emptyset$,
so $N_{B}>0$. Let $p_{\max}=\max_{i\in[r]}|P_{i}|$ and recall that $\ell_{i}\leq\frac{n_{2}}{2}$ for
$i\in B$ and $e_{B}=\sum_{i\in B}|P_{i}|\big(\frac{|P_{i}|-1}{2}+\ell_{i}\big)$. 
Therefore,
\[
\frac{n_{2}+p_{\max}-1}{2}\,N_{B}-e_{B}
=\sum_{i\in B}|P_{i}|\left(\frac{n_{2}-2\ell_{i}}{2}
+\frac{p_{\max}-|P_{i}|}{2}\right)\geq 0,
\]
which holds since it is a sum of non-negative values, each factor $|P_{i}|$ is
non-negative, and for $i\in B$ both $n_{2}-2\ell_{i}\geq 0$ (as
$\ell_{i}\leq\lfloor n_{2}/2\rfloor$) and $p_{\max}-|P_{i}|\geq 0$. Dividing by
$N_{B}>0$ yields $e_{B}/N_{B}\leq\frac{n_{2}+p_{\max}-1}{2}$. Combining with the
above and recalling $n=n_{2}+N$,
\[
\frac{n_{2}+N-r}{2}\;<\;d(G)\;<\;\frac{n_{2}+p_{\max}-1}{2},
\]
so that $N<r+p_{\max}-1$. But one leaf has $p_{\max}$ vertices and each of the
remaining $r-1$ leaves has at least one vertex, so $N\geq p_{\max}+(r-1)$, a
contradiction.
Hence one of the partitions has density at least the density of $G$.
\end{proof}

\begin{figure}
	\begin{center}
	\begin{tikzpicture}
		\draw (0,0) circle (1cm);

		\draw [dashed](-0.5,0.87) to [bend left] (-0.5,-0.87);
		\node[fill=none, draw=none, text=black] at (-0.6,0) {$\ell_1$};
		
		\draw (-4,0) circle (1cm);
		\node[fill=none, draw=none, text=black] at (-4,0) {$n_1$};
		\draw  (-0.5,0.87) to [bend right] (-4,1);
		\draw  (-4,0.-1) to [bend right] (-0.5,-0.87);
		
		\draw (1,3) circle (0.5cm);
		\draw (4,1) circle (1.2cm);
		\draw (2,-3) circle (0.8cm);
		\draw  (0.5,3) to [bend right] (-1,0);
		\draw  (1.5,3) to [bend left] (1,0);
		\draw  (4,2.2) to [bend right] (0,1);
		\draw  (4,-0.2) to [bend left] (0,-1);
		\draw  (2.8,-3) to [bend right] (1,0);
		\draw  (1.2,-3) to [bend left] (-1,0);
		
		
	\end{tikzpicture}
	\end{center}
\label{fig: proof of incomplete star}
\caption{Illustration of the notation used in Lemma~\ref{lemma:star-decomposition}}
\end{figure}

\begin{lemma}\label{lemma: well-partitioned diameter}
    Let $G$ be a well partitioned chordal graph and $\mathcal P$ a partition of $V(G)$. Let $C$ be a part of $\mathcal P$ of diameter at most $2$. Then $C$ can be partitioned to generalized stars and cliques whose densities sum up to at least the density of $C$.
\end{lemma}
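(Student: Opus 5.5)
We plan to use the tree structure $\mathcal T'$ of the well partitioned chordal graph to show that $G[C]$ is an incomplete star satisfying the hypotheses of Lemma~\ref{lemma:star-decomposition}, possibly after a regrouping, and then to invoke that lemma. We assume throughout, as the paper implicitly does, that $C$ induces a connected subgraph; otherwise its diameter is not finite. Let $\mathcal B=\{X\in\mathcal P: X\cap C\neq\emptyset\}$ be the set of bags met by $C$. By property~3 of well partitioned chordal graphs, every edge of $G$ joins two vertices of the same bag or of $\mathcal T'$-adjacent bags. Hence a path in $G$ between vertices of bags $X,Y$ has length at least $\mathrm{dist}_{\mathcal T'}(X,Y)$. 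Since $\mathrm{diam}(G[C])\le 2$ and $C$ is connected, $\mathcal B$ spans a subtree of $\mathcal T'$ of diameter at most $2$. So either $|\mathcal B|=1$, or $\mathcal B$ is a star in $\mathcal T'$ with a center bag $X_0$ and leaf bags $X_1,\dots,X_r$, where $r\ge 1$. If $|\mathcal B|=1$, then $C$ is a clique and we are done. When $|\mathcal B|=2$, either bag may serve as the center; we choose it below.

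Next we fix the partition of $C$. Set $Q_0=C\cap X_0$ and $Q_i=C\cap X_i$ for $i\in[r]$. Each of these sets is a clique, and by property~2 we have $E(G[X_0,X_i])=X_0^i\times X_i'$ for some $X_0^i\subseteq X_0$ and $X_i'\subseteq X_i$. Consequently, all vertices of $Q_i\cap X_i'$ have the same neighborhood $X_0^i\cap Q_0$ in $Q_0$, and vertices of $Q_i\setminus X_i'$ have no neighbor outside $X_i$.

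\textbf{Main obstacle.} The key step is to rule out vertices of $Q_i\setminus X_i'$, and this is where the diameter bound must be used carefully. Suppose $w\in Q_i\setminus X_i'$ and $r\ge 2$. Any path from $w$ to a vertex of $Q_j$ with $j\neq i$ must first move inside $X_i$, then to $X_0$, then to $X_j$, giving length at least $3$. This contradicts $\mathrm{diam}(G[C])\le 2$, so for $r\ge 2$ every leaf vertex has the common neighborhood $X_0^i\cap C$, which is nonempty by connectivity. For $r=1$, suppose both $Q_0\setminus X_0^1$ and $Q_1\setminus X_1'$ are nonempty. A vertex of each is then at distance at least $3$ from the other, again a contradiction. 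So one of the two sides is entirely contained in the attachment set, and we take the \emph{other} side as the center. The whole new leaf then lies in the attachment set, and every leaf vertex is adjacent to the same nonempty subset of the center.

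The remaining issue is that Lemma~\ref{lemma:star-decomposition} also requires $\bigcup_i N(P_i)=P_0$. Center vertices with no leaf neighbor may violate this. We handle them with a regrouping. Let $P_0=\bigcup_i N(Q_i)\cap Q_0$, let $Z=Q_0\setminus P_0$, and set $P_i=Q_i$ for $i\in[r]$. If $Z\ne\emptyset$, add $Z$ as an extra leaf $P_{r+1}=Z$. The set $Z$ is a clique, and each of its vertices is adjacent to all of $P_0$, since $Q_0$ is a clique. It has no neighbors in the other leaves, because every vertex of a leaf adjacent to $Z$ would put that vertex of $Z$ into some $N(Q_i)$. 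So $Z$ is a legitimate leaf with identical neighborhoods. Every leaf now has identical, nonempty neighborhoods contained in $P_0$, and their union is exactly $P_0$, which is nonempty since $r\ge 1$. Hence $G[C]$ with partition $\{P_0,\dots,P_{r'}\}$ satisfies the hypotheses of Lemma~\ref{lemma:star-decomposition}. That lemma partitions $C$ into generalized stars and cliques of total density at least $d(C)$.
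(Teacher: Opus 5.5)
Your proof is correct and follows essentially the paper's route: you restrict to the star formed in the well-partition tree by the bags meeting $C$, normalize it so that every leaf clique lies entirely in its attachment set and the center is covered by the leaves' neighborhoods, and then invoke Lemma~\ref{lemma:star-decomposition}. The only difference is in how the normalization is obtained: the paper takes a certificate of $G[C]$ with the maximum number of parts, whereas you derive it directly from distance arguments (choosing the center suitably when $r=1$) and by explicitly splitting off the uncovered center vertices $Z$ as an extra leaf, which is the paper's maximality step made explicit.
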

\begin{proof}
    Let $G$ be a well partitioned chordal graph and $\mathcal P$ a partition
    of $V(G)$. If $C$ has diameter 1 then $C$ already induces a clique and the 
    statement of the lemma holds.
    Let $C$ be a part of $\mathcal P$ of diameter $2$. Since $G$ is a well 
    partitioned chordal graph and $G[C]$ is an induced subgraph of $G$, $G[C]$ 
    is also well partitioned and 
    there exists a partition $\mathcal{P}^{wp}$ and 
    a tree $\mathcal{T}^{wp}$ certifying that $G[C]$ is well partitioned. 
    Out of 
    all such partitions let $\mathcal{P}^{wp}$ be one maximal number of parts. 
    Then, since $C$ has diameter 2, from the third condition of the definition 
    of well partitioned chordal graphs $\mathcal{T}^{wp}$ has diameter at most 2 and thus there exist sets 
    $P^{wp}_{0}$ and $P^{wp}_{i}$, $i\in [r]$, in $\mathcal{P}^{wp}$, 
    that induce a star in $\mathcal{T}^{wp}$ with center $P^{wp}_{0}$ and such 
    that the set $P^{wp}_{0}\cup\bigcup_{i\in [r]} P^{wp}_{i}$ contains all 
    vertices of $C$. Finally, recall that for every pair 
    $P^{wp}_{0}$ and $P^{wp}_{i}$, $i\in [r]$, there exist sets 
    $X_{i}\subseteq P^{wp}_{0}$ and $Y_{i}\subseteq P^{wp}_{i}$ such that the 
    edges between $P^{wp}_{0}$ and $P^{wp}_{i}$ are the edges of the complete 
    bipartite graph with $X_{i}$ and $Y_{i}$ as parts. 
    We first assume that for some $i\in [r]$, $Y_{i}\subsetneq P_{i}$. 
    Then we can modify the partition $\mathcal{P}^{wp}$ and the tree 
    $\mathcal{T}^{wp}$ so that they still certify that $G[C]$ is 
    well partitioned chordal but with more parts, contradicting to the maximality of the parts of the partition.
    In order to do so, we remove the leaf $P_{i}$ from the tree and replace it
    with a path $P_{Y_{i}}$ and $\overline{P}_{Y_{i}}$ such that $P_{Y_{i}}$ contain exactly the vertices of $Y_{i}$ which have neighbors in $P_{0}$ and
    is a neighbor of $P_{0}$ and $\overline{P}_{Y_{i}}$ contains the vertices 
    of $P_{i}\setminus Y_{i}$ which have neighbors only in $P_{Y_{i}}$ and
    is now a neighbor of $P_{Y_{i}}$. 
    Therefore, $Y_{i}=P_{i}$ for every $i\in [r]$.

    Moreover, again by the maximality of the parts of $\mathcal{P}^{wp}$ 
    observe that for every vertex $v$ of $P_{0}$ there exists some $X_{i}$ such that $v\in X_{i}$. Indeed, if otherwise, let 
    $X=P_{0}\setminus \bigcup_{i\in [r]} X_{i}$ and notice that we may obtain a partition with one more part by removing $X$ from $P_{0}$ and adding
    it as a new leaf of the tree that is a neighbor of $P_{0}$.

    Observe that from all of the above arguments we also obtain that if $r=1$ then
    the tree is an edge since both of its vertices are leaves the graph $G[C]$
    is a clique. In this case the lemma holds trivially.

    
    From now on we assume that $r\geq 2$.
    If for every $i\in [r]$, $|X_{i}|\geq \lfloor\frac{|P_{0}|}{2}\rfloor+1$ then $G[C]$ is already a generalized star with center $P_{0}$ and leaves $P_{1},\dots,P_{r}$, concluding the lemma. Therefore, it is enough to consider the case where for some $i\in [r]$, 
    $|X_{i}|\leq \frac{|P_{0}|}{2}$. In other words, $C$ induces an incomplete star. 
    Then the desired partition can be immediately obtained from Lemma~\ref{lemma:star-decomposition}. 
    \end{proof} 


In order to finally prove Theorem~\ref{thm::genstars}, it only remains to show that there is always an optimal partition for MDGP such that each part has diameter $2$.
\begin{lemma}\label{lemma: well-partitioned diameter 2}
    Let $G$ be a well partitioned chordal graph and $\mathcal P$ a partition of $V(G)$. Let $C$ be a part of $\mathcal P$ of diameter at least $3$. Then we can partition $C$ into parts of diameter at most $2$ whose densities sum up to at least the density of $C$.
\end{lemma}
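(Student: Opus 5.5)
Since $G[C]$ is an induced subgraph of a well partitioned chordal graph, it is itself well partitioned chordal. We fix a certifying partition $\mathcal P^{wp}$ of $C$ into cliques with tree $\mathcal T^{wp}$, and argue by induction on $|C|$ (equivalently, on the number of bags of $\mathcal T^{wp}$). If $\operatorname{diam}(G[C])\ge 3$, then $\mathcal T^{wp}$ cannot be a star: by the third condition of the definition, any two vertices of $C$ lying in bags adjacent to a common centre are at distance at most $2$. We may assume $G[C]$ is connected, since otherwise splitting $C$ into its components never decreases density. Hence $\mathcal T^{wp}$ has an edge $XY$ whose removal leaves two sides, each containing at least two bags. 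This edge corresponds to a complete bipartite join $X'\times Y'$, and $X'$ (or $Y'$) is a clique separator of $G[C]$.

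The plan is to cut $C$ along such an edge into two sets $A\cup B=C$, namely the unions of bags on each side of $XY$. We then show that $d(A)+d(B)\ge d(C)$ by choosing the cut suitably, and recurse on $A$ and $B$. The recursion terminates because each piece has fewer bags, and pieces of diameter at most $2$ are left untouched. Write $|A|=a$, $|B|=b$, $e_A=|E(A)|$, $e_B=|E(B)|$ and $e_{AB}=|X'||Y'|$. We need
\[
\frac{e_A}{a}+\frac{e_B}{b}\ \ge\ \frac{e_A+e_B+e_{AB}}{a+b}.
\]
Using $\frac{e_A}{a}+\frac{e_B}{b}-\frac{e_A+e_B}{a+b}=\frac{b e_A}{a(a+b)}+\frac{a e_B}{b(a+b)}$, this reduces to
\[
\frac{b}{a}e_A+\frac{a}{b}e_B\ \ge\ |X'||Y'|.
\]
Because $X$ and $Y$ are cliques, $e_A\ge\binom{|X|}{2}$ and $e_B\ge\binom{|Y|}{2}$. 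When each side carries at least one further bag attached by an edge, we get additional edges and vertices that help.

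The main obstacle is that a single cut need not satisfy this inequality. For example, if $A$ is a huge sparse side and $B$ is small but fully joined to $X'$, the cross edges may outweigh the gain. So we will not cut an arbitrary edge. We choose the cut edge to be one separating a \emph{minimal} subtree: take a longest path in $\mathcal T^{wp}$ and root $\mathcal T^{wp}$ at one of its ends. Let $Y$ be the deepest bag that has a child and whose subtree, together with its parent $X$, still yields diameter at least $3$. The piece $B$ is then the subtree at $Y$, which induces a star-like structure of diameter at most $2$.

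If the cut inequality still fails for this choice, we fall back on Lemma~\ref{lemma:star-decomposition}, as in the proof of Lemma~\ref{lemma: well-partitioned diameter}. Alternatively, we reassign the join vertices $X'$ to the side where they have more neighbours. We compare $d(C)$ against the two candidate splits $(A,B)$ and $(A\setminus X', B\cup X')$, and show by an averaging argument that they cannot both lose density. The averaging uses that $d(C)\le \max$ of the pieces' local average degrees, in the same style as the contradiction at the end of the proof of Lemma~\ref{lemma:star-decomposition}, with $N\ge p_{\max}+r-1$ replaced by the fact that both sides contain at least two bags. This two-candidate comparison is where we expect the real work to lie.
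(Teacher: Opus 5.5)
Your reformulation of the cut condition as $\frac{b}{a}e_A+\frac{a}{b}e_B\ge|X'||Y'|$ is correct; it is the paper's criterion \eqref{eq:crit}. The argument nevertheless has two genuine gaps.

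First, your claim that $\operatorname{diam}(G[C])\ge 3$ rules out a star-shaped $\mathcal T^{wp}$ is false. The third condition only forbids edges between non-adjacent bags. Leaf bags may still be joined to disjoint parts of the centre, and vertices of a leaf bag outside its join set have no neighbour in the centre at all. Already $P_4=abcd$ with bags $\{a\},\{b,c\},\{d\}$ is a star certificate of a diameter-$3$ graph. Maximising the number of bags does not help either: take the triangle $abc$ plus $u\sim a$, $v\sim b$ and $w\sim a,b$. Every certificate is a star centred at a bag containing $a$ and $b$, yet $u$ and $v$ are at distance $3$. This is exactly the case the paper settles with Lemma~\ref{lemma:star-decomposition}. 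It first chooses a certificate with the maximum number of bags, so that every leaf bag lies entirely in its join set and the join sets cover the centre, which are the hypotheses of that lemma. Your plan never handles this case.

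Second, and more seriously, you never establish the cut inequality, or any substitute, for a specific split. The cut-selection rule is vague, and the claim that the chosen piece has diameter at most $2$ is unsupported and would not give the inequality anyway. The proposed averaging rests on $d(C)\le\max$ of the pieces' densities, which is false in general: $d(K_4)=3/2$, while both edges of a perfect matching have density $1/2$. Nor can ``both sides contain at least two bags'' be the decisive property. For $K_{2\ell}=X\cup Y$ with one pendant vertex attached to a vertex of $X$ and one to a vertex of $Y$, the cut along $XY$ strictly loses density for every $\ell\ge 3$.

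The missing quantitative idea is that the cut along an edge with join sets of sizes $\ell_A,\ell_B$ succeeds as soon as $e_A\ge \ell_A^2/2$ and $e_B\ge \ell_B^2/2$, by AM--GM applied to $\frac{b}{a}e_A+\frac{a}{b}e_B$. So each side needs about $\ell/2$ edges beyond the clique on its join set, and the paper's case analysis is organised to find such an edge:
\begin{itemize}
\item If some edge has proper join sets on both sides, each side contains a clique on one more vertex than its join set. The inequality then reduces to $(\ell_A b-\ell_B a)^2+\ell_A b^2+\ell_B a^2\ge 0$.
\item Otherwise every edge has a full side, and leaf bags are full by maximality. Claim~\ref{clm:leafdichotomy} disposes of any leaf whose neighbour bag meets the join in fewer than half of its vertices, by splitting off the leaf or the neighbour bag.
\item Finally, along a longest path one follows the edges whose join covers at least half of the next bag. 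One cuts where this fails or next to the far leaf, where these half-joins supply precisely the missing $\ell/2$ edges.
\end{itemize}
Without this, or an equivalent argument, the proposal does not prove the lemma.
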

\begin{proof} First of all notice that, without loss of generality we may assume that $G[C]$ is
    connected.
     Let $G$ be a well partitioned chordal graph and $\mathcal P$ a partition of $V(G)$. Let $C$ 
     be a part of $\mathcal P$ of diameter at least $3$.
     Since $G$ is a well partitioned chordal graph and $C$ is an induced subgraph of $G$, $G[C]$ 
     is also a well partitioned chordal graph and thus there exists a partition $\mathcal{P}^{wp}$ 
     and a tree $\mathcal{T}^{wp}$ certifying that $G[C]$ is well partitioned. Without loss
     of generality we may assume that out of all such partitions $\mathcal{P}$ has the
     greatest number of parts. Given two neighboring nodes $t_{1}$ and $t_{2}$ with parts 
     $P_{t_{1}}$ and $P_{t_{2}}$ denote by $X_{t_{1},t_{2}}\subseteq P_{t_{1}}$ and $X_{t_{2},t_{1}}\subseteq P_{t_{2}}$ 
     the sets for which $E(G[C](P_{t_{1}},P_{t_{2}}))=X_{t_{1},t_{2}}\times X_{t_{2},X_{1}}$.  
     Let also $\mathcal{T}_{1}$ and $\mathcal{T}_{2}$ be the subtrees of $\mathcal{T}^{wp}$ that 
     occur after the removal of the edge $t_{1}t_{2}$ where without loss of generality 
     $\mathcal{T}_{i}$ is the tree that contains $t_{i}$.
     We set $C_{i}=\cup_{t\in V(T_{i})} P_{t}$, $i\in [2]$.

    Moreover, we use throughout the following criterion. For a partition of $C$ in two parts $A$
    and $B$, with $m_A=|E(G[A])|$, $m_B=|E(G[B])|$ and $c=|E(G[C](A,B))|$, the inequality
    $d(G[A])+d(G[B])\geq d(G[C])$ is equivalent to
    \begin{equation}\label{eq:crit}
    m_A|B|^2+m_B|A|^2\geq c\,|A||B|,
    \end{equation}

    \medskip
	\begin{figure}[ht]
	\begin{center}
		\begin{tikzpicture}
			\draw [red](0,0) circle (1cm);
			\draw [dashed](0.2,0.87) to [bend right] (0.2,-0.87);
			\node[fill=none, draw=none, text=black] at (0.5,0) {$X_{t_1,t_2}$};
			\node[fill=none, draw=none, text=black] at (-1.4,0.6) {$P_{t_1}$};
			
			\draw [blue](3,-2) circle (1cm);
			\draw [dashed](3.3,-1.15) to [bend left] (2.3,-2.5);
			\node[fill=none, draw=none, text=black] at (2.55,-1.8) {$X_{t_2,t_1}$};
			\node[fill=none, draw=none, text=black] at (4.4,-1.4) {$P_{t_2}$};
			
			\draw [thick](0.92,-0.4) to (2.2,-1.4);
			\node[fill=none, draw=none, text=black] at (2,-0.6) {$t_1t_2$};
			
			\draw [red](0,2.5) circle (0.7cm);
			\draw [red,thick](0,1) to (0,1.8);
			
			\draw [red](0,-2) circle (0.5cm);
			\draw [red,thick](0,-1) to (0,-1.5);
			
			\draw [red](-3,-2) circle (1cm);
			\draw [red,thick](-0.92,-0.4) to (-2.2,-1.4);
			
			\draw [red](-3,-4) circle (0.5cm);
			\draw [red,thick](-3,-3.5) to (-3,-3);
			
			\draw [blue](0,-4) circle (0.8cm);
			\draw [blue,thick](0.7,-3.6) to (2.145,-2.5);
			
			\draw [blue](3,-4) circle (0.6cm);
			\draw [blue,thick](3,-3.4) to (3,-3);
			
			\draw [blue](6,-4) circle (0.8cm);
			\draw [blue,thick](5.3,-3.6) to (3.855,-2.5);
			
			\draw [blue](1.5,-7) circle (1cm);
			\draw [blue,thick](1.8,-6.05) to (2.8,-4.556);
			
			\draw [blue](4.5,-7) circle (0.5cm);
			\draw [blue,thick](4.5,-6.5) to (3.2,-4.556);
			
		\draw[decorate,
		decoration={brace,mirror,amplitude=5pt},blue]
		(7,-8) -- (7,-1)
		node[midway,right=8pt,align=center]
		{$\mathcal T_{t_2}$};
			
			\draw[decorate,
			decoration={brace,amplitude=5pt},red]
			(-4.2,-4.5) -- (-4.2,3.2)
			node[midway,left=8pt] {$\mathcal T_{t_1}$};
		\end{tikzpicture}
	\end{center}
\caption{Example of a tree certifying a partition of a well partitioned chordal graph. $P_{t_1}$, $P_{t_2}$ are two neighboring, through the edge $t_1t_2$, nodes with parts $P_{t_1}$, $P_{t_2}$. $X_{t_{1}t_{2}}\subseteq P_{t_{1}}$, $X_{t_{2}t_{1}}\subseteq P_{t_{2}}$ are the sets for which $E(G[C](P_{t_{1}},P_{t_{2}}))=X_{t_{1}t_{2}}\times X_{t_{2}t_{1}}$.}
\end{figure}

    \noindent \textbf{Case 1: There is an edge $t_{1}t_{2}$ such that 
    $X_{t_{1},t_{2}}\subsetneq P_{t_{1}}$ and $X_{t_{2},t_{1}}\subsetneq P_{t_{2}}$}.
    In other words, there exists a vertex $v_{1}\in P_{t_{1}}$ that does not have any neighbor in 
    $P_{t_{2}}$ and a vertex $v_{2}\in P_{t_{2}}$ that does not have any neighbor in $P_{t_{1}}$. 
    We argue that $d(G[C_{1}])+d(G[C_{2}])\geq d(G[C])$. Let $m_{i}$ and $n_{i}$ denote the number 
    of edges and the number of vertices of $G[C_{i}]$ respectively, $i\in [2]$. 
    Let also $|X_{1}|=\ell_{1}$ and $|X_{2}|=\ell_{2}$ and note that $|P_{t_{i}}|\geq \ell_{i}+1$, 
    $i\in [2]$.
    Then for the inequality of the densities to hold it should be the case that 
        $$\frac{m_{1}}{n_{1}}+\frac{m_{2}}{n_{2}}\geq \frac{m_{1}+m_{2}+\ell_{1}\ell_{2}}{n_{1}+n_{2}}.$$
    This, however is equivalent to 
    $$m_{1}n_{2}^{2}+m_{2}n_{1}^{2}\geq \ell_{2}n_{1}\ell_{1}n_{2}.$$
    Observe that $m_{i}\geq \binom{\ell_{i}+1}{2}$. Thus, for the above inequality to hold it is 
    enough to show that 
    $$\binom{\ell_{1}+1}{2}n_{2}^{2} + \binom{\ell_{2}+1}{2}n_{1}^{2}\geq \ell_{2}n_{1}\ell_{1}n_{2}.$$
    By expanding the binomials we obtain that the above inequality is equivalent to 
    $$(\ell_{1}n_{2}-\ell_{2}n_{1})^{2}+\ell_{1}n_{2}^{2}+\ell_{2}n_{1}^{2}\geq 0,$$
    which holds since it's a sum of non-negative values.

    \medskip
    
    Let us now assume that there is no such edge in $\mathcal{T}^{wp}$. 
    
    \medskip
    
    \noindent \textbf{Case 2: For every edge $t_{1}t_{2}$ either $X_{t_{1},t_{2}}=P_{t_{1}}$ or $X_{t_{2},t_{1}}= P_{t_{2}}$.} 
    \begin{claim}\label{clm:fullleaf}
    If $t_{1}$ is a leaf of 
    $\mathcal{T}^{wp}$ and $t_{2}$ is its unique neighbor then $X_{t_{1},t_{2}}=P_{t_{1}}$.
    \end{claim}

    \begin{proofofclaim}
    Towards a contradiction we assume that $X_{t_{1},t_{2}}\subsetneq P_{t_{1}}$
    Let us then remove the leaf $P_{1}$ from the tree and replace it
    with a path $t_{0},t_{1}'$ such that $t_{0}$ is a neighbour of $t_{1}'$, $t_{1}'$ is 
    a neighbour of $t_{2}$, $P_{1}^{'}=X_{t_{1},t_{2}}$ and $P_{0}=P_{1}\setminus X_{t_{1},t_{2}}$.
    This partition and the corresponding tree certify that $G[C]$ is a well-partitioned
    chordal graph but the partition has one part more than $\mathcal{P}^{wp}$. This is a 
    contradiction to the choice of $\mathcal{P}^{wp}$ as one with the maximum number of parts.
    \end{proofofclaim}

    \begin{claim}\label{clm:leafdichotomy}
    For every leaf $t_{1}$ of $\mathcal{T}^{wp}$ and its unique neighbor $t_{2}$ it holds that
    either $|X_{t_{2},t_{1}}|\geq\frac{|P_{t_{2}}|}{2}$ or there exists a partition of $C$ into
    two sets $C_{1}$ and $C_{2}$ such that the sum of the densities of $G[C_{1}]$ and $G[C_{2}]$
    is at least the density of $G[C]$.
    \end{claim}

    \begin{proofofclaim}
    From Claim~\ref{clm:fullleaf}, it holds that $X_{t_{1},t_{2}}=P_{t_{1}}$. 
    Therefore, every vertex of $P_{t_{1}}$ is adjacent to exactly the vertices of 
    $X_{t_{2},t_{1}}$ outside $P_{1}$. Let $n_{1}=|P_{t_{1}}|$, $n_{2}=|P_{t_{2}}|$,
    $n_{3}=|C|-n_{1}-n_{2}$, $\alpha=|X_{t_{2},t_{1}}|$, and let $m_{2}$ denote the number of edges 
    of $G[C\setminus P_{t_{1}}]$. Since $G[C]$ is connected, $\alpha\geq 1$. Thus, 
    the number of edges between $P_{t_{1}}$ and $C\setminus P_{t_{1}}$ equals $n_{1}\alpha$. Moreover $G[C\setminus P_{t_{1}}]$ is connected, 
    and hence we get that $m_{2}\geq\binom{n_{2}}{2}+n_{3}$.

    Let us assume that $\alpha<\frac{n_{2}}{2}$.

    \medskip
    \noindent\emph{Case $n_{1}\geq 2$.} We argue that $C_{1}=P_{t_{1}}$ and $C_{2}=C\setminus P_{t_{2}}$
    is the desired partition. By \eqref{eq:crit}, with $m_{1}=\binom{n_{1}}{2}$ and
    $c=n_{1}\alpha$, for the inequality of the densities to hold it should be the case that
    $$m_{1}(n_{2}+n_{3})^2+m_{2}n_{1}^2\geq\alpha n_{1}^2(n_{2}+n_{3}).$$
    Using $m_{2}\geq\binom{n_{2}}{2}+n_{3}$, $m_{1}\binom{n_{1}}{2}$ and $\alpha\leq\frac{n_{2}}{2}$, 
    it is enough to show,
    after dividing by $\frac{n_{1}}{2}$, that
    $$(n_{1}-1)(n_{2}+n_{3})^2+n_{1}n_{2}(n_{2}-1)+2n_{1}n_{3}\geq n_{1}n_{2}(n_{2}+n_{3}),$$
    that is, $(n_{1}-1)(n_{2}+n_{3})^2+2n_{1}n_{3}\geq n_{1}n_{2}(n_{3}+1)$. Since $n_{1}\geq2$ we have
    $n_{1}-1\geq\frac{n_{1}}{2}$, so it is enough to show that
    $\frac{(n_{2}+n_{3})^2}{2}+2n_{3}\geq n_{2}(n_{3}+1)$, which is equivalent to
    $$(n_{2}-1)^2+n_{3}^2+4n_{3}\geq 1$$
    and holds since $\alpha\geq 1$ implies $n_{2}\geq2$.

    \medskip
    \noindent\emph{Case $n_{1}=1$.} Write $P_{t_{1}}=\{v\}$, $n=|C|$ and $m=|E(G[C])|$, and
    consider first the partition $C_{1}=\{v\}$, $C_{2}=C\setminus\{v\}$. Here $m_{1}=0$ and
    $c=\alpha$, so \eqref{eq:crit} is equivalent to $\frac{m-\alpha}{n-1}\geq\frac{m}{n}$, 
    that is,
    $\alpha\leq\frac{m}{n}$. If this holds we are done, so assume
    \begin{equation}\label{eq:vfail}
    \alpha>\frac{m}{n}.
    \end{equation}
    Consider now the partition $C_{1}=P_{t_{2}}$, $C_{2}=C\setminus P_{t_{2}}$, and let $c_0$ be 
    the number of edges between $P_{t_{2}}$, $m_{2}=\binom{n_{2}}{2}$ be the number of edges of $G[C_{1}]$, and $C\setminus P_{t_{2}}$ and $m_B$ the number of 
    edges of $G[C\setminus P_{t_{2}}]$. 
    We note here that while it is counter-intuitive to consider as $C_{2}$ a part that is 
    not connected, the total density of its connected components is at least as much as
    the density of $C_{2}$.
    If \eqref{eq:crit} holds for this partition we are done, so let us assume that it fails. 
    Therefore,
    $$\frac{m_{2}}{n_{2}}+\frac{m_{B}}{n-n_{2}}<\frac{m_{2}+m_{B}+c_{0}}{n_{2}+(n-n_{2})}.$$
    By extending and simplifying the inequality, we obtain that
    $$\binom{n_{2}}{2}(n-n_{2})^{2}+m_{B}n_{2}^2<c_0 n_{2}(n-n_{2}),$$
    and thus, since $m_{B}\geq 0$, $$\binom{n_{2}}{2}(n-n_{2})^{2}<c_{0}n_{2}(n-n_{2}).$$
    Dividing by $n_{2}(n-n_{2})$ we obtain that
    $c_0>\frac{(n_{2}-1)(n-n_{2})}{2}$, and therefore
    $$m \geq \binom{n_{2}}{2}+c_0 > \frac{n_{2}(n_{2}-1)}{2}+\frac{(n_{2}-1)(n-n_{2})}{2}
    = \frac{n(n_{2}-1)}{2}.$$
    Together with \eqref{eq:vfail} this gives $\alpha>\frac{n_{2}-1}{2}$, hence
    $2\alpha\geq n_{2}$ since $\alpha$ is an integer, that is,
    $\alpha\geq\frac{n_{2}}{2}$, a contradiction to the assumption
    $\alpha<\frac{n_{2}}{2}$. Therefore one of the two partitions above satisfies
    \eqref{eq:crit}, and the second option of the claim holds.
\end{proofofclaim}

    From now on we may assume that \textbf{for every leaf $t_{1}$ of $\mathcal{T}^{wp}$ and its unique neighbor $t_{2}$ it holds that $|X_{t_{2},t_{1}}|\geq\frac{|P_{t_{2}}|}{2}$}, as
    otherwise from Lemma~\ref{clm:leafdichotomy} we may break $C$ into more parts whose total
    density is at least the density of $G[C]$.
    Let us now assume that for every two leaves 
    $t$ and $t'$ the path that joins them in $\mathcal{T}^{wp}$ has length at most $2$. Then, 
    from Lemma~\ref{lemma:star-decomposition}, $G[C]$ can be partitioned into cliques and 
    generalized stars whose total density is at least the density of $G[C]$. Since both 
    generalized stars and cliques have diameter at most 2 the lemma follows.

    Thus, from now on, we assume that \textbf{if $t$ and $t'$ are two anti-diametrical leaves 
    then the path that joins them in $\mathcal{T}^{wp}$ has length at least $3$}.

    Let $P$ be a path that joins two anti-diametrical vertices $t=t_{1}$ and $t'$ of 
    $\mathcal{T}^{wp}$.
    Let also $P'=t_{1}t_{2}\dots t_{k}$ be a maximal subpath of $P$ such that 
    $X_{t_{i+1},t_{i}}\geq \frac{|P_{t_{i+1}}|}{2}$ for every $1\leq i \leq k$. 
    Observe that from the assumption we made, 
    the vertices $t_{1}$ and $t_{2}$ form such a path and thus $k\geq 2$.
    Let us first consider the case where $t_{k}\neq t'$. Therefore, $t_{k}$ has a neighbor 
    $t_{k+1}$ in $P$.
    We consider the interaction of the edge $t_{k}t_{k+1}$. Let also $\mathcal{T}_{k}$ and 
    $\mathcal{T}_{k+1}$ the subtrees of $\mathcal{T}$ occurring after the removal of $t_{k}t_{k+1}$
    that contain $t_{k}$ and $t_{k+1}$ respectively. 
    We set $C_{i}=\cup_{t\in V(T_{i})} P_{t}$, $i\in \{k,k+1\}$.
    Since $X_{t_{k+1},t_{k}}\subsetneq P_{t_{k+1}}$ then we claim that 
    $d(G[C_{k}])+d(G[C_{k+1}])\geq d(G[C])$. 
    Observe that since $X_{t_{k+1},t_{k}}\subsetneq P_{t_{k+1}}$ we may assume that 
    $X_{t_{k},t_{k+1}}=P_{t_{k}}$. Denote $|X_{t_{k},t_{k+1}}|=\ell_{k}=|P_{t_{k}}|$ and 
    $|X_{t_{k+1},t_{k}}|=\ell_{k+1}$.
    Let, as before $m_{k}$ and $n_{k}$ denote the amount of edges and vertices of $C_{k}$ and
    $m_{k+1}$ and $n_{k}$ denote the amount of edges and vertices of $C_{k+1}$ respectively. 
    Then the inequality holds if and only if
    $$\frac{m_{k}}{n_{k}}+\frac{m_{k+1}}{n_{k+1}}\geq \frac{m_{k}+m_{k+1}+\ell_{k}\ell_{k+1}}{n_{k}+n_{k+1}}$$
    which as before is equivalent to 
    $$m_{k}n_{k+1}^{2}+m_{k+1}n_{k}^{2}\geq n_{k}\ell_{k+1}n_{k+1}\ell_{k}.$$
    Observe that as before
    $C_{k}$ contains all edges between the $\ell_{k}$ vertices of $P_{t_{k}}$ and at least 
    $\frac{\ell_{k}}{2}$ edges between the vertices $P_{t_{k-1}}$ and $P_{t_{k}}$ and thus
    $m_{k}\geq \frac{\ell_{k}^{2}}{2}$ (recall that $k\geq 2$).
    Moreover $C_{k+1}$ contain all edges among all of vertices of $P_{t_{k+1}}$. 
    However, $P_{t_{k+1}}$ contains at least $\ell_{k+1}+1$ vertices and thus $m_{k+1}\geq\binom{\ell_{k+1}+1}{2}$.
    By replacing $m_{k}$ and $m_{k+1}$ and expanding the binomials we obtain that the above inequality is equivalent to 
    $$(\ell_{k}n_{k+1}-\ell_{k+1}n_{k})^{2}+\ell_{k+1}n_{k}^{2}\geq 0,$$
    which holds since it's a sum of non-negative values.
    If $t_{k}$ is a leaf then $t_{k}=t'$ and thus $k\geq 4$ (by the assumption that $P$ has length at least 3.).
    We now consider the edge $t_{k-2}t_{k-1}$.
    Let also $\mathcal{T}_{k-2}$ and 
    $\mathcal{T}_{k-1}$ the subtrees of $\mathcal{T}$ occurring after the removal of $t_{k-2}t_{k-1}$
    that contain $t_{k-2}$ and $t_{k-1}$ respectively. 
    We set $C_{i}=\cup_{t\in V(T_{i})} P_{t}$, $i\in \{k-2,k-1\}$.
    We claim that $d(G[C_{k-2}])+d(G[C_{k-1}])\geq d(G[C])$. 
    Similarly, to the previous cases, we denote by $m_{k-2}$ and $n_{k-2}$
    the number of edges and vertices of $C_{k-2}$ and by 
    $m_{k-1}$ and $n_{k-1}$ the number of edges and vertices of $C_{k-1}$.
    Denote also by 
    $\ell_{k-2}=|X_{t_{k-2},t_{k-1}}|$ and $\ell_{k-1}=X_{t_{k-1},t_{k-2}}$.
    Observe that since $t_{k}$ is a leaf each vertex of $P_{t_{k}}$ has at least
    $\frac{|P_{t_{k-1}}|}{2}\geq \frac{\ell_{k-1}}{2}$ neighbours in $P_{t_{k}}$. 
    Similarly, by the definition of the path 
    $|X_{t_{k-2},t_{k-3}}|\geq \frac{|P_{t_{k-2}}|}{2}$ and thus there exist
    at least $\frac{|P_{t_{k-2}}|}{2}\geq \frac{\ell_{k-2}}{2}$ edges between
    $P_{t_{k-2},t_{k-3}}$ in $C_{k-2}$.
    Therefore, $m_{k-1}\geq \frac{\ell_{k-1}^{2}}{2}$ and $m_{k-2}\geq \frac{\ell_{k-2}}{2}$.
    Plugging all values in~\eqref{eq:crit} we obtain that the desired inequality holds if and only
    if $$\ell_{k-1}^{2}n_{k-2}^{2}+\ell_{k-2}^{2}n_{k-1}^{2}\geq 2\ell_{k-2}\ell_{k-1}n_{k-2}n_{k-1}.$$
    This is equivalent to
    $$(\ell_{k-1}n_{k-2}-\ell_{k-2}n_{k-1})^{2}\geq 0,$$
    which always holds.
    This completes the proof of the lemma.
\end{proof}
It is easy to see that the density of a clique of size $l$ equals to $(l-1)/2$. For the case of a generalized star however, things are a bit more complicated as we need to take into account how many edges connect each leaf to the center, which may vary. Luckily this is not the case for thick trees.
It is important to note, and crucial to the algorithm of the next section being polynomial, that when we restrict these results to the class of thick trees, because each pair of minimal separator cliques is either equal or disjoint then each vertex of each leaf of a generalized star is adjacent to all vertices of the central clique.

\subparagraph*{Density of generalized stars on thick trees} As already mentioned when we work on thick trees, every leaf clique of a generalized star is completely adjacent to the center clique. Hence, in order to compute the density of a generalized star it suffices to know the size of its center $l$ and the size of each of its $r$ leaves, $l_1,\ldots, l_r$. We denote the overall leaf vertices by $L=\sum_{i=1}^rl_i$ and the overall leaf edges by $Q=\sum_{i=1}^r\binom{l_i}{2}$. Then each such star has $L+l$ vertices and $Q+lL+\binom{l}{2}$ edges. Hence the density of such a generalized star is $\frac{Q+lL+\binom{l}{2}}{L+l}$.

\section{MDGP on Thick Forests}\label{section: algorithm thick trees}
Since we now know exactly the structure of an optimal solution of MDGP on well partitioned chordal graphs, it is tempting to try to generalize the algorithm of Hanaka et al.~\cite{DBLP:journals/jco/HanakaIO25} by replacing stars by generalized stars and vertex separators by clique separators.  
This however would require to consider all the possible intersections of the separator, which makes the running time exponential. In this section we develop in detail a polynomial time algorithm for dense partition on thick trees and then the desired algorithm for thick forests is immediate since a graph $G$ is a thick forest if each connected component of $G$ is a thick tree. Hence, we may apply the algorithm of this section in each connected component of a thick forest $G$ separately and sum up the densities afterwards. In order to develop the desired algorithm we generalize the algorithm in \cite{DBLP:journals/jco/HanakaIO25} for block graphs in the following manner. 

Our dynamic programming algorithm for thick trees follows a similar structure as the dense partition algorithm for block graphs, but also extends it to handle clique separators and generalized star parts. As in the block graph setting, the computation is carried out on a block-cut tree decomposition, and the focus of each subtree is on how its solution may interact with the rest of the graph. In block graphs, every such interaction goes through a cut vertex and the dynamic programming states encode the role of that vertex. 
In thick trees, cut vertices are replaced by clique separators, and the corresponding dynamic programming states therefore counts how many vertices of the clique separator participate in a part that continues upward (with respect to the block-cut tree) instead of the role of a single vertex.

More importantly, the non-clique parts are now generalized stars whose centers are cliques. We prove in Lemma~\ref{lemma: center on separator} that whenever such a star crosses a separator its center lies inside that separator and hence every generalized star can be regarded as centered at a cut node and having at most one leaf in each block incident to that cut node. Our states therefore store a generalized star at the cut node that owns its center, keeping only the total number of vertices and edges contributed by its leaves so far, rather than a single reserved leaf clique per cut.

Another difference to the algorithm for block graphs is the non-clique parts. 
In block graphs, every non-clique part crossing a block boundary is a star with a single vertex center and it suffices to track the number of leaves. 
In thick trees, optimal solutions might contain generalized stars (stars whose centers are cliques and whose leaves are cliques too, possibly varying in sizes).
Our dynamic programming algorithm replaces the parameter that counts the leaves of each star by storing the information of the size of the center of each generalized star and also the information needed from each leaves to compute the density when the generalized star is finalized (number of vertices and edges).
Moreover, as in the block graph algorithm, we prove that at each block it is sufficient to consider at most one generalized star whose center might extend or be connected to some leaf across the boundary. This preserves the same tree structure that underlies the correctness and efficiency of the original block graph algorithm, while also allowing it to operate on the broader class of thick trees (and hence thick forests).

In order to formally state the algorithm we first need to describe in detail the block-cut tree of a thick tree.

\paragraph*{Block-cut tree} Let $G$ be a thick tree. We represent $G$ by a block-cut tree $\mathcal T=(\mathcal X, \mathcal E)$, where the node set $\mathcal X$ is the disjoint union of a set $\mathcal B$ of \emph{block nodes} and a set $\mathcal C$ of \emph{cut nodes}. Each block node $B\in\mathcal B$ corresponds to a maximal clique of $G$, while each cut node $C\in\mathcal C$ corresponds to a minimal clique separator of $G$. The tree $\mathcal T$ is defined so that a block node $B$ is adjacent to a cut node $C$ if and only if the maximal clique of $B$ properly contains the separator clique of $C$. Because two minimal separators of a thick tree are either equal or disjoint, this incidence structure is a tree. We first show through Algorithm~\ref{alg:block-cut-tree} that given a thick tree $G$ we can construct its block-cut tree in polynomial time.
\begin{algorithm}[h]
	\caption{Construction of the block-cut tree $\mathcal T$ of a thick tree $G$}
	\label{alg:block-cut-tree}
	\KwIn{A thick tree $G=(V,E)$}
	\KwOut{The block-cut tree $\mathcal T=(\mathcal X,\mathcal E)$}
	Initialize $\mathcal T \leftarrow (\emptyset,\emptyset)$

	\While{$G$ is not a clique}{
	Find an arbitrary maximal clique $K$ of $G$

	Create a block node $B_K$ corresponding to $K$ (if it does not already exist) and add it to the node set of $\mathcal T$

	Compute the set of connected components of $G-K$, $\mathcal D$

		$S\gets \emptyset$
        
	\ForEach{$D \in \mathcal D$}{
		$S_D \leftarrow N_G(D) \cap K$
		
		$S\gets S\cup S_D$

		\If{there is no cut node $C_{S_D}$ in $\mathcal T$}{
			create cut node $C_{S_D}$ and add it to the node set of $\mathcal T$
		}
		add edge $B_K C_{S_D}$ to the edge set of $\mathcal T$

		Let $K_D$ be the maximal clique containing $S_D$ and intersecting $D$

		create block node $B_{K_D}$ (if it does not already exist) and add it to the node set of $\mathcal T$

		add edge $C_{S_D} B_{K_D}$ to the edge set of $\mathcal T$

	}
	$G\gets G-K\cup S$
}
\Return $\mathcal T$
\end{algorithm}

\subparagraph*{Running time.}
In each iteration of Algorithm~\ref{alg:block-cut-tree} we process one block node of the block-cut tree. For each maximal clique $K$ it processes, the algorithm computes the connected components of $G-K$ and, for each component $D$, computes the separator $S_D=N_G(D)\cap K$. These computations can be done in polynomial time. Moreover, each maximal clique, each minimal clique separator, and each edge of the block-cut tree is created at most once. Since a chordal graph has at most $n$ maximal cliques, we have at most $n$ iterations. Therefore, the total running time of the construction is polynomial in the size of $G$.

\subparagraph*{Correctness.}
Let $K$ be the maximal clique processed in some iteration of Algorithm~\ref{alg:block-cut-tree}. For every connected component $D$ of $G-K$, the set $S_D=N_G(D)\cap K$ is a clique separator, and in a thick tree it is in fact a minimal clique separator. Here it is essential that $G$ is a thick tree, because if $G$ was for example just chordal then $S_D$ would be a separator of $G$ but not necessarily minimal, since separators in chordal graphs can have a non empty intersection without being identical. Hence, the algorithm correctly creates a cut node for each separator adjacent to $K$. Furthermore, by the definition of thick trees, every component $D$ of $G-K$ is attached to $K$ through exactly such a separator, and there is a unique maximal clique $K_D$ on the side of $D$ containing $S_D$. Thus the algorithm correctly creates the corresponding child block node and connects it to the cut node of $S_D$. Recursing on the induced subgraph $G[D\cup S_D]$, so that the shared separator vertices are retained, constructs exactly the decomposition of $G$ into maximal cliques and minimal clique separators. Since every maximal clique of $G$ is eventually reached once, and every edge of the block-cut tree corresponds to the containment relation between a maximal clique and a separator clique, the resulting graph is exactly the block-cut tree of $G$. See Figure~\ref{fig: block-cut tree} for an example of a block-cut tree of a thick tree.

\begin{figure}
	\begin{center}
		\begin{tikzpicture}[every node/.style={fill=black,inner sep=1.5pt},scale=0.8]
			
			\node [shape=circle, white] (ph) at (0,-6.5){};
			\node[shape=circle,draw=black] (a) at (0,0) {};
			\node[shape=circle,draw=black] (b) at (1,0) {};
			\node[shape=circle,draw=black] (c) at (2,-1) {};
			\node[shape=circle,draw=black] (d) at (1,-2) {};
			\node[shape=circle,draw=black] (e) at (0,-2) {};
			\node[shape=circle,draw=black] (f) at (-1,-1) {};
			\node[shape=circle,draw=black] (g) at (-2,-2.5) {};
			\node[shape=circle,draw=black] (h) at (-1,-3) {};
			\node[shape=circle,draw=black] (i) at (0,-3) {};
			\node[shape=circle,draw=black] (j) at (-1.5,-4) {};
			\node[shape=circle,draw=black] (k) at (-0.5,-5) {};
			\node[shape=circle,draw=black] (l) at (0.5,-4) {};
			\node[shape=circle,draw=black] (m) at (3,-0.5) {};
			\node[shape=circle,draw=black] (n) at (3,-1.5) {};
			
			\path [-] (a) edge (b);
			\path [-] (a) edge (c);
			\path [-] (a) edge (d);
			\path [-] (a) edge (e);
			\path [-] (a) edge (f);
			\path [-] (b) edge (c);
			\path [-] (b) edge (d);
			\path [-] (b) edge (e);
			\path [-] (b) edge (f);
			\path [-] (b) edge (m);
			\path [-] (b) edge (n);
			\path [-] (c) edge (d);
			\path [-] (c) edge (e);
			\path [-] (c) edge (f);
			\path [-] (c) edge (m);
			\path [-] (c) edge (n);
			\path [-] (d) edge (e);
			\path [-] (d) edge (f);
			\path [-] (d) edge (m);
			\path [-] (d) edge (n);
			\path [-] (e) edge (f);
			\path [-] (e) edge (g);
			\path [-] (e) edge (h);
			\path [-] (e) edge (i);
			\path [-] (f) edge (g);
			\path [-] (f) edge (h);
			\path [-] (f) edge (i);
			\path [-] (m) edge (n);
			\path [-] (h) edge (i);
			\path [-] (h) edge (j);
			\path [-] (h) edge (k);
			\path [-] (h) edge (l);
			\path [-] (i) edge (j);
			\path [-] (i) edge (k);
			\path [-] (i) edge (l);
			\path [-] (j) edge (l);
			\path [-] (j) edge (k);
			\path [-] (k) edge (l);
		\end{tikzpicture}
		\begin{tikzpicture}[every node/.style={fill=black,inner sep=1.5pt},scale=0.5]
			\node [shape=circle, white] (ph) at (0,-6){};
			\node [shape=circle, white] (ph1) at (1,-6){};
		\end{tikzpicture}
		\begin{tikzpicture}[every node/.style={fill=black,inner sep=1.5pt},scale=0.5]
			
			\node [shape=circle, white,label=above:{$B_r$}] (ph) at (0.5,1){};
			\node[shape=circle,draw=black, red] (a) at (0,0) {};
			\node[shape=circle,draw=black, red] (b) at (1,0) {};
			\node[shape=circle,draw=black, red] (c) at (2,-1) {};
			\node[shape=circle,draw=black, red] (d) at (1,-2) {};
			\node[shape=circle,draw=black, red] (e) at (0,-2) {};
			\node[shape=circle,draw=black, red] (f) at (-1,-1) {};
			
			\path [-] (a) edge (b);
			\path [-] (a) edge (c);
			\path [-] (a) edge (d);
			\path [-] (a) edge (e);
			\path [-] (a) edge (f);
			\path [-] (b) edge (c);
			\path [-] (b) edge (d);
			\path [-] (b) edge (e);
			\path [-] (b) edge (f);
			\path [-] (c) edge (d);
			\path [-] (c) edge (e);
			\path [-] (c) edge (f);
			\path [-] (d) edge (e);
			\path [-] (d) edge (f);
			\path [-] (e) edge (f);
			
			\draw (0.5,-1) circle[radius=1.8];
			
			\node[shape=circle,draw=black] (b') at (3.5,0) {};
			\node[shape=circle,draw=black] (c') at (4.5,-1) {};
			\node[shape=circle,draw=black] (d') at (3.5,-2) {};
			
			\path [-] (b') edge (c');
			\path [-] (c') edge (d');
			\path [-] (d') edge (b');
			
			\begin{scope}[xshift=4cm, yshift=-1cm, rotate=90]
				\node[draw, fill=none, minimum width=3cm, minimum height=2cm, transform shape, green] {};
			\end{scope}
			
			\path [thick] (2.3,-1) edge (3,-1);
			\path [thick] (5,-1) edge (5.7,-1);
			\path [thick] (-1.56,-2.51) edge (-1.03,-1.93);
			
			\node[shape=circle,draw=black] (b'') at (6.5,0) {};
			\node[shape=circle,draw=black] (c'') at (7.3,-1) {};
			\node[shape=circle,draw=black] (d'') at (6.5,-2) {};
			\node[shape=circle,draw=black, red] (n) at (8.5,-0.5) {};
			\node[shape=circle,draw=black, red] (m) at (8.5,-1.5) {};
			
			\draw (7.3,-1) circle[radius=1.6];
			
			\path [-] (b'') edge (c'');
			\path [-] (b'') edge (d'');
			\path [-] (b'') edge (n);
			\path [-] (b'') edge (m);
			\path [-] (c'') edge (d'');
			\path [-] (c'') edge (n);
			\path [-] (c'') edge (m);
			\path [-] (d'') edge (n);
			\path [-] (d'') edge (m);
			\path [-] (m) edge (n);
			
			\node[shape=circle,draw=black] (e') at (-1.5,-3.5) {};
			\node[shape=circle,draw=black] (f') at (-2.5,-2.5) {};
			
			\path [-] (e') edge (f');
			
			\begin{scope}[xshift=-2cm, yshift=-3cm, rotate=-45]
				\node[draw, fill=none, minimum width=2.3cm, minimum height=1.3cm, transform shape, green] {};
			\end{scope}
			\path [thick] (-0.35,-4.5) edge (-1.17,-3.79);
			\path [thick] (-3,-3.95) edge (-2.5,-3.4);
			
			\node[shape=circle,draw=black] (e'') at (-3,-5) {};
			\node[shape=circle,draw=black] (f'') at (-4,-4) {};
			\node[shape=circle,draw=black, red] (g) at (-4.5,-5) {};
			
			\path [-] (e'') edge (f'');
			\path [-] (e'') edge (g);
			\path [-] (g) edge (f'');
			
			\draw (-3.8,-4.7) circle[radius=1.1];
			
			\node[shape=circle,draw=black] (e''') at (0,-4.5) {};
			\node[shape=circle,draw=black] (f''') at (1,-4.5) {};
			\node[shape=circle,draw=black, red] (h) at (0,-5.5) {};
			\node[shape=circle,draw=black, red] (i) at (1,-5.5) {};
			
			\path [-] (e''') edge (f''');
			\path [-] (e''') edge (h);
			\path [-] (e''') edge (i);
			\path [-] (f''') edge (h);
			\path [-] (f''') edge (i);
			\path [-] (h) edge (i);
			
			\draw (0.5,-5) circle[radius=1];
			
			\node[shape=circle,draw=black] (h') at (0,-7.5) {};
			\node[shape=circle,draw=black] (i') at (1,-7.5) {};
			
			\path [-] (h') edge (i');
			
			\begin{scope}[xshift=0.5cm, yshift=-7.5cm, rotate=90]
				\node[draw, fill=none, minimum width=1.3cm, minimum height=2.3cm, transform shape] {};
			\end{scope}
			
			\path [thick] (0.5,-6.85) edge (0.5,-6);
			
			\node[shape=circle,draw=black] (h'') at (0,-9.5) {};
			\node[shape=circle,draw=black] (i'') at (1,-9.5) {};
			\node[shape=circle,draw=black, red] (j) at (-0.5,-10.5) {};
			\node[shape=circle,draw=black, red] (k) at (1.5,-10.5) {};
			\node[shape=circle,draw=black, red] (l) at (0.5,-11.5) {};
			
			\path [-] (h'') edge (i'');
			\path [-] (h'') edge (j);
			\path [-] (h'') edge (k);
			\path [-] (h'') edge (l);
			\path [-] (i'') edge (j);
			\path [-] (i'') edge (k);
			\path [-] (i'') edge (l);
			\path [-] (j) edge (k);
			\path [-] (j) edge (l);
			\path [-] (k) edge (l);
			
			\draw (0.5,-10.5) circle[radius=1.5];
			\path [thick] (0.5,-9) edge (0.5,-8.15);
		\end{tikzpicture}
	\end{center}
	\caption{Block-cut tree $\mathcal T$ of a thick tree. Block nodes correspond to maximal cliques (circle bags). Cut nodes correspond to clique separators (rectangle bags). Distinct separator cliques are disjoint. Red vertices correspond to the vertices assigned to each bag (their ``highest" appearance on the tree). The bag $B_r$ is the root of the tree and the green rectangles the set $\text{Child}_\mathcal T(B_r)$.}
	\label{fig: block-cut tree}
\end{figure}

We now proceed in analyzing the notation on the block-cut tree of some graph that we use throughout our algorithm for computing the desired partition on some thick tree.

The block-cut tree $\mathcal T$ is a tree, and every vertex of $G$ appears in a connected set of nodes of $\mathcal T$. We root $\mathcal T$ at an arbitrary block node $B_r$. For a node $x\in\mathcal X$, we denote by $p_{\mathcal T}(x)$ its parent in the rooted tree (unless $x=B_r$), by $\text{Child}_{\mathcal T}(x)$ the set of its children (unless $x$ is a leaf), and by $\mathcal T_x$ the subtree rooted at $x$. Moreover, we denote by $V_x$ the set of vertices of $G$ that appear in nodes of $\mathcal T_x$.

For a block node $B\neq B_r$ we write $C_P(B)=p_{\mathcal T}(B)$ for its parent cut node. This is a clique separator which contains all the edges $V(\mathcal T_B)$ and the rest of the graph, so there are no edges between different child subtrees of a cut node except through its vertices. 
Since a vertex of a minimal separator clique belongs to several adjacent blocks, we assign every vertex to a unique node of $\mathcal T$ in order to avoid double counting: a vertex that lies in a minimal separator is assigned to the corresponding cut node, and every other vertex is assigned to the unique block whose maximal clique contains it. For a block node $B$ we denote by $R(B)$ the set of vertices assigned to $B$. These vertices are called \emph{private} vertices of $B$, that is, the vertices of the clique of $B$ that belong to no separator. For any subtree $\mathcal T_x$, let $A_x$ be the set of vertices that are assigned to $x$ or to some descendant of $x$. The values of our dynamic programming states are computed using the vertices of $A_x$, which ensures that each vertex is considered exactly once.

\paragraph*{Structure of an optimal partition}
Recall from Section~\ref{section: well-partitioned chordal} since thick trees form a subclass of well partitioned graphs, that there is an optimal partition of $G$ for MDGP in which every part induces either a clique or a generalized star.
In the following lemma we show that the center of each generalized star is entirely contained in a cut node. 

\begin{lemma}\label{lemma: center on separator}
	Let $P$ be a part of the optimal partition that induces a generalized star with at least two leaves, and let $P_0$ be its center. Then there is a minimal clique separator $C$ of $G$ with $P_0\subseteq C$, and every leaf of $P$ lies in a distinct block incident to the cut node of $C$.
\end{lemma}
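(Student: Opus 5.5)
We argue purely structurally, using the thick-tree property and the definition of a generalized star. Let $P$ have center $P_0$ and leaves $P_1,\dots,P_r$ with $r\ge 2$.

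The first step is to show that every leaf vertex is adjacent to all of $P_0$. Fix a leaf $P_i$ and $v\in P_i$. Since $N_G(P_i)\cap P\subseteq P_0$ and $P_0$ is a clique, the set $N(v)\cap P_0$ separates $v$ from the vertices of every other leaf $P_j$ inside $G[P]$. Extending to $G$, we take a vertex $u\in P_j$ and a minimal $v$--$u$ separator $S_v$ of $G$ contained in the clique $N(v)\cap N(u)$ together with vertices outside $P$. Because $G$ is chordal, minimal separators are cliques.

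The intended use of thick-tree-ness is the following. Consider two vertices $v,v'$ (in the same or different leaves) and the minimal separators of $G$ between $P_i$ and $P_j$ that meet $P_0$. Each vertex of $P_i$ has more than $|P_0|/2$ neighbours in $P_0$, and likewise for $P_j$. Hence the traces of these separators on $P_0$ pairwise intersect. Since minimal separators of a thick tree are equal or disjoint, all of them coincide with a single minimal separator $C$.

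The second step is to show $P_0\subseteq C$, and this is the part I expect to be the main obstacle. A vertex $w\in P_0\setminus C$ would lie in a block on one side of $C$. Its neighbourhood toward the leaves on the other side would then pass through $C$ only, so $w$ is nonadjacent to those leaves. To get a contradiction from this, I would first try an exchange argument on the optimum. If $P_0\not\subseteq C$, then $w$ together with the leaves on its side forms a structure that can be split off, and the computation behind Lemma~\ref{lemma:star-decomposition} shows the density does not drop. Alternatively, I would show that, since every leaf vertex sees more than half of $P_0$ and $r\ge 2$, the vertex $w$ must see some leaf on the far side, which is impossible without crossing $C$. I would pursue the second route first because it avoids optimality: choose two leaves in distinct components of $G-C$; each sees a majority of $P_0$, and both majorities must lie in $C$ because a common neighbour outside $C$ would connect the two components.

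With this in hand, the set $P_0\cap C$ carries a majority of $P_0$ for every leaf. The remaining vertices of $P_0$ lie in a single block. If some leaf also sits in that block, the whole part lies in one maximal clique plus $C$, and I would handle this case by redefining the center to be $C\cap P$.

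The final step is the leaf placement. Each leaf is a clique disjoint from $P_0\supseteq$ its attachment, so it lies in $N[C]\setminus C$ within a single component of $G-C$, hence in one block incident to $C$. If two leaves lay in the same block, they would be adjacent, contradicting $N_G(P_i)\cap P\subseteq P_0$. Equivalently, such leaves could be merged into one leaf, which we may assume has already been done in the optimal solution.
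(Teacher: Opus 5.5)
There is a genuine gap, and it begins before the step you flag as the obstacle.

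\textbf{The first step is never proved.} You announce that every leaf vertex is adjacent to all of $P_0$, but the next paragraph switches to separators instead. Its key inference, that the traces on $P_0$ pairwise intersect, does not follow from the majority condition. That condition only guarantees that a minimal $v$--$u$ separator contains $N(v)\cap N(u)\cap P_0\neq\emptyset$. For different pairs these sets can be disjoint: with $P_0=\{x_1,\dots,x_4\}$, take neighbourhoods $\{x_1,x_2,x_3\},\{x_1,x_2,x_4\}$ for one pair and $\{x_2,x_3,x_4\},\{x_1,x_3,x_4\}$ for another.

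\textbf{For a given center, your claims are false.} Both your first step and the conclusion $P_0\subseteq C$ fail. Take the graph of Figure~\ref{fig: generalized cut center}: $b,c$ are adjacent to everything and $a,d,e$ are pairwise non-adjacent. It is a thick tree whose only minimal separator is $\{b,c\}$. The single part $V$ is optimal (density $7/5$). It induces a generalized star with center $P_0=\{a,b,c\}$ and leaves $\{d\},\{e\}$, since each leaf vertex has $2=\lfloor 3/2\rfloor+1$ neighbours in $P_0$. Yet $d$ is not adjacent to $a$, and $P_0\not\subseteq\{b,c\}$. Your \emph{second route} fails here: $w=a\in P_0\setminus C$ sees no leaf at all. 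Also, no leaf shares the block $\{a,b,c\}$ with $a$, which is exactly the subcase your re-centering remark leaves open. Moreover, separation forces only $N(v)\cap N(u)$, and at least one of the two neighbourhoods, into $C$. It does not force both majorities, and it applies only to the two chosen leaves, not to every leaf. Your leaf-placement step rests on the same missing fact, since a leaf lies in a block incident to $C$ only if $P_i\cup C$ is a clique.

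\textbf{What is actually needed.} The statement can only hold after re-choosing the center/leaf decomposition of $P$. In the example, that means center $\{b,c\}$ with leaves $\{a\},\{d\},\{e\}$, as the figure indicates. The real work is to show, using that minimal separators are equal or disjoint, that after the re-choice every leaf is completely joined to the center. For instance, suppose two leaf vertices $v,u$ have distinct proper $P_0$-neighbourhoods $A,B$, and say $x\in B\setminus A$. Then a minimal $v$--$x$ separator contains $A$ but not $x$. A minimal $u$--$w'$ separator with $w'\in P_0\setminus B$ contains $B\ni x$. The two separators meet in $A\cap B\neq\emptyset$ yet differ, which is impossible in a thick tree. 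Hence all partial neighbourhoods equal one set $A$. One then re-centers at $A$, turning $P_0\setminus A$ into a leaf together with the at most one leaf whose vertices see all of $P_0$.

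\textbf{The paper's route.} The paper does not prove complete adjacency inside this proof either. It imports it from the remark at the end of Section~\ref{section: well-partitioned chordal}, which by the same example also needs a suitable choice of center. It then concludes by a route you never use. Each $P_0\cup P_i$ is a clique, hence lies in a block $B_i$, and these blocks are distinct because leaves are non-adjacent. The blocks containing the clique $P_0$ form a subtree of the block-cut tree, and every separator on that subtree contains $P_0$. By thick-tree-ness they all equal a single $C\supseteq P_0$, and each $B_i$ is incident to $C$. Once complete adjacency is available, this subtree argument replaces your separator-between-leaves construction, which by itself cannot control $P_0\setminus C$. Without it, your proposal does not establish the lemma.
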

\begin{proof}
	Let $P_1,\ldots, P_r$ with $r\geq 2$ be the leaves of $P$. Each $P_i$ is completely joined to $P_0$ and disjoint from it, so $P_0\cup P_i$ is a clique and is therefore contained in a maximal clique, that is, a block $B_i\supseteq P_0$, where $B_i$ is the block node for that contains the vertices of $P_i$ for all $i\in [r]$. 
	Since the leaves of a generalized star are pairwise non-adjacent, hence the blocks $B_1,\ldots, B_r$ are pairwise distinct. 
	Due to the definition of the block-cut tree, the blocks that contain the clique $P_0$ form a connected subtree $\mathcal T_{P_0}$ of $\mathcal T$ that contains every $B_i$. Since $r\geq 2$, the subtree $\mathcal T_{P_0}$ has at least two blocks. 
	The separator associated with any edge between two adjacent blocks $B, B'$ of $\mathcal T_{P_0}$ is $B\cap B'\supseteq P_0$, hence contains the non-empty set $P_0$. Therefore no two of these separators are disjoint,
	and since $G$ is a thick tree they are pairwise equal, to a single minimal clique
	separator $C\supseteq P_0$. Consequently, every block of $\mathcal T_{P_0}$ therefore has a neighbour in $\mathcal T_{P_0}$ through the separator $C$, and thus contains $C$. In particular each $B_i\supseteq C$, where $C$ is the node that contains $P_0$. Consequently $P_0\subseteq C$ and the leaves lie in the distinct blocks $B_1,\ldots, B_r$, all incident to the cut node of $C$.
\end{proof}

By Lemma~\ref{lemma: center on separator} we may assume that the center of every generalized star is a subclique of some cut node $C$, and that the star has at most one leaf in each of the blocks incident to $C$ (the child blocks of $C$ and its parent block). 
In particular a clique is contained in a single block, and the only way a part crosses a separator $C$ is either as a clique that uses some vertices of $C$ but is finalized inside one incident block, or as the generalized star centered at $C$, whose leaves below are in some child blocks of $C$ and whose one additional leaf, if any, is supplied by the parent block of $C$. 
This, together with the following lemma, which proves that at most one of a block node's children can have vertices that are the center of a generalized star with leaves both in the upward and in the downward part of the block-cut tree, is what allows the polynomial size of our states.

\begin{lemma}\label{lemma: at most one}
	Let $B\in \mathcal B$ be a block node of $\mathcal T$. Then there exists a partition that satisfies the optimal dense partition value of $G$ such that for at most one node $u\in\text{Child}_{\mathcal T}(B)$, the center of a generalized star that has leaves in both $V_B\setminus B$ and $B$ lies in $u$.
\end{lemma}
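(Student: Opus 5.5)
We argue by an exchange argument on an optimal partition chosen to be extremal. Fix $B$ and take an optimal partition $\mathcal P$ in which every part is a clique or a generalized star (Theorem~\ref{thm::genstars}). By Lemma~\ref{lemma: center on separator}, each star center lies inside a single cut node, and its leaves lie in distinct blocks incident to that cut node. Among all such optimal partitions, we pick one minimizing the number of children $u\in\text{Child}_{\mathcal T}(B)$ whose vertices contain the center of a generalized star with leaves both in $V_B\setminus B$ and in $B$. Call such a child, and its star, \emph{crossing}.

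Suppose two distinct children $u_1,u_2$ are crossing, with stars $S_1,S_2$ and centers $Z_1\subseteq u_1$, $Z_2\subseteq u_2$. Since each star has at most one leaf per block incident to its cut node, $S_j$ has exactly one leaf $L_j\subseteq B$, namely in the block $B$ itself. Its remaining leaves lie in child blocks of $u_j$, hence in $V_{u_j}\setminus B$. Because distinct separators of a thick tree are disjoint, $u_1\cap u_2=\emptyset$. The whole clique $B\supseteq u_1\cup u_2\cup L_1\cup L_2$ is completely joined. We now compare the following local modifications, which touch only $S_1\cup S_2$ and leave all other parts unchanged.
\begin{enumerate}
\item Detach $L_1$ from $S_1$ and merge it into $S_2$ as an enlarged leaf $L_2\cup L_1$. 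This keeps $S_2$ a generalized star, since $L_1\subseteq B$ is fully adjacent to $Z_2$ and to $L_2$.
\item The symmetric move, detaching $L_2$ and merging it into $S_1$.
\item Detach both $L_1$ and $L_2$ and make $L_1\cup L_2$ its own clique part.
\end{enumerate}
Each move removes at least one crossing child. If one of them has density at least $d(S_1)+d(S_2)$, we contradict minimality.

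The key computation uses the density formula for thick-tree stars, $d=\frac{Q+lL+\binom{l}{2}}{L+l}$. Write $S_j$ as its "rest" $R_j$ (center plus lower leaves) together with a leaf of size $a_j$ fully joined to a center of size $l_j$. Adding a fully joined clique of size $a$ to a star with $n$ vertices and $m$ edges increases the numerator by $\binom a2+al+(\text{edges to the other leaf})$, giving a mediant-type expression. The plan is a convexity/averaging argument. Let $\Delta_1$ be the density change when $L_1$ is moved from $S_1$ to $S_2$, and $\Delta_2$ the change for the symmetric move. We aim to show $\Delta_1+\Delta_2\ge 0$, so one of the two moves does not decrease density. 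The extra edges gained by merging $L_1$ with $L_2$ (namely $a_1a_2$ new edges) help here. Move 3 should serve as a fallback when the centers are small relative to the leaves. This mirrors the two-partition trick in Lemma~\ref{lemma:star-decomposition}: compare two candidate partitions and show they cannot both fail, by combining the failing inequalities into a contradiction such as $N<r+p_{\max}-1$.

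The main obstacle is this inequality. Removing a leaf from a star can raise or lower its density depending on whether $\frac{\binom{a}{2}+al}{a}$ exceeds the star density, so a sign analysis is needed. I would first treat the case where $L_j$ has below-average contribution to $S_j$ for some $j$. Then simply removing $L_j$, and making it a separate clique or attaching it to the other star, is non-decreasing, as in the $d(\Sigma)\ge d(G)$ equivalence of Lemma~\ref{lemma:star-decomposition}. In the remaining case both leaves are above average, so $d(S_j)<\frac{a_j-1}{2}+l_j$. Here I would show that attaching $L_1$ to $L_2$ gains at least $a_1a_2/(n_2+a_1)$ in the edge count, enough to compensate the loss, by cross-multiplying as in \eqref{eq:crit}. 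Iterating the exchange reduces the number of crossing children until at most one remains, which proves the lemma.
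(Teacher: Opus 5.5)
\textbf{Gap: the inequality your plan rests on is false, and you never specify which leaf to move.} Your exchange setup matches the paper's: an extremal optimal partition, moving a $B$-leaf $L_1$ into the other crossing star merged with $L_2$, or splitting it off. The quantitative core does not hold, and this does not depend on optimality, since $\Delta_1+\Delta_2\ge 0$ is an unconditional claim about two stars.

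Here is a counterexample. Let $B$ consist of the disjoint separators $u_1$ with $|u_1|=4$ and $u_2=\{z\}$, plus four private vertices. Let $S_1$ have center $u_1$, one downward leaf $\{x\}$ (child block $u_1\cup\{x\}$), and $L_1$ equal to three private vertices of $B$. Let $S_2$ have center $z$, ten downward singleton leaves (child blocks $\{z,x_i\}$), and $L_2=\{w\}$, the remaining private vertex.

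Then $d(S_1)=25/8$, $d(S_1\setminus L_1)=2$, $d(S_2)=11/12$, $d(S_2\setminus L_2)=10/11$, $d(S_2\cup L_1)=4/3$ and $d(S_1\cup L_2)=32/9$. Hence $\Delta_1=-17/24$ and $\Delta_2=335/792$, so $\Delta_1+\Delta_2<0$.

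Both leaves are above average: $d(S_1)<1+4$ and $d(S_2)<0+1$. So this is exactly your remaining case. There, attaching $L_1$ to $L_2$ strictly loses density: the extra $a_1a_2/(n_2+a_1)=1/5$ cannot compensate for stripping a leaf from a star with a large center. Your fallback to move 3 is never made precise, so the plan does not close.

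\textbf{The missing idea is the selection rule.} Set $T_j=d(S_j)-d(S_j\setminus L_j)$ and move the leaf with the smaller per-vertex contribution, say $T_1/a_1\le T_2/a_2$ (in the example this correctly moves $L_2$). With $n_2=|S_2|$, the gain is $d(S_2\cup L_1)-d(S_2)=\frac{a_1(l_2-d(S_2))+\binom{a_1}{2}+a_1a_2}{n_2+a_1}$. Substitute $d(S_2)=d(S_2\setminus L_2)+T_2$ and bound $T_1/a_1$ by $T_2/a_2$. After cross-multiplying, the requirement $T_1\le d(S_2\cup L_1)-d(S_2)$ reduces exactly to $T_2\le a_2/2$.

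That bound holds because $S_2\setminus L_2$ contains its center and at least one downward leaf. This gives $l_2-d(S_2\setminus L_2)<\frac{|S_2\setminus L_2|+1}{2}$. This is the paper's argument, and it needs neither your above/below-average split nor the third move.

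A smaller point: minimize the number of crossing stars rather than crossing children. If $u_1$ hosts two crossing stars, detaching one $B$-leaf need not decrease the number of crossing children, so your claim that each move removes a crossing child can fail.
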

\begin{proof}
	Let $G$ be a thick tree and $\mathcal T$ be its block-cut tree. Let also $B$ be a block node of $\mathcal T$ and assume, in order to reach a contradiction, that there are $t\geq 2$ vertices $c_1,\ldots c_t\in \text{Child}_{\mathcal T}(B)$ for some block node $B$ that are the centers of $t$ generalized stars having leaves in both $V_{\text{Child}_{\mathcal T}(B)}$ and $B$, in an optimal dense partition $D^*$. Assume also that among all the partitions with equal density, $D^*$ is the one that minimizes the number of such stars.
	Denote those $t$ generalized stars and by $S_1,\ldots, S_t$, by $B_1,\ldots, B_t$ the parts of $B$ that also belong in $S_i$ and not in $c_i$ respectively for $i\in [t]$ and by $C_i$ the vertices of the center of each star (that are also vertices of $B$ by the definition of the block cut tree). We focus on the subgraph $G[S_1\cup S_2]$.
	
	Denote by $S_1'$ and $S_2'$ the graphs $S_1-B_1$ and $S_2-B_2$ respectively and also by $T_1=d(S_1)-d(S_1')$ and $T_2=d(S_2)-d(S_2')$ the contribution of the leaves in $B$ to the density of each star. This equals to:
	\begin{eqnarray*}
		T_i&=&d(S_i)-d(S_i')=\frac{m_i}{n_i}-\frac{m_i'}{n_i'}=\frac{m_i'+b_ic_i+\binom{b_i}{2}}{n_i'+b_i}-\frac{m_i'}{n_i'}\\
		&=&\frac{n_i'm_i'+n_i'b_ic_i+n_i'\binom{b_i}{2}-n_i'm_i'-b_i'm_i'}{n_i'(n_i'+b_i)}=\frac{n_i'b_ic_i+n_i'\binom{b_i}{2}-b_im_i'}{n_i'(n_i'+b_i)}\\
		&=&\frac{b_ic_i+\binom{b_i}{2}-b_id(S_i')}{n_i'+b_i}=\frac{b_i(c_i-d(S_i'))+\binom{b_i}{2}}{n_i'+b_i}\\
	\end{eqnarray*}
	Notice that $T_{i}<\frac{b_i}{2}$ because the center is inside $S_{i}'$, meaning $m_{i}'\geq\binom{c_{i}}{2}$, hence $d(S_{i}')\geq\frac{c_{i}(c_{i}-1)}{2n_{i}'}$. This, combined with the fact that 
	$c_i\leq n_i'-1$ gives us the desired result.
	We now consider the following cases based on the size of the center of the stars.
	\begin{enumerate}
		\item There is a $C_i$, say $C_1$, such that $|V(C_1)|\leq d(S_1')$. Then,
		\begin{eqnarray*}
			d(S_1)\leq d(S_1')+\frac{|V(B_1)|-1}{2}&\Leftrightarrow&\frac{b_1(c_1-d(S_1'))+\binom{b_1}{2}}{n_1'+b_1}\leq\frac{b_1-1}{2}\Leftarrow\\
			& \Leftarrow& \frac{\binom{b_1}{2}}{n_1'+b_1}\leq\frac{b_1-1}{2}\Leftrightarrow\\
			& \Leftrightarrow&b_1(b_1-1)\leq (n_1'+b_1)(b_1-1)\Leftrightarrow\\
			& \Leftrightarrow&0\leq n_1'(b_1-1)
		\end{eqnarray*}
		Hence in this case we create the partition $S_1', B_1, S_2,\ldots, S_t$ which has at least equal density as $D^*$ but less stars with leaves both upwards and downwards on the block-cut tree which contradicts our original assumption.
		
		\item $|V(C_i)|\geq d(S_i')$ for both $i\in[2]$. 
		
		In this case we consider the contribution of each leaf in $B$ to the corresponding star, $\frac{T_i}{b_i}$, and we move the leaf that contributes less to its star to the other one. Specifically, if $\frac{T_1}{b_1}\leq \frac{T_2}{b_2}$ we consider the partition $S_1', S_2\cup B_1$ and prove that
		\begin{eqnarray*}
			d(S_1)+d(S_2)\leq d(S_1')+d(S_2\cup B_1)\Leftrightarrow\\
			T_1\leq d(S_2\cup B_1)-d(S_2)\Leftrightarrow\\
			\frac{b_1(c_1-d(S_1'))+\binom{b_1}{2}}{n_1'+b_1}\leq \frac{b_1(c_2-d(S_2))+\binom{b_1}{2}+b_1b_2}{n_2+b_1}\Leftrightarrow\\
			\frac{c_1-d(S_1')+\frac{b_1-1}{2}}{n_1'+b_1}\leq \frac{c_2-d(S_2)+\frac{b_1-1}{2}+b_2}{n_2+b_1}\overset{\frac{T_1}{b_1}\leq \frac{T_2}{b_2}}{\Leftarrow}\\
			\frac{c_2-d(S_2')+\frac{b_2-1}{2}}{n_2'+b_2}\leq \frac{c_2-d(S_2)+\frac{b_1-1}{2}+b_2}{n_2+b_1}\Leftrightarrow\\
			(n_2'+b_2+b_1)\left(c_2-d(S_2')+\frac{b_2-1}{2}\right)\leq (n_2'+b_2)\left(c_2-d(S_2)+\frac{b_1-1}{2}+b_2\right)\Leftrightarrow\\
            (n_2'+b_2)\left(T_2+\frac{b_2-b_1}{2}-b_2\right)\leq -b_1\left(c_2-d(S_2')+\frac{b_2-1}{2}\right)\Leftrightarrow\\
            (n_2'+b_2)\left(T_2+\frac{b_2-b_1}{2}-b_2\right)\leq-b_1\left(\frac{(n_2'+b_2)T_2}{b_2}\right)\Leftrightarrow\\
            T_2(n_2'+b_2)\left(1+\frac{b_1}{b_2}\right)\leq -(n_2'+b_2)\left(-\frac{b_1+b_2}{2}\right)\Leftrightarrow\\
            T_2\leq\frac{\frac{b_1+b_2}{2}}{\frac{b_1+b_2}{b_2}}\Leftrightarrow\\
            T_2\leq\frac{b_2}{2}
		\end{eqnarray*}
        which holds as proven above.
		Hence we have again found a partition, namely $S_1', S_2\cup B_1,S_3,\ldots, S_t$ with density at least equal to $D^*$ but with one generalized star less having leaves both upwards and downwards the block cut tree which contradicts the minimality assumption.
	\end{enumerate}   
\end{proof}

We denote by $\Phi(l,L,Q):=\frac{\binom{l}{2}+lL+Q}{l+L}$ the density of a generalized star whose center clique has $l$ vertices and whose leaves have $L$ vertices and $Q$ internal edges in total, every leaf being completely joined to the center. Whenever a set of $h$ vertices of a single block remains unused by the reservations and star structures described below, it is optimal to keep them as one clique part. We denote its contribution by $\text{Left-clq}(h)$, which is $0$ if $h=0$ and $\frac{h-1}{2}$ if $h\geq 1$.

\paragraph*{Cut nodes} Let $C\in\mathcal C$ with children block nodes $\text{Child}_{\mathcal T}(C)=\{B_1,\ldots, B_d\}$. For an integer $t\in\{0,\ldots, |C|\}$, the number of vertices of $C$ reserved for the parent block of $C$, we define two families of values.

$f_C(\text{fin}, t)$ is the maximum dense partition value of a partition of $G[A_C]$ that uses every vertex of $C$ except the $t$ reserved ones, in which every part is finalized, that is, contained in $A_C$. This includes the option of finalizing a generalized star centered at $C$ that has no leaf in the parent block.
Parameter $t$ here plays the role of both the up-clq and up-leaf states of the block-graph algorithm of Hanaka et al.~\cite{DBLP:journals/jco/HanakaIO25}. It basically captures how many vertices of $C$ are handed to the parent block, while the decision of whether those vertices are absorbed into a clique, attached as a leaf of a generalized star, or passed further up is deferred to the block recurrence, where the vertices in question lie in a single maximal clique and merging them is always optimal.

$f_C(\text{star}, t, l, L, Q)$ is the maximum dense partition value of such a partition when, in addition, exactly one generalized star centered at $C$ is left \emph{open}. This means that its center consists of $l\geq 1$ non reserved vertices of $C$, and the leaves it has already collected from the child blocks amount to $L$ vertices and $Q$ internal edges. The open star is not counted in the value. It is completed at the parent block by attaching one further leaf clique of size $z\geq 0$, where it contributes $\Phi\left(l, L+z, Q+\binom{z}{2}\right)$.

In Figure~\ref{fig: generalized cut center} it is illustrated why it suffices to only consider possible star centers in the cut nodes of the block-cut tree.
\begin{figure}
	\begin{center}
		\begin{tikzpicture}[every node/.style={fill=black,inner sep=1.5pt},scale=1]
			
			\node[shape=circle,draw=black, label=above:$a$] (a) at (0,0) {};
			\node[shape=circle,draw=black, label=left:$b$] (b) at (-1,-1) {};
			\node[shape=circle,draw=black, label=right:$c$] (c) at (1,-1) {};
			\node[shape=circle,draw=black, label=left:$d$] (d) at (-2,-2) {};
			\node[shape=circle,draw=black, label=right:$e$] (e) at (2,-2) {};
			\node[shape=circle, white] (ph) at (2,-4) {};
			
			\path [-] (a) edge (b);
			\path [-] (a) edge (c);
			\path [-] (b) edge (c);
			\path [-] (b) edge (d);
			\path [-] (c) edge (d);
			\path [-] (b) edge (e);
			\path [-] (c) edge (e);
		\end{tikzpicture}
		\begin{tikzpicture}[every node/.style={fill=black,inner sep=1.5pt},scale=0.8]
			\node[shape=circle, white] (ph) at (2,-4) {};
			\node[shape=circle, white] (ph1) at (3,-4) {};
		\end{tikzpicture}
		\begin{tikzpicture}[every node/.style={fill=black,inner sep=1.5pt},scale=0.7]
			
			\node[shape=circle,draw=black, label=above:$a$] (a) at (0,0) {};
			\node[shape=circle,draw=black, label=left:$b$] (b) at (-1,-1) {};
			\node[shape=circle,draw=black, label=right:$c$] (c) at (1,-1) {};
			
			\path [-] (a) edge (b);
			\path [-] (a) edge (c);
			\path [-] (b) edge (c);
			
			\draw (0,-0.5) circle[radius=1.7];
			
			\node[shape=circle,draw=black, label=left:$b$] (b') at (-1,-3.5) {};
			\node[shape=circle,draw=black, label=right:$c$] (c') at (1,-3.5) {};
			
			\path [-] (b') edge (c');
			
			\begin{scope}[xshift=0cm, yshift=-3.5cm, rotate=90]
				\node[draw, fill=none, minimum width=1.2cm, minimum height=3cm, transform shape] {};
			\end{scope}
			\path [thick] (0,-2.9) edge (0,-2.2);
			\path [thick] (0,-4.1) edge (-1.25,-5);
			\path [thick] (0,-4.1) edge (1.25,-5);
			
			\node[shape=circle,draw=black, label=left:$b$] (b'') at (-3,-6) {};
			\node[shape=circle,draw=black, label=right:$c$] (c'') at (-1,-6) {};
			\node[shape=circle,draw=black, label=left:$d$] (d) at (-2,-7) {};
			
			\path [-] (b'') edge (d);
			\path [-] (b'') edge (c'');
			\path [-] (d) edge (c'');
			
			\draw (-2,-6.3) circle[radius=1.5];
			
			\node[shape=circle,draw=black, label=left:$b$] (b''') at (1,-6) {};
			\node[shape=circle,draw=black, label=right:$c$] (c''') at (3,-6) {};
			\node[shape=circle,draw=black, label=right:$e$] (e) at (2,-7) {};
			
			\path [-] (b''') edge (e);
			\path [-] (b''') edge (c''');
			\path [-] (e) edge (c''');
			
			\draw (2,-6.3) circle[radius=1.5];
		\end{tikzpicture}
	\end{center}
	\caption{The optimal partition here is considering the whole graph as one part. This can be viewed as a generalized star with center $\{b, c\}$ and leaves $\{a,d,e\}$ or equivalently as a generalized star with center $\{a,b,c\}$ and leaves $\{d,e\}$ and hence it suffices to only consider star centers in cut nodes.}
	\label{fig: generalized cut center}
\end{figure}

\paragraph*{Block nodes} Let $B\in\mathcal B$ with parent cut node $C_P(B)$. For $u\in\{0,\ldots, |C_P(B)|\}$, the number of vertices of $C_P(B)$ that $B$ may use as private, and for an integer $y\geq 0$, the size of a leaf clique contained in $B$ participating to a generalized star centered at $C_P(B)$. We define $g_B(u, y)$ to be the maximum dense partition value obtained on $G[A_B]$ together with the $u$ potentially private vertices, where the $y$ vertices are set aside and not counted, since they are counted through $\Phi$ once the star at $C_P(B)$ is completed, and every other part is finalized. For the root block $B_r$ we only use the value $g_{B_r}(0,0)$.

In contrast to the block-graph algorithm of Hanaka et al.~\cite{DBLP:journals/jco/HanakaIO25}, whose block states carry the center of a star (the states sc and scu), here a block never hosts a star center as proven in Lemma~\ref{lemma: center on separator} every generalized star is centered on a separator, so a block only ever gives a single leaf upward and receives vertices from its parent cut.

\paragraph*{Recurrences}
We compute the values bottom-up in $\mathcal T$ and each value $f_C(\cdot)$ and $g_B(\cdot)$ is computed once.

\subparagraph{Cut node recurrence} Let $C$ be a cut node that has children block nodes $B_1,\ldots, B_d$ and set $\bar t=|C|-t$. Remember that $t\in\{0,\ldots, |C|\}$ is the number of vertices of $C$ reserved for the parent block of $C$. 
Each child $B_j$ decides how many vertices $s_j\geq 0$ of $C$ to use into its own subtree and, when a star is centered at $C$, one leaf clique of size $y_j\geq 0$. Its contribution is $g_{B_j}(s_j, y_j)$. Since two minimal separators are either equal or disjoint and every vertex of $C$ plays the same role with respect to the incident blocks, it suffices to track cardinalities and not subsets.

When no open star is centered at $C$, every child contributes $y_j=0$ and the leftover vertices of $C$ form one clique:
\begin{eqnarray*}
	h_C(t)&=&\max_{\substack{s_1,\ldots, s_d\geq 0\\ s_1+\cdots+s_d\leq \bar t}}\left(\sum_{j=1}^d g_{B_j}(s_j, 0)\right)+\text{Left-clq}\left(\bar t-\sum_{j=1}^d s_j\right).
\end{eqnarray*}
When a star of center size $l$ is centered at $C$, we sum up over the children: over all choices $(s_j, y_j)_{j\in[d]}$ with $\sum_{j=1}^d s_j\leq \bar t-l$ we set $S=\sum_{j=1}^d s_j$, $L=\sum_{j=1}^d y_j$, $Q=\sum_{j=1}^d\binom{y_j}{2}$, and we let the remaining $r=\bar t-l-S$ vertices of $C$ form one clique part.
\begin{eqnarray*}
	f_C(\text{star}, t, l, L, Q)&=&\max\left(f_C(\text{star}, t, l, L, Q), \sum_{j=1}^d g_{B_j}(s_j,y_j) +\text{Left-clq}(r)\right).
\end{eqnarray*}
We do not need to also treat the leftover vertices as an extra leaf of the star: since they lie in $C$ and are therefore adjacent to every leaf belonging in a child block, placing them in the center instead is at least as good and is already covered by a larger value of $l$.
Finally, a star centered at $C$ that receives no leaf from the parent is finalized here, so
\begin{eqnarray*}
	f_C(\text{fin}, t)&=&\max\left(h_C(t), \max_{l, L, Q}\left[f_C(\text{star}, t, l, L, Q)+\Phi(l, L, Q)\right]\right).
\end{eqnarray*}

\subparagraph{Block node recurrence} Let $B$ be a block node and let it have children cut nodes $C_1,\ldots, C_e$ and write $\rho_{B(u)}=|R(B)|+u$. By Lemma~\ref{lemma: at most one}, $B$ completes at most one open star imported from a child cut, so we maximize over the choice of the child $C_b\in\{C_1,\ldots, C_e\}\cup\{\varnothing\}$ whose star $B$ completes. For a fixed such choice let
\begin{eqnarray*}
	F_b(S)&=&\max_{\substack{(s_i)_{i:\,C_i\neq C_b}\\ \sum_i s_i=S}}\sum_{i:\,C_i\neq C_b} f_{C_i}(\text{fin}, s_i)
\end{eqnarray*}
be the best combined finalized value of the other children when they reserve $S$ vertices for $B$ in total. Then $g_B(u, y)$ is the maximum over both of the following. If $C_b=\emptyset$, no imported star is completed and every remaining vertex in $\rho_{B(u)}$ that is not participating in the imported star joins one clique:
\begin{eqnarray*}
	g_B(u,y)&=&F_\emptyset(S)+\text{Left-clq}\left(\rho_{B(u)}+S-y\right),\qquad \rho_{B(u)}+S\geq y.
\end{eqnarray*}
Otherwise, $B$ completes the star of $C_b$: we choose the number $s_b$ of $C_b$-vertices drawn into $B$, an open triple $(l, L, Q)$ of $C_b$, a closing leaf of size $z\geq 1$ taken from $B$'s pool, and the contribution $y$, and with $\rho=\rho_{B(u)}+S+s_b$,
\begin{eqnarray*}
	g_B(u, y)&=&F_b(S)+f_{C_b}(\text{star}, s_b, l, L, Q)+\Phi\left(l, L+z, Q+\binom{z}{2}\right)+\text{Left-clq}\left(\rho-y-z\right),
\end{eqnarray*}
ranging over all $S, s_b, (l, L, Q), z\geq 1$ and $y$ with $y+z\leq \rho$. All the vertices of $\rho$, that is, the private vertices of $B$, the vertices from the parent and the vertices drawn up from the child cuts, that are not used to close a star and not part of a star lie in the clique of $B$. Hence, merging them into a single clique part is optimal, which is what $\text{Left-clq}$ accounts for.

\subparagraph{Root} At the root block $B_r$ no vertices may be reserved upward and nothing is donated, so
\begin{eqnarray*}
	\text{OPT}(G)&=&g_{B_r}(0, 0).
\end{eqnarray*}

\paragraph*{Correctness}
By Lemma~\ref{lemma: center on separator} and Lemma~\ref{lemma: at most one} we know that there is an optimal partition of $G$ in which every part induces either a clique contained in a single block or a generalized star centered at a cut node $C$ with at most one completely joined leaf in each block incident to $C$, and in which every block completes at most one such star with a leaf of its own. We show that the recurrences compute the values $f_C$ and $g_B$ as defined above and then the optimal value follows.

We first note that at a cut node only cardinalities matter.

\begin{observation}\label{obs: interchange}
	Let $C$ be a cut node of the block-cut tree $\mathcal T$ of some thick tree $G$. Every vertex of $C$ is adjacent to every vertex of every block incident to $C$, because such a block is a clique containing $C$. Hence, in the partitions considered above, any two vertices of $C$ are interchangeable: for a fixed assignment of the vertices of $C$ to roles (reserved for the parent, center of the star, participating into a given child subtree, or leftover), every choice of which vertices of $C$ play each role yields a partition of the same density.
\end{observation}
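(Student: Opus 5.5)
The plan is to argue the observation directly from the structure of the block-cut tree and the fact that the density of a partition depends only on the isomorphism types of its parts. First I will verify the adjacency claim. By construction of $\mathcal T$, a block node $B$ is adjacent to the cut node $C$ exactly when the maximal clique of $B$ properly contains the separator clique $C$. Since the maximal clique of $B$ is a clique, every vertex of $C$ is adjacent to every other vertex of that clique.

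Next I will show that two vertices $x,y\in C$ have the same neighbourhood outside $C$ among the vertices that can share a part with them in the considered partitions. The key point is that any vertex $w\notin C$ adjacent to $x$ lies in a maximal clique containing $x$. Because $G$ is a thick tree, every minimal separator meeting $C$ equals $C$. The set of blocks containing $x$ is connected in $\mathcal T$ (as in the proof of Lemma~\ref{lemma: center on separator}), and the separators along its edges all contain $x$, so they all equal $C$. Hence every block containing $x$ is incident to $C$ and contains $y$, which gives $N[x]=N[y]$. In other words, $x$ and $y$ are true twins in $G$. I expect this twin argument to be the main (though mild) obstacle, since it is the only place where the thick-tree hypothesis is used. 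It rests on the same connectivity-of-blocks reasoning as Lemma~\ref{lemma: center on separator}.

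Finally, I will use the twin property. The transposition swapping $x$ and $y$ is an automorphism of $G$, so it maps each part $P$ of a partition to a set inducing a graph isomorphic to $G[P]$, with the same number of vertices and edges. Density is invariant under isomorphism, so the swapped partition has the same total density. Any two choices of which vertices of $C$ fill the roles (reserved for the parent, center of the star, participating in child subtree $B_j$, leftover) with the same cardinalities differ by a permutation of $C$. Such a permutation is a product of transpositions of twins, so the density is unchanged. The image of a part that is a clique contained in a single block, or a generalized star centered at $C$, is again of the same type, so the restricted family of partitions considered is preserved.
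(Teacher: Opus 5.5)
Your proof is correct, but it rests on a stronger key fact than the paper's. The paper's justification is the one sentence inside the observation: each block incident to $C$ is a clique containing $C$, so every vertex of $C$ is adjacent to all of it. Interchangeability is then asserted for the partitions handled by the dynamic program.

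You instead use the thick-tree hypothesis to show that all vertices of $C$ are true twins in $G$. Every cut node containing $x\in C$ meets $C$ and hence equals $C$, so every node of $\mathcal T$ containing $x$ is either $C$ or a block adjacent to $C$. Consequently any permutation of $C$ fixing $V\setminus C$ is an automorphism of $G$, and it preserves the density of every partition.

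The paper's route is shorter, but it records only one direction of the adjacency. Your twin argument also gives the converse: a vertex of $C$ has no neighbours outside the blocks incident to $C$. That converse is genuinely needed. In the paper's own recurrences, a vertex $x\in C$ drawn into a child block $B_j$ can end up in the closing leaf of a star centred at a child cut of $B_j$. That star's other leaves lie in blocks not incident to $C$. Swapping $x$ with $y\in C$ preserves the density only if $y$, like $x$, is non-adjacent to those leaves, and the paper's sentence does not supply this.

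So your version is self-contained, makes explicit where the thick-tree property is used, and does not depend on the structural restrictions of the dynamic program.
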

We now proceed in proving that the algorithm indeed constructs a partition that realizes the maximum value of MDGP.
\begin{lemma}\label{lemma: dp correct}
	Let $C\in\mathcal C$ be a cut node with child blocks $B_1,\dots,B_d$, let
	$t\in\{0,\dots,|C|\}$, and let $l\geq 1$ and $L,Q\geq 0$.
	\begin{enumerate}
		\item $f_C(\text{fin},t)$ equals the maximum value of a partition
		of $A_C$ that leaves exactly $t$ vertices of $C$ in no part (all remaining
		parts being complete) and that each part induces either a clique or a generalized star.
		\item $f_C(\text{star},t,l,L,Q)$ equals the maximum value of a partition of $A_C$ that leave exactly $t$ vertices of $C$ in no part and that
		contain a designated part which is a generalized star whose center is formed by
		$l$ of the used vertices of $C$ and whose leaves lie in the child subtrees and
		total $L$ vertices and $Q$ internal edges, of the sum of the densities of all
		parts except the designated star.
	\end{enumerate}
	Moreover, let $B\in\mathcal B$ be a block node with parent cut $C_P(B)$, and let
	$u\in\{0,\dots,|C_P(B)|\}$ and $y\geq 0$. Then
	\begin{enumerate}\setcounter{enumi}{2}
		\item $g_B(u,y)$ equals the maximum, value of a partition of the vertex set
		$A_B$ after adding $u$ distinguished vertices of $C_P(B)$ that contain a
		designated part which is a clique on exactly $y$ vertices, of the sum of the
		densities of all parts except the designated clique.
	\end{enumerate}
\end{lemma}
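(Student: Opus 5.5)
We prove the three statements simultaneously by induction on the height of the node in the rooted block-cut tree $\mathcal T$, processing nodes bottom-up exactly as the recurrences do. For each statement we show two inequalities. \emph{Soundness}: every value produced by the recurrence is realized by an actual partition of the described type. \emph{Completeness}: every partition of the described type, restricted to the structure guaranteed by Lemma~\ref{lemma: center on separator} and Lemma~\ref{lemma: at most one}, has value at most the recurrence value. The base case is a leaf block $B$ with no child cuts. There $A_B=R(B)$, and together with the $u$ parent vertices it forms a single clique. The only options are to set aside $y$ vertices and merge the rest into one clique. This is exactly $\text{Left-clq}(|R(B)|+u-y)$, and merging is optimal because $\frac{a-1}{2}+\frac{b-1}{2}\le\frac{a+b-1}{2}$ for cliques inside one maximal clique.

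\textbf{Cut node step (items 1 and 2).} Fix an optimal partition $\mathcal P$ of $A_C$ of the described type. Since $C$ is a separator and the child subtrees $A_{B_j}$ only meet through $C$, every part of $\mathcal P$ is one of three kinds. It is contained in some $A_{B_j}\cup C$ and uses $s_j$ vertices of $C$. Or it is the (at most one, by Lemma~\ref{lemma: center on separator} applied to $C$) star centered at $C$ with at most one leaf, of size $y_j$, per child block. Or it is a clique made only of vertices of $C$. Clique parts inside $C$ merge into a single one without loss of density (the same convexity inequality), which gives the $\text{Left-clq}(r)$ term. Vertices of $C$ that a child would place as an extra leaf can instead be moved into the center, which is at least as good. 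By Observation~\ref{obs: interchange} only the numbers $s_j$, $l$, $t$ matter. Restricting $\mathcal P$ to each child gives a partition counted by $g_{B_j}(s_j,y_j)$, and by induction its value is at most that entry. This yields completeness. Soundness is immediate: we glue the child witness partitions, the center and the leftover clique. The densities add because parts in different children are disjoint, and the star's density is a function $\Phi(l,L,Q)$ of cardinalities only, since leaves are completely joined to the center in thick trees. Item 1 then follows by adding the option of closing the star with $\Phi(l,L,Q)$.

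\textbf{Block node step (item 3).} The argument is analogous. Lemma~\ref{lemma: at most one} lets us assume that at most one child cut $C_b$ has an open star completed by a leaf inside $B$. All other child cuts contribute finalized partitions with $s_i$ vertices handed up, and by the induction hypothesis these are bounded by $f_{C_i}(\text{fin},s_i)$, which gives $F_b(S)$. The completed star contributes $f_{C_b}(\text{star},\dots)+\Phi(l,L+z,Q+\binom z2)$. All other vertices present in $B$ lie in one maximal clique, so they merge into one clique apart from the $y$ set aside.

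\textbf{Main obstacle.} The delicate step is the completeness direction at block nodes. We must justify that an optimal partition can be put simultaneously into the normal form of both lemmas. That is, the exchange argument of Lemma~\ref{lemma: at most one} at one block must not destroy the normal form already fixed at other blocks, and merging leftover cliques must not create new stars. I would handle this by choosing a global optimum that is lexicographically minimal: first in the number of non-clique, non-star parts (zero, by Theorem~\ref{thm::genstars}), then in the number of stars crossing a block in both directions. I would then argue that each local modification strictly decreases this potential while keeping the density.
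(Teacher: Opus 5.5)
Your route is the paper's own: bottom-up induction with the leaf block as base case, a cut-node step that classifies parts via Lemma~\ref{lemma: center on separator} and Observation~\ref{obs: interchange}, and a block-node step via Lemma~\ref{lemma: at most one}. The genuine gap is your parenthetical claim that there is at most one star centered at $C$ ``by Lemma~\ref{lemma: center on separator} applied to $C$''. That lemma only says that each star with at least two leaves has its center inside some cut node and its leaves in distinct incident blocks. It says nothing about how many stars share the same cut. Yet the recurrence for $f_C$ admits at most one star centered at $C$ (finalized via $\Phi(l,L,Q)$ or left open). Likewise, the block recurrence completes at most one star coming from $C_b$. So completeness needs a separate exchange lemma: two stars with disjoint centers $P^1,P^2\subseteq C$ can be replaced by at most one star plus cliques without losing density. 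Lemma~\ref{lemma: at most one} does not supply this. It concerns stars centered at different child cuts, and its leaf-moving step uses that the two leaves lie in the same block (the $b_1b_2$ edges), so two stars at $C$ whose leaves lie in disjoint sets of blocks are not covered. Simply merging the two stars does not work either. Suppose each has a center of size $c\geq 3$, one leaf of size $c$ and one singleton leaf, with all four leaves in different blocks. Then merging loses $\frac{c(c-2)}{4c+2}$ in density. What does work there is one star on $P^1\cup P^2$ with the two singleton leaves, plus the two large leaves as cliques. So a real argument is needed here. The paper's proof passes over the same point, tacitly treating ``the'' star at $C$ as unique.

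A secondary mismatch is in your ``main obstacle'' paragraph. A lexicographically minimal \emph{global} optimum, like Lemma~\ref{lemma: at most one} itself, only gives a normal form for an optimal partition of all of $G$, hence only $g_{B_r}(0,0)=\mathrm{OPT}$. Items 1--3, however, assert local optimality on $A_C$ and $A_B$ for every boundary parameter $t$, $(t,l,L,Q)$ and $(u,y)$. A global normal form says nothing about local maximizers for parameter values the global optimum does not realize. To prove the lemma as stated, you must carry out the normalizations inside a local maximizer: splitting off or moving a leaf between stars completed in $B$, merging the leftover cliques of a block, and moving cut vertices from a leaf into a center. This is possible because none of these exchanges touches the reserved vertices, the designated $y$-clique or the open star. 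Relatedly, soundness is not quite immediate. The recurrence may take the leaf $y$ or $z$ partly from cut vertices in the pool, namely the $u$ vertices of $C_P(B)$ or the $s_b$ vertices of $C_b$. The resulting part is then not a generalized star, and its true density exceeds the $\Phi$-value. You need the remark that moving such vertices into the center only adds edges.
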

\begin{proof}
	We proceed by induction on $\mathcal T$, moving from its leaves towards the root $B_r$. Since $\mathcal T$ alternates between block and cut nodes and its leaves are blocks, the base case is a leaf block node and the two inductive steps are one for cut nodes and one for block nodes.

	\emph{Base case.} Let $B$ be a block node with no child cut nodes, so $A_B=R(B)$. Every private vertex of $B$ and every vertex participating in the clique above, lies in the clique of $B$, so any two of them are adjacent. This means that no generalized star with two or more leaves can be formed among them, and every part is a subclique of $B$. Setting aside the $y$ vertices reserved as a leaf, the remaining $|R(B)|+u-y$ vertices are best kept as a single clique, of density $\text{Left-clq}(|R(B)|+u-y)$. The block recurrence, whose only possible branch is $C_b=\emptyset$ with $S=0$, returns exactly this value.

	\emph{Cut node.} Let $C$ have child blocks $B_1,\ldots, B_d$, and assume as induction hypothesis that part (iii) of Lemma~\ref{lemma: dp correct} holds for each $B_j$. This means that, $g_{B_j}(u,y)$ equals its optimum for all $u,y$, which is available since each $B_j$ lies strictly below $C$ in $\mathcal T$.
	Removing the vertices of $C$ disconnects $A_C$ into the child subtrees $A_{B_1},\ldots, A_{B_d}$, with no edge between distinct child subtrees. 
	Consider first a partition realizing $f_C(\text{fin}, t)$ in which no star is centered at $C$. 
	Every part that intersects two child subtrees $A_{B_j}$ and $A_{B_k}$ would, being a clique or a generalized star, have its center in the separator on the $\mathcal T$-path between $B_j$ and $B_k$, namely $C$ itself, by Lemma~\ref{lemma: center on separator} which would lead to a contradiction. 
	Hence, every part intersects at most one child subtree, and a part intersecting $A_{B_j}$ together with some vertices of $C$ is exactly a part of a partition of $A_{B_j}$ that is reserving those $s_j$ vertices of $C$ to be used inside $B$, contributing $g_{B_j}(s_j, 0)$, while the remaining $r=\bar t-\sum_j s_j$ vertices of $C$ form a clique of density $\text{Left-clq}(r)$. Conversely every choice of the $s_j$ yields such a partition. Maximizing gives $\sum_j g_{B_j}(s_j, 0)+\text{Left-clq}(r)=h_C(t)$, and by Observation~\ref{obs: interchange} knowing the cardinalities $s_j$ suffices.

	Now consider a partition corresponding to $f_C(\text{star}, t, l, L, Q)$, with one open star centered at $C$. By Lemma~\ref{lemma: center on separator} its center consists of $l$ non reserved vertices of $C$, and each of its leaves is a clique lying in one incident child block. Specifically, a leaf in $A_{B_j}$ is exactly a clique that $B_j$ donates, of some size $y_j$, and it is universal to the center because the whole clique of $B_j$ is adjacent to $C\supseteq$ center. Its contribution, together with the rest of the partition of $A_{B_j}$, is $g_{B_j}(s_j, y_j)$. The leaves lying in the children of $C$, therefore contribute $L=\sum_j y_j$ vertices and $Q=\sum_j\binom{y_j}{2}$ internal edges, and the remaining $r=\bar t-l-\sum_j s_j$ vertices of $C$ form a clique. The value of the finalized parts is $\sum_j g_{B_j}(s_j, y_j)+\text{Left-clq}(r)$, which is exactly what the recurrence maximizes into $f_C(\text{star}, t, l, L, Q)$ finalizing the star here, meaning, adding no leaf from the parent, contributes $\Phi(l, L, Q)$ and yields the second term of $f_C(\text{fin}, t)$. By the induction hypothesis every $g_{B_j}$ is optimal, hence so are $f_C(\text{fin}, t)$ and $f_C(\text{star}, t, l, L, Q)$.

	\emph{Block node.} Let $B\neq B_r$ have child cut nodes $C_1,\ldots, C_e$, and assume parts (i)–(ii) of Lemma~\ref{lemma: dp correct} hold for each child cut $C_i$, and hence for each $f_{C_i}$. 
	All interactions among the child subtrees of $B$, and between $B$ and its parent, pass through vertices of $B$. 
	In a partition corresponding to $g_B(u, y)$, each child subtree $A_{C_i}$ either contributes a finalized partition, of value $f_{C_i}(\text{fin}, s_i)$ where $s_i$ is the number of $C_i$ vertices it reserves for $B$, or, for at most one child $C_b$ by Lemma~\ref{lemma: at most one}, exports one open star centered at $C_b$ that $B$ completes. 
	In the latter case $B$ attaches to that star a leaf of size $z\geq 1$ taken from the vertices corresponding to $\rho$, contributing $f_{C_b}(\text{star}, s_b, l, L, Q)+\Phi(l, L+z, Q+\binom{z}{2})$. This leaf is completely joined to the center because it lies in the clique of $B\supseteq C_b\supseteq$ center. The vertices of $\rho$ consist of the private vertices of $B$, the $u$ vertices of $C_P(B)$ used inside $B$ and the $\sum_i s_i$ vertices drawn up from the child cuts. Out of these vertices, $y$ are reserved to be used upward and $z$ close the imported star, and the rest form one clique of density $\text{Left-clq}(\rho-y-z)$. Summing these contributions gives exactly the two branches of the block recurrence, and conversely every such choice is realizable. By the induction hypothesis every $f_{C_i}$ is optimal, hence so is $g_B(u, y)$ which completes our induction.
\end{proof}

Since at the root $B_r$ no vertex may be reserved for a parent, the value $g_{B_r}(0, 0)$ ranges over all partitions of $G=G[A_{B_r}]$ of the assumed structural form, and by Lemma~\ref{lemma: dp correct} it equals the optimal MDGP value of $G$. An optimal partition can be obtained by storing the maximizing choice with each computed value.

\paragraph*{Running time}
The block-cut tree has $\mathcal O(n)$ nodes, and each value is computed once and reused, so it suffices to bound the number of values and the cost per value. 
For a cut node $C$ the values $f_C(\text{fin}, \cdot)$ and $f_C(\text{star}, \cdot)$ are indexed by $t\leq|C|$, a center size $l\leq|C|$ and leaf totals $(L, Q)$ with $L\leq n$. 
Since $L\leq n$ and $Q\leq\binom{n}{2}$ the table $f_C(\text{star},t,l, L, Q)$ has size $\mathcal O(|C|^2n^3)$ .
Summing up over the $d$ child blocks of $C$ and the leaf total, costing $\mathcal O(d|C|n)$ per center size, hence $\mathcal O(d|C|^2n)$ in total for $C$.
For a block node $B$ the value $g_B(u, \cdot)$ is computed for each $u\leq|C_P(B)|$. Specifically, for fixed $u$ the recurrence branches over the completed child, our algorithm runs over the sum of the remaining children, and in the completing branch ranges over the open triple of the completed child, the closing leaf size and the leaf that $B$ provides to a star centered at its parent cut $C_P(B)$, which is polynomial in $n$. Summing over all nodes and all values of $u$, and using $\sum_{C}|C|,\ \sum_B(|C_P(B)|+\deg B)=\mathcal O(n)$, the total running time is polynomial in $n$. 

Hence we conclude the following theorem, which is the main result of this section.

\thicktrees

\section{$\text{MDGP}_k$ is $\mathsf{NP}$-hard on Split Graphs}\label{section: NP-hardness}
First we prove that given a complete split graph, the density of the whole graph increases with the number of the vertices on the independent set. 
\begin{lemma}\label{lemma: complete split density}
    Let $G=(K\cup I, E)$ be a complete split graph where $K$ is a clique and $I$ an independent set. If $G'=(K\cup (I\cup\{v\}), E\cup\{\{v,u\}\mid u\in V(K)\})$ is the graph obtained from $G$ by adding one more vertex to the independent set and connecting it to all of $K$ then $d(G')>d(G)$.  
\end{lemma}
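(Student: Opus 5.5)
The plan is to reduce the claim to a general criterion for when adding one vertex raises the density, and then check that criterion directly for complete split graphs. Throughout, write $k=|K|$ and $i=|I|$, and assume $k\geq 1$. If $K=\emptyset$, both graphs are edgeless and have density $0$, so strict inequality fails; the statement implicitly needs a nonempty clique, and I will state this at the start of the proof.

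\emph{Step 1 (general criterion).} Let $H$ have $n\geq 1$ vertices and $m$ edges. Let $H'$ be obtained by adding a vertex of degree $\delta$. Then
\[
\frac{m+\delta}{n+1}>\frac{m}{n}\iff n(m+\delta)>m(n+1)\iff \delta>\frac{m}{n}=d(H).
\]
So the density strictly increases exactly when the new vertex's degree strictly exceeds the old density. This is the same algebraic comparison as \eqref{eq:crit}, specialised to a one-vertex part.

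\emph{Step 2 (apply to $G$).} Here $n=k+i\geq 1$ and $m=\binom{k}{2}+ki$. The new vertex $v$ is adjacent to all of $K$, so $\delta=k$. It therefore suffices to show $d(G)<k$, that is,
\[
\frac{k(k-1)}{2}+ki<k(k+i).
\]
This is equivalent to $\frac{k(k-1)}{2}<k^2$, which holds for every $k\geq 1$. Hence $d(G')>d(G)$.

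None of the steps is a real obstacle; the only point needing care is the degenerate case $K=\emptyset$ noted above. As an optional check, one can compute $d(G')-d(G)$ explicitly. Its numerator is $k(k+i)-\binom{k}{2}-ki=\binom{k+1}{2}>0$ and its denominator is $(k+i)(k+i+1)$, which confirms strict monotonicity in $|I|$.
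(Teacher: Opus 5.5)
Your proposal is correct and is essentially the paper's proof. The paper cross-multiplies $d(G)<d(G')$ directly and reduces it to $\binom{|K|}{2}+|I||K|<|K|(|K|+|I|)$, which is exactly your condition $d(G)<\delta=k$, and then to $\binom{|K|}{2}<|K|^2$ for $|K|\geq 1$. Isolating the degree-versus-density criterion and flagging the degenerate case $K=\emptyset$ are tidy touches, but they do not change the argument.
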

\begin{proof}
    We prove that given two complete split graphs $G=(K\cup I, E)$ and $G'=(K\cup (I\cup\{v\}), E\cup\{\{v,u\}\mid u\in V(K)\})$ it holds that $d(G)< d(G')$. Indeed, observe that $d(G)=\frac{\binom{|K|}{2}+|I||K|}{|I|+|K|}$ and $d(G')=\frac{\binom{|K|}{2}+(|I|+1)|K|}{|I|+|K|+1}$. Then,
    \begin{eqnarray*}
        d(G)&<& d(G')\Leftrightarrow\\
        \frac{\binom{|K|}{2}+|I||K|}{|I|+|K|}&<&\frac{\binom{|K|}{2}+(|I|+1)|K|}{|I|+|K|+1}\Leftrightarrow\\
        \left(\binom{|K|}{2}+|I||K|\right)\left(|I|+|K|+1\right)&<&\left(\binom{|K|}{2}+(|I|+1)|K|\right)\left(|K|+|I|\right)\Leftrightarrow\\
        \binom{|K|}{2}+|I||K|&<&|K|(|K|+|I|)\Leftrightarrow\\
        \binom{|K|}{2}&<&|K|^2,
    \end{eqnarray*}
    which holds for every $|K|\geq 1$.
\end{proof}

We proceed in expressing the density of a complete split graph $G=(K\cup I, E)$ with an independent set $I$, proportional to the size of the clique $K$ within a factor $\alpha$ as a sum of a factor dependent to both $\alpha$ and $|K|$ and one only dependent on $\alpha$.
 \begin{lemma}\label{lem::splitdense}
    Let $G=(K\cup I, E)$ be a complete split graph with $|K|=n$ and $|I|=\alpha n$ for some $\alpha >0$. 
    Then $d(G)=\frac 12(n(1+\frac{\alpha}{1+\alpha})-\frac{1}{1+\alpha})$.
\end{lemma}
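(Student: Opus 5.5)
This is a direct computation: count the vertices and edges of $G$, form the quotient, and rewrite it in the claimed form.

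The vertex count is $|V(G)| = n+\alpha n = n(1+\alpha)$. Since $G$ is complete split, $K$ is a clique, every vertex of $I$ is adjacent to every vertex of $K$, and $I$ is independent. So the edge count is
\[
|E(G)| = \binom{n}{2} + \alpha n \cdot n = \frac{n(n-1)}{2} + \alpha n^2 .
\]
Note that $\alpha n$ must be an integer for the graph to exist, but the formula is purely algebraic in $n$ and $\alpha$.

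Dividing, the factor $n$ cancels:
\[
d(G) = \frac{\frac{n-1}{2}+\alpha n}{1+\alpha} = \frac{1}{2}\cdot\frac{n-1+2\alpha n}{1+\alpha}.
\]

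To reach the target form, write the numerator as $n(1+\alpha) + \alpha n - 1$. Splitting the fraction gives
\[
\frac{1}{2}\left(n + \frac{\alpha n}{1+\alpha} - \frac{1}{1+\alpha}\right) = \frac{1}{2}\left(n\left(1+\frac{\alpha}{1+\alpha}\right) - \frac{1}{1+\alpha}\right),
\]
which is exactly the claimed expression.

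There is no real obstacle. The only care needed is in the edge count: $I$ contributes no internal edges, and each vertex of $I$ is joined to all of $K$. The remaining step is the regrouping of the numerator so that it matches the stated form.
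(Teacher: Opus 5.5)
Your proposal is correct and follows essentially the same route as the paper: you count $\binom{n}{2}+\alpha n^2$ edges over $n(1+\alpha)$ vertices, cancel $n$ to get $\frac{1}{2}\cdot\frac{n+2\alpha n-1}{1+\alpha}$, and split off $n$ to reach the stated form. Your edge count and the final regrouping $n-1+2\alpha n = n(1+\alpha)+\alpha n-1$ are both right.
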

\begin{proof}
Let $G=(K\cup I, E)$ be a complete split graph with $|K|=n$ and $|I|=\alpha n$ for some $\alpha >0$. 
\begin{eqnarray*}
    d(G)&=&\frac{|E|}{|K|+|I|}=
\frac{1}{2}\left(\frac{n(n-1)+2\alpha n^2}{n(1+\alpha)}\right)\\
&=&\frac{1}{2}\left(\frac{n(n+2\alpha n-1)}{n(1+\alpha)}\right)= \frac{1}{2}\left(\frac{(n+2\alpha n-1)}{1+\alpha}\right)\\
&=&\frac{1}{2}\left(n+\frac{(\alpha n-1)}{1+\alpha}\right)=\frac{1}{2}\left(n\left(1+\frac{\alpha}{1+\alpha}\right)-\frac{1}{1+\alpha}\right)
\end{eqnarray*}
concluding the proof of the lemma.\end{proof}

To better analyze the density of a split graph $G$ and its clustering, we break the expression from Lemma~\ref{lem::splitdense} into the part that depends on $n$, denoted by $d_l$, and the part that does not, denoted by $d_c$. In particular, given a complete split graph $G=(K\cup I,E)$ such that $K$ is a clique of size $n$ and $I$ is an independent set of size $\alpha n$ for some constant $\alpha$, denoting $d_l(G,n,\alpha)=\frac{1}{2}n(1+\frac{\alpha}{1+\alpha})$ and $d_c(G,\alpha)=-\frac{1}{2(1+\alpha)}$.
If it is clear from the context we will omit the index that indicates the graph the functions $d_l$ and $d_c$ are applied on.

Looking at the proof of Lemma~\ref{lem::splitdense}, we can directly express the effect that missing edges have of a (non-complete) split graph with clique of size $n$ and independent set of size $\alpha n$ with the functions $d_l$ and $d_c$ as follows.

 \begin{observation}\label{obs::splitdense_missing}
    If $G=(K\cup I, E)$ is a split graph with $|K|=n$, $|I|=\alpha n$ for some $\alpha >0$, and $|E\cap (K\times I)|=\alpha n^2-r$, then
$d(G)=d_l(G,n,\alpha)+d_c(G,\alpha)-\frac{r}{n+\alpha n}$.     
\end{observation}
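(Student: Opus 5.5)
This is a direct edge count, redoing the computation of Lemma~\ref{lem::splitdense} with the cross edges reduced. Since $K$ is a clique, $G[K]$ has exactly $\binom{n}{2}=\frac{n(n-1)}{2}$ edges. Since $I$ is independent, every other edge lies in $K\times I$, and by hypothesis there are $\alpha n^2-r$ of them. The vertex count is $|K|+|I|=n+\alpha n=n(1+\alpha)$. So
\[
d(G)=\frac{\frac{n(n-1)}{2}+\alpha n^2-r}{n(1+\alpha)} .
\]

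Next we split this fraction into two pieces. The first piece is $\frac{\frac{n(n-1)}{2}+\alpha n^2}{n(1+\alpha)}$, which is exactly the density of the complete split graph with the same parameters. The second piece is $-\frac{r}{n(1+\alpha)}=-\frac{r}{n+\alpha n}$, which is the correction term in the claimed formula.

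The first piece is the quantity simplified in the proof of Lemma~\ref{lem::splitdense}. That chain of equalities uses only $|K|=n$, $|I|=\alpha n$ and the total edge count, never completeness itself. It gives
\[
\frac{1}{2}\left(n\left(1+\frac{\alpha}{1+\alpha}\right)-\frac{1}{1+\alpha}\right)=d_l(G,n,\alpha)+d_c(G,\alpha),
\]
using the definitions of $d_l$ and $d_c$. Adding the correction term $-\frac{r}{n+\alpha n}$ yields the observation.

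There is no real obstacle here. The only care needed is to note that $d_l$ and $d_c$ depend only on $n$ and $\alpha$, so applying them to a non-complete $G$ is well defined. We do not need $r\ge 0$ explicitly, although it holds automatically because $|K\times I|=\alpha n^2$.
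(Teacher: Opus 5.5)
Your proposal is correct and is essentially the argument the paper has in mind: the paper gives no separate proof and simply says the observation follows by looking at the proof of Lemma~\ref{lem::splitdense}. That is exactly what you do, counting the $\binom{n}{2}$ clique edges plus $\alpha n^2-r$ cross edges, splitting off the term $-\frac{r}{n(1+\alpha)}$, and reusing that lemma's simplification of the remaining fraction into $d_l(G,n,\alpha)+d_c(G,\alpha)$.
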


Our goal now is to obtain explicit expressions of the density change when we partition a complete split graph into more than one parts. We consider the case of a partition that breaks the clique-part into two non-empty sets of sizes $n_1$ and $n_2$, respectively. Since any clique of size $n$ can be considered as a complete split graph with a clique of size $n-1$ and an independent set of size $1$, due to Lemma~\ref{lemma: complete split density}, we the sum of the densities of the two parts is higher if we include all of the vertices of the independent set in these two parts. Consider any distribution of the independent-set vertices to these two parts. If breaking in two like this worsens the density, we can conclude by iterative application of this argument that any clustering with more than one cluster has worse density. 

 \begin{lemma}\label{lem::splitdense2}
    Let $G=(K\cup I, E)$ be a complete split graph with $|K|=n$ and $|I|=\alpha n$ for some $\alpha >0$ and let $n_1,n_2\in\mathbb N$, and $\alpha_1,\alpha_2\geq 0$ be such that $n_1+n_2=n$ and $ \alpha_1n_1+\alpha_2n_2=\alpha n$. For the complete subgraphs $G_i=(K_i\cup I_i, E_i)$ with $|K_i|=n_i$ and $|I_i|=\alpha_i n_i$:
    \begin{enumerate}
     \item If $\alpha=\alpha_1=\alpha_2$, then $d_l(n,\alpha)-d_l(n_1,\alpha_1)-d_l(n_2,\alpha_2)=0$. 
    \item If   $\alpha\neq \alpha_1$, then $d_l(n,\alpha)-d_l(n_1,\alpha_1)-d_l(n_2,\alpha_2)=c_1n_1=c_2n_2$ for some positive constants $c_1,c_2$ that depend only on $\alpha,\alpha_1,\alpha_2$. 
     \end{enumerate}
     Specifically, it holds that $c_1=\gamma(\alpha_2-\alpha_1)(\alpha-\alpha_1)$ and $c_2=\gamma(\alpha_2-\alpha_1)(\alpha_2-\alpha)$,  where $\gamma=\gamma(\alpha,\alpha_1,\alpha_2)=\frac{1}{2(1+\alpha)(1+\alpha_1)(1+\alpha_2)}$.
\end{lemma}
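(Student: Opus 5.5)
The plan is a direct algebraic computation; the only idea needed is to use the constraint $\alpha_1 n_1+\alpha_2 n_2=\alpha n$ to eliminate one of the two terms. First I simplify $d_l$. From its definition, $d_l(n,\alpha)=\frac{n}{2}\bigl(1+\frac{\alpha}{1+\alpha}\bigr)=\frac n2\bigl(2-\frac{1}{1+\alpha}\bigr)=n-\frac{n}{2(1+\alpha)}$. Since $n=n_1+n_2$, the linear terms cancel and
\[
\Delta:=d_l(n,\alpha)-d_l(n_1,\alpha_1)-d_l(n_2,\alpha_2)=\frac12\left(\frac{n_1}{1+\alpha_1}+\frac{n_2}{1+\alpha_2}-\frac{n_1+n_2}{1+\alpha}\right).
\]
Grouping by $n_1$ and $n_2$ gives
\[
\Delta=\frac{1}{2(1+\alpha)}\left(\frac{n_1(\alpha-\alpha_1)}{1+\alpha_1}+\frac{n_2(\alpha-\alpha_2)}{1+\alpha_2}\right).
\]

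Next I rewrite the constraint. Substituting $n=n_1+n_2$ into $\alpha_1 n_1+\alpha_2 n_2=\alpha n$ gives $n_1(\alpha-\alpha_1)=n_2(\alpha_2-\alpha)$. Replacing $n_2(\alpha-\alpha_2)$ by $-n_1(\alpha-\alpha_1)$ in the expression for $\Delta$ yields
\[
\Delta=\frac{n_1(\alpha-\alpha_1)}{2(1+\alpha)}\left(\frac{1}{1+\alpha_1}-\frac{1}{1+\alpha_2}\right)=\gamma(\alpha_2-\alpha_1)(\alpha-\alpha_1)\,n_1=c_1n_1 .
\]
Eliminating $n_1$ instead, via $n_1(\alpha-\alpha_1)=n_2(\alpha_2-\alpha)$, gives $\Delta=\gamma(\alpha_2-\alpha_1)(\alpha_2-\alpha)\,n_2=c_2n_2$. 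These are exactly the claimed formulas, and $c_1,c_2$ depend only on $\alpha,\alpha_1,\alpha_2$.

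Item~1 follows immediately, since $\alpha=\alpha_1$ makes $c_1=0$ and hence $\Delta=0$. (If $\alpha=\alpha_1$, the relation $n_1(\alpha-\alpha_1)=n_2(\alpha_2-\alpha)$ with $n_2>0$ also forces $\alpha_2=\alpha$.)

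Positivity in item~2 is the only non-mechanical step. Here I use that $n_1,n_2\ge 1$, since the clique is split into two non-empty parts. Then $\alpha=\frac{n_1}{n}\alpha_1+\frac{n_2}{n}\alpha_2$ is a convex combination with strictly positive weights. If $\alpha\neq\alpha_1$, then $\alpha_1\neq\alpha_2$ and $\alpha$ lies strictly between $\alpha_1$ and $\alpha_2$. Consequently $\alpha_2-\alpha_1$, $\alpha-\alpha_1$ and $\alpha_2-\alpha$ are all nonzero and have the same sign. Therefore $c_1=\gamma(\alpha_2-\alpha_1)(\alpha-\alpha_1)>0$ and $c_2=\gamma(\alpha_2-\alpha_1)(\alpha_2-\alpha)>0$, because $\gamma>0$. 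I do not expect any real obstacle beyond keeping the signs straight in this last step.
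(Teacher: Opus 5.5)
Your proof is correct and follows essentially the same route as the paper: cancel the linear terms via $n=n_1+n_2$, collapse what remains using $n_1(\alpha-\alpha_1)=n_2(\alpha_2-\alpha)$, and get positivity from $\alpha$ being a strictly weighted average of $\alpha_1$ and $\alpha_2$. Your rewriting $d_l(n,\alpha)=n-\frac{n}{2(1+\alpha)}$ and your same-sign argument replace the paper's expansion over the common denominator and its WLOG $\alpha_1<\alpha<\alpha_2$. You also correctly make explicit the assumption $n_1,n_2\geq 1$, which the paper uses only implicitly.
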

\begin{proof}
 By simply applying the definition of the density we see that  
 $$d_l(n,\alpha)-d_l(n_1,\alpha_1)-d_l(n_2,\alpha_2)
=\frac 12 \left( n\left(1+\frac{\alpha}{1+\alpha}\right)- n_1\left(1+\frac{\alpha_1}{1+\alpha_1}\right) -n_2\left(1+\frac{\alpha_2}{1+\alpha_2}\right)\right).$$
Using the fact that $n_1+n_2=n$, this equals to
$\frac 12 \left(\frac{n\alpha}{1+\alpha}-\frac{n_1\alpha_1}{1+\alpha_1} -\frac{n_2\alpha_2}{1+\alpha_2}\right)
$.
 It is now immediate that since $ \alpha_1n_1+\alpha_2n_2=\alpha n$  this expression is equal to $0$ when $\alpha=\alpha_1=\alpha_2$.
 
 The equality $ \alpha_1n_1+\alpha_2n_2=\alpha n$ implies that in case where $\alpha \neq \alpha_1$, one of the $\alpha_i$'s is larger while the other one is smaller than $\alpha$, so assume without loss of generality that $\alpha_1<\alpha<\alpha_2$. 
 Then, by setting $\gamma=\gamma(\alpha,\alpha_1,\alpha_2)=(2(1+\alpha)(1+\alpha_1)(1+\alpha_2))^{-1}$ we get that $\frac 12 \left(\frac{n\alpha}{1+\alpha}-\frac{n_1\alpha_1}{1+\alpha_1} -\frac{n_2\alpha_2}{1+\alpha_2}\right)$ equals to:
 $$\displaystyle\gamma(\alpha,\alpha_1,\alpha_2)\big(n\alpha(1+\alpha_1)(1+\alpha_2) - n_1\alpha_1(1+\alpha)(1+\alpha_2)-n_2\alpha_2(1+\alpha)(1+\alpha_1)\big).$$
 Leaving out the multiplicative $\gamma(\alpha,\alpha_1,\alpha_2)=(2(1+\alpha)(1+\alpha_1)(1+\alpha_2))^{-1}$, we can transform the remaining expression by multiplying some parts and using the fact that $n\alpha=n_1\alpha_1+n_2\alpha_2$ to get
\begin{align*}
 & &n\alpha(1+\alpha_1+\alpha_2+\alpha_1\alpha_2) - n_1\alpha_1(1+\alpha+\alpha_2+\alpha\alpha_2)-n_2\alpha_2(1+\alpha_1+\alpha+\alpha_1\alpha)\\
& = &n\alpha(\alpha_1+\alpha_2+\alpha_1\alpha_2) - n_1\alpha_1(\alpha+\alpha_2+\alpha\alpha_2)-n_2\alpha_2(\alpha_1+\alpha+\alpha_1\alpha).\end{align*}
Plugging in $n=n_1+n_2$ and reordering this equals to:
\begin{align*} & = &(n_1+n_2)\alpha(\alpha_1+\alpha_2+\alpha_1\alpha_2) - n_1\alpha_1(\alpha+\alpha_2+\alpha\alpha_2)-n_2\alpha_2(\alpha_1+\alpha+\alpha_1\alpha)\\
& = &n_1\alpha(\alpha_2+\alpha_1\alpha_2) +n_2\alpha(\alpha_1+\alpha_1\alpha_2) - n_1\alpha_1(\alpha_2+\alpha\alpha_2)-n_2\alpha_2(\alpha_1+\alpha_1\alpha)\\
& = &n_1\alpha\alpha_2  +(n_1+n_2)\alpha\alpha_1\alpha_2 +n_2\alpha\alpha_1 - n_1\alpha_1(\alpha_2+\alpha\alpha_2)-n_2\alpha_2(\alpha_1+\alpha_1\alpha)\\
& = &n_1\alpha\alpha_2  +n_2\alpha\alpha_1 - n_1\alpha_1\alpha_2-n_2\alpha_2\alpha_1\\
& = &\alpha_2n_1(\alpha-\alpha_1)  +\alpha_1n_2(\alpha-\alpha_2).\end{align*}
Observe that the equalities $n=n_1+n_2$ and $\alpha n=\alpha_1 n_1+\alpha_2n_2$ imply $\alpha n_1+\alpha n_2 = \alpha_1 n_1+\alpha_2n_2$ which gives $n_1(\alpha-\alpha_1)=n_2(\alpha_2-\alpha)$ and thus we can reorder to finally get:
\[d_l(n,\alpha)-d_l(n_1,\alpha_1)-d_l(n_2,\alpha_2)=\gamma(\alpha_2-\alpha_1)(\alpha-\alpha_1)n_1=\gamma(\alpha_2-\alpha_1)(\alpha_2-\alpha)n_2.\]
Lastly, observe that the inequalities $\alpha_1<\alpha<\alpha_2$  imply that $c_1=(\alpha_2-\alpha_1)(\alpha-\alpha_1)>0$ and $c_2=(\alpha_2-\alpha_1)(\alpha_2-\alpha)>0$.
\end{proof}
Although we will not need this specifically in 
the $\mathsf{NP}$-hardness proof, it may be of 
independent interest to observe the following.
Looking at the definition of $d_c(\alpha)$, we 
see immediately that 
$d_c(\alpha)-d_c(\alpha_1)-d_c(\alpha_2)>0$ for 
all $\alpha_1\leq\alpha\leq \alpha_2$ thus by 
Lemma~\ref{lem::splitdense2} we can conclude that 
no partition into more than one part of a 
complete split graph is optimal. Formally:
\begin{observation}
For any complete split graph $G=(V,E)$ and any partition $\mathcal P$ of $V$ with $|\mathcal P|\geq 2$, we have $d(G)>d(\mathcal P)$.
\end{observation}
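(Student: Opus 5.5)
The plan is to reduce the statement to partitions into exactly two parts, and to prove strict improvement for those by combining Lemma~\ref{lem::splitdense}, Lemma~\ref{lem::splitdense2} and Lemma~\ref{lemma: complete split density}.

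\textbf{Setup and the edgeless case.} Every induced subgraph of a complete split graph $G=(K\cup I,E)$ is again a complete split graph, with clique $K\cap X$ and independent set $I\cap X$. If $K=\emptyset$, then $G$ is edgeless and every partition has density $0=d(G)$. So the strict inequality needs $K\neq\emptyset$, and I will assume this throughout, as the statement implicitly does.

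\textbf{Merging lemma.} The core step is the following claim. Let $X,Y$ be disjoint nonempty vertex sets with $X\cap K\neq\emptyset$. Then $d(X\cup Y)>d(X)+d(Y)$.

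Given this claim, the observation follows by induction on $|\mathcal P|$. Take any partition with $|\mathcal P|\geq 2$. Choose a part $X$ that meets $K$, which exists since $K\neq\emptyset$, and merge it with any other part $Y$. The resulting partition has one part fewer and strictly larger density, and its merged part still meets $K$. We repeat until the single part $V$ remains, which gives $d(\mathcal P)<d(G)$.

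\textbf{Proof of the claim, degenerate case.} Write $n_1=|X\cap K|\geq1$, $n_2=|Y\cap K|$, and $|X\cap I|=\alpha_1 n_1$. Suppose first that $n_2=0$. Then $Y\subseteq I$, so $d(Y)=0$, and $X\cup Y$ is obtained from $G[X]$ by adding the vertices of $Y$ one at a time to the independent side, each fully joined to the clique. By Lemma~\ref{lemma: complete split density}, each addition strictly increases the density, so $d(X\cup Y)>d(X)=d(X)+d(Y)$.

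\textbf{Proof of the claim, main case.} Now suppose $n_2\geq1$, and set $|Y\cap I|=\alpha_2n_2$, $n=n_1+n_2$ and $\alpha n=\alpha_1n_1+\alpha_2n_2$. By Lemma~\ref{lem::splitdense}, whose computation is valid verbatim for $\alpha\geq 0$ and not only $\alpha>0$, each density splits as $d_l+d_c$. Hence
\[
d(X\cup Y)-d(X)-d(Y)=\bigl[d_l(n,\alpha)-d_l(n_1,\alpha_1)-d_l(n_2,\alpha_2)\bigr]+\bigl[d_c(\alpha)-d_c(\alpha_1)-d_c(\alpha_2)\bigr].
\]

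For the first bracket, Lemma~\ref{lem::splitdense2} shows it is $0$ when $\alpha_1=\alpha_2$, and positive otherwise. Its algebra also does not use $\alpha_i>0$, so it applies here as well.

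For the second bracket, assume without loss of generality that $\alpha_1\leq\alpha\leq\alpha_2$, which holds because $\alpha$ is a weighted average of $\alpha_1$ and $\alpha_2$. Multiplying the bracket by $2$ gives
\[
\frac{1}{1+\alpha_1}+\frac{1}{1+\alpha_2}-\frac{1}{1+\alpha}.
\]
Since $\alpha_1\leq\alpha$, we have $\frac{1}{1+\alpha_1}\geq\frac{1}{1+\alpha}$. The remaining term $\frac{1}{1+\alpha_2}$ is strictly positive, so the second bracket is strictly positive. This proves the claim.

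\textbf{Expected obstacle.} The conceptual content is already in Lemma~\ref{lem::splitdense2}. The real care is needed at the boundary cases: parts with no clique vertices, parts with no independent vertices ($\alpha_i=0$, where the lemmas are stated for $\alpha>0$), and the edgeless graph, where strictness fails. For this reason I will explicitly re-check that the formulas of Lemma~\ref{lem::splitdense} and Lemma~\ref{lem::splitdense2} remain valid at $\alpha_i=0$. I will also make sure that the merging order always keeps a part meeting $K$, so that the strict inequality is never lost.
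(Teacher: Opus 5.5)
Your proof is correct and follows essentially the paper's route. The paper writes each density as its $d_l$ and $d_c$ parts, gets a nonnegative $d_l$-difference from Lemma~\ref{lem::splitdense2}, notes that $d_c(\alpha)-d_c(\alpha_1)-d_c(\alpha_2)>0$ for $\alpha_1\leq\alpha\leq\alpha_2$, handles parts that avoid the clique via Lemma~\ref{lemma: complete split density}, and iterates. Your merging induction and your explicit treatment of the boundary cases make rigorous what the paper only sketches in a remark; those cases are $\alpha_i=0$, parts lying inside $I$, and the edgeless graph, where strictness genuinely fails.
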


It is tempting to think that complete split graphs behave similar to complete bipartite graphs in the sense that proportional partitions are always best. More precisely, when searching for a best choice of $n_1,n_2,\alpha_1,\alpha_2$ to partition a complete split graph $G=(K\cup I, E)$  with $|K|=n$ and $|I|=\alpha n$ into $G_i=(K_i\cup I_i, E_i)$ with $|K_i|=n_i$ and $|I_i|=\alpha_i n_i$, one would expect that such an optimal choice comes from picking $\alpha=\alpha_1=\alpha_2$. This seems to be supported by  Lemma~\ref{lem::splitdense2} which suggests that the loss incurred by the  $d_l$-parts grows with the number of vertices while $d_c\in (0,1]$. This however is unfortunately not true, as the constants $c_i$ that only depend on $\alpha,\alpha_1,\alpha_2$ can grow with the number of vertices as well. A small counterexample to the proportional intuition is the complete split graph with $n=3$ and $\alpha=5$ where the unproportional split with $n_1=2$ and $\alpha_1=\frac{9}{2}$ and $n_2=1$ and $\alpha_2=6$ (density $\frac{199}{77}$) is better than the proportional split into $n'_1=2$, $n'_2=1$ and $\alpha'_1=\alpha'_2=5$  (density $\frac{31}{12}$). 

The overall intuition that a best partition is at least almost proportional however will be enough for us to construct our $\mathsf{NP}$-hardness reduction. Since the $d_l$-part of the density difference between keeping $G$ and splitting it is positive unless the split is proportional, we will aim in our $\mathsf{NP}$-hardness proof to link a dominating set to a proportional split.  Such proportional splits also give a very easy expression for the change with respect to the function $d_c$.

Formally, we call a partition $\{P_1,\dots, P_k\}$ of any split graph $G=(K\cup I, E)$ with $|K|=n$ and $|I|=\alpha n$ \emph{proportional}, if $\frac{|P_i\cap I|}{|P_i\cap K|}=\alpha$ for all $1\leq i\leq k$. The vanishing $d_l$-parts for proportional splits by  Lemma~\ref{lem::splitdense2} immediately give us the following.

 \begin{observation}\label{obs::splitdense_proportional}
    Let $G=(K\cup I, E)$ be a split graph with $|K|=n$ and $|I|=\alpha n$ for some $\alpha >0$. Any proportional partition $\mathcal P$ of $G$ into $k$ complete split graphs satisfies: 
     $d(\mathcal P)=d_l(n,\alpha)+kd_c(\alpha)$.  
\end{observation}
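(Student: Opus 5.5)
The plan is to compute the density of each part directly with Lemma~\ref{lem::splitdense} and then sum, using that $d_l$ is linear in its first argument when $\alpha$ is fixed. Iterating Lemma~\ref{lem::splitdense2} (first case) would also work, but the direct summation avoids any induction on $k$.

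First I will fix the notation. Let $\mathcal P=\{P_1,\dots,P_k\}$ be a proportional partition of $G$ into complete split graphs, and write $n_i=|P_i\cap K|$. The ratio $\frac{|P_i\cap I|}{|P_i\cap K|}$ in the definition of proportionality must be defined, so I read the definition as requiring $n_i\geq 1$ for every $i$; I will state this explicitly. Then $|P_i\cap I|=\alpha n_i$. Since $\mathcal P$ partitions $V=K\cup I$, we have $\sum_{i=1}^k n_i=n$.

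Next, each $G[P_i]$ is by assumption a complete split graph with clique $P_i\cap K$ of size $n_i$ and independent set $P_i\cap I$ of size $\alpha n_i$, where $\alpha>0$. Lemma~\ref{lem::splitdense} therefore gives
\[
d(P_i)=d_l(n_i,\alpha)+d_c(\alpha),\qquad d_l(n_i,\alpha)=\tfrac12 n_i\Bigl(1+\tfrac{\alpha}{1+\alpha}\Bigr).
\]
Summing over $i$, the $d_c$ terms contribute $k\,d_c(\alpha)$, since $d_c$ depends only on $\alpha$. The $d_l$ terms are $\tfrac12\bigl(1+\tfrac{\alpha}{1+\alpha}\bigr)\sum_i n_i=d_l(n,\alpha)$. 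Hence $d(\mathcal P)=d_l(n,\alpha)+k\,d_c(\alpha)$.

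The only point needing care is the degenerate case of a part with no clique vertex. Such a part would have to be an isolated independent vertex, and then $\alpha_i$ is undefined. I exclude it via the definition of proportionality as above. Everything else is a one-line linearity computation, so I expect no real obstacle.
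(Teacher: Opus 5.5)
Your proof is correct and is essentially the paper's argument. The paper obtains the observation from Lemma~\ref{lem::splitdense} applied to each part, together with the vanishing $d_l$-difference in the first case of Lemma~\ref{lem::splitdense2}; that first case is exactly the linearity of $d_l(\cdot,\alpha)$ in its first argument, which you use directly. Your explicit remark that proportionality forces $|P_i\cap K|\geq 1$ is a harmless clarification the paper leaves implicit.
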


\splitnph

\begin{proof}
 We give a reduction from the $\mathsf{NP}$-hard problem  {\sc Dominating Set} to  $\text{MDGP}_k$. In the  {\sc Dominating Set} problem, the input is pair $(G,k)$ where $G=(V,E)$ is a graph and $k\in\mathbb N$, and the task is to decide if there exists a subset $C\subseteq V$ with $|C|\leq k$ such that $V=N[C]$. Indeed, {\sc Dominating Set} is $\mathsf{NP}$-hard \cite{DBLP:conf/coco/Karp72}.
 
Let $(G,k)$ with $G=(V,E)$ and $V=\{v_1,\dots,v_n\}$ be an instance of {\sc Dominating Set}. We assume without loss of generality that $k\leq \frac{n}{2}$ (observe that if $k> \frac{n}{2}$, we can delete isolated vertices decrementing $k$ and simply reply yes if some  $(G',k')$ remains where the minimum degree in $G'$ is 1 and the number of vertices is still less than $2k$).
  As reduction, we construct a split graph $G':=(K\cup I,E')$ as instance of $\text{MDGP}_k$ as follows.

Overall idea: The sets of vertices of $G'$ contain copies of $V$ on both sides plus $n$ large sets of filler-vertices to force a certain structure of any (almost) proportional split. An illustration of this construction can be found in Figure~\ref{npfig}. 

Before we define the sets $K$ and $I$ that will correspond to the clique and the independent set of the split graph of the construction, let us first define the following auxiliary sets.

For every vertex $v_{i}$ of $G$ we define the set
$V^{K}_{i}=\{v_{i}^{j}\mid 1\leq j\leq r\}$ containing $r$ copies of $v_{i}$, $V^{K} =  \cup_{i\in [n]}V^{K}_{i}$ is the union of all $V^{K}_{i}$, $i\in [n]$, the set $V^{I}_{i}=\{\overline{v}_{i}^{j}\mid 1\leq j\leq r'\}$ and $V^{I}  =  \cup_{i\in [n]}V^{I}_{i},
$ is the union of all $V^{I}_{i}$, $i\in [n]$.
Moreover,  for every $i\in [k]$, we define the sets 
\begin{eqnarray*}
W_{i} & = &\{w_{i}^{j}\mid 1\leq j\leq h\}~~~(W_{i} \text{ is a set of } h \text{ vertices})\\
Y_{i} & = &\{y_{i}^{j}\mid 1\leq j\leq y\}~~~~(Y_{i} \text{ is a set of } y \text{ vertices})\\
W & = & \displaystyle \bigcup_{i\in [k]} W_{i},~~~~~~~~~~~~~(W \text{ is the union of all } W_{i})\\
Y & = & \bigcup_{i\in [k]} Y_{i},~~~~~~~~~~~~~~(Y \text{ is the union of all } Y_{i})
\end{eqnarray*}
and for every $i\in [n-k]$, we define the sets
\begin{eqnarray*}
X_{i} & = &\{x_{i}^{j}\mid 1\leq j\leq h\}~~~~~~~~~~(X_{i} \text{ is a set of } h \text{ vertices})\\
Z_{i} & = &\{z_{i}^{j}\mid 1\leq j\leq nr'+y\}~~~(Z_{i} \text{ is a set of } nr'+y \text{ vertices})\\
X & = & \bigcup_{i\in [n-k]}X_{i},~~~~~~~~~~~~~~~~~~(X \text{ is the union of all } X_{i}) \text{ and}\\ 
Z & = & \bigcup_{i\in [n-k]}Z_{i}~~~~~~~~~~~~~~~~~~~~(Z \text{ is the union of all } Z_{i}).
\end{eqnarray*}
Last, we define the set $Y'=\{y_{i}'\mid 1\leq i\leq n(k-1)r'\}$. We set the variables as follows $y=n^{20}$, $h=n^4$, $r=n^7$ and $r'=n^{10}$.
The vertex sets $K$ and $I$ are defined as follows:
\begin{eqnarray*}
K & = & V^{K} \cup W\cup X\\
I & = & V^{I} \cup Y\cup Z \cup Y'
\end{eqnarray*}

The edge set of this graph is the union of the following sets:
\begin{eqnarray*}
    E_{K} & = & \{\{u,v\}\mid u,v\in {K}\}\\
    E_{V} & = & \{\{u,v\}\mid u\in V^{K},v\in Y\cup Y'\cup Z\}\\
    E_{d} & = & \{\{v_i^t,\overline{v}_j^s\}\mid \{v_i,v_j\}\in E, t\in [r], j\in [r'] \text{ or } i=j \}\\
    E_{V^{I}w} & = & \{\{v,w\}\mid v\in V^{I},w\in W\}\\
    E_{wy'} & = &\{\{w,y'\}\mid w\in W, y'\in Y'\}\\
    E_{wy} & = & \bigcup_{i\in [k]} \{\{w_{i}^{j},y_{i}^{t}\}\mid j\in [h], t\in [y]\}\\
    E_{xz} & = & \bigcup_{i\in [n-k]} \{\{x_{i}^{j},z_{i}^{t}\}\mid j\in [h], t\in [nr'+y]\}
\end{eqnarray*}

Given all the chosen values, the construction can be done in polynomial time. Moreover,
observe that $|K|=(r+h)n$ and $|I|=(nr'+y)n$ and therefore $\alpha:= \frac{|I|}{|K|}=\frac{r'n+y}{r+h}$. This concludes the construction.
\begin{figure}[!h]
\begin{center}
\begin{tikzpicture}[every node/.style={fill=black,inner sep=1.5pt}, scale=0.8]
\draw [red, thick] (0,4)--(6,0);
\draw [red, thick] (0,6.6)--(6,3);
\draw [red, thick] (0.3,4.9)--(5.7,1.3);
\draw [red, thick] (0.3,5.5)--(5.7,2);
\draw [black, thick] (0,8.3)--(6,5.5);
\draw [black, thick] (0,10.6)--(6,9);
\draw [black, thick] (0.3,9)--(5.7,7);
\draw [black, thick] (0.3,9.6)--(5.7,7.8);
\draw [black, thick] (0,12.1)--(6,12);
\draw [black, thick] (0,14.3)--(6,16.5);
\draw [black, thick] (0.3,13)--(5.7,13.8);
\draw [black, thick] (0.3,13.5)--(5.7,14.5);
\draw [black!30, thick] (0.2,7.9)--(6,20);
\draw [black!30, thick] (0,10.6)--(6,20);
\draw [black!30, thick] (0.3,9)--(6,20);
\draw [black!30, thick] (0.3,9.6)--(6,20);
\draw [black!30, thick] (0,8.3)--(6,0);
\draw [black!30, thick] (0,8.3)--(6,3);
\draw [black!30, thick] (0,8.3)--(5.7,1.5);
\draw [black!30, thick] (0,10.6)--(6,0);
\draw [black!30, thick] (0,10.6)--(6,3);
\draw [black!30, thick] (0,10.6)--(5.7,1.5);
\draw [black!30, thick] (0.3,9)--(6,0);
\draw [black!30, thick] (0.3,9)--(6,3);
\draw [black!30, thick] (0.3,9)--(5.7,1.5);
\draw [black!30, thick] (0.3,9.6)--(6,0);
\draw [black!30, thick] (0.3,9.6)--(6,3);
\draw [black!30, thick] (0.3,9.6)--(5.7,1.5);
\filldraw[fill=white, draw=black] (0,4) ellipse (0.55cm and 0.7cm);
\node[fill=none, draw=none, text=black] at (0,4) {$V_1^K$};
\node[fill=none, draw=none, text=black] at (0,5.1) {$\cdot$};
\node[fill=none, draw=none, text=black] at (0,5.3) {$\cdot$};
\node[fill=none, draw=none, text=black] at (0,5.5) {$\cdot$};
\filldraw[fill=white, draw=black] (0,6.6) ellipse (0.55cm and 0.7cm);
\node[fill=none, draw=none, text=black] at (0,6.6) {$V_n^K$};
\draw [decorate,decoration={brace,amplitude=8pt}]
      (-0.5,3.1) -- (-0.5,7.4);
\node[fill=none, draw=none, text=black] at (-2.8,5.6) {$V_n^K=\cup_{i=1}^nV_i^K=$};
\node[fill=none, draw=none, text=black] at (-2.8,4.9) {$\cup_{i=1}^n\{v_i^j|1\leq j\leq r\}$};
\filldraw[fill=white, draw=black] (0,8.3) ellipse (0.55cm and 0.55cm);
\node[fill=none, draw=none, text=black] at (0,8.3) {$W_1$};
\node[fill=none, draw=none, text=black] at (0,9.2) {$\cdot$};
\node[fill=none, draw=none, text=black] at (0,9.4) {$\cdot$};
\node[fill=none, draw=none, text=black] at (0,9.6) {$\cdot$};
\filldraw[fill=white, draw=black] (0,10.6) ellipse (0.55cm and 0.55cm);
\node[fill=none, draw=none, text=black] at (0,10.6) {$W_k$};
\draw[decorate, decoration={brace, amplitude=8pt}]
  (-0.5,7.7) -- (-0.5,11.2);
\node[fill=none, draw=none, text=black] at (-2.8,9.75) {$W=\cup_{i=1}^k W_i$=};
\node[fill=none, draw=none, text=black] at (-2.8,9.05) {$\cup_{i=1}^k\{w_i^j|1\leq j\leq h\}$};
\filldraw[fill=white, draw=black] (0,12.1) ellipse (0.55cm and 0.55cm);
\node[fill=none, draw=none, text=black] at (0,12.1) {$X_1$};
\node[fill=none, draw=none, text=black] at (0,13) {$\cdot$};
\node[fill=none, draw=none, text=black] at (0,13.2) {$\cdot$};
\node[fill=none, draw=none, text=black] at (0,13.4) {$\cdot$};
\filldraw[fill=white, draw=black] (0,14.3) ellipse (0.55cm and 0.55cm);
\node[fill=none, draw=none, text=black] at (0,14.3) {$X_{n-k}$};
\draw[decorate, decoration={brace, amplitude=8pt}]
  (-0.5,11.5) -- (-0.5,14.9);
\node[fill=none, draw=none, text=black] at (-2.8,13.5) {$X=\cup_{i=1}^{n-k} X_i$=};
\node[fill=none, draw=none, text=black] at (-2.8,12.8) {$\cup_{i=1}^{n-k}\{x_i^j|1\leq j\leq h\}$};
\node[fill=none, draw=none, text=black] at (0,-2) {clique $K$};
\node[fill=none, draw=none, text=black] at (6,-2) {independent set $I$};
\filldraw[fill=white, draw=black] (6,0) ellipse (0.55cm and 0.9cm);
\node[fill=none, draw=none, text=black] at (6,0) {$V_1^I$};
\node[fill=none, draw=none, text=black] at (6,1.3) {$\cdot$};
\node[fill=none, draw=none, text=black] at (6,1.5) {$\cdot$};
\node[fill=none, draw=none, text=black] at (6,1.7) {$\cdot$};
\filldraw[fill=white, draw=black] (6,3) ellipse (0.55cm and 0.9cm);
\node[fill=none, draw=none, text=black] at (6,3) {$V_n^I$};
\draw[decorate, decoration={brace, mirror, amplitude=8pt}]
  (6.5,-1) -- (6.5,4);
\node[fill=none, draw=none, text=black] at (9,1.9) {$V_n^I=\cup_{i=1}^nV_i^I$=};
\node[fill=none, draw=none, text=black] at (9,1.2) {$\cup_{i=1}^n\{\overline{v}_{i}^{j}|1\leq j\leq r'\}$};
\filldraw[fill=white, draw=black] (6,5.5) ellipse (0.55cm and 1.1cm);
\node[fill=none, draw=none, text=black] at (6,5.5) {$Y_1$};
\node[fill=none, draw=none, text=black] at (6,7) {$\cdot$};
\node[fill=none, draw=none, text=black] at (6,7.2) {$\cdot$};
\node[fill=none, draw=none, text=black] at (6,7.4) {$\cdot$};
\filldraw[fill=white, draw=black] (6,9) ellipse (0.55cm and 1.1cm);
\node[fill=none, draw=none, text=black] at (6,9) {$Y_k$};
\draw[decorate, decoration={brace, mirror, amplitude=8pt}]
  (6.5,4.3) -- (6.5,10.2);
\node[fill=none, draw=none, text=black] at (9,7.7) {$Y=\cup_{i=1}^k Y_i=$};
\node[fill=none, draw=none, text=black] at (9,7) {$\cup_{i=1}^k\{{y}_{i}^{j}|1\leq j\leq y\}$};
\filldraw[fill=white, draw=black] (6,12) ellipse (0.55cm and 1.5cm);
\node[fill=none, draw=none, text=black] at (6,12) {$Z_1$};
\node[fill=none, draw=none, text=black] at (6,14) {$\cdot$};
\node[fill=none, draw=none, text=black] at (6,14.2) {$\cdot$};
\node[fill=none, draw=none, text=black] at (6,14.4) {$\cdot$};
\filldraw[fill=white, draw=black] (6,16.5) ellipse (0.55cm and 1.5cm);
\node[fill=none, draw=none, text=black] at (6,16.5) {$Z_{n-k}$};
\draw[decorate, decoration={brace, mirror, amplitude=8pt}]
  (6.5,10.5) -- (6.5,18);
\node[fill=none, draw=none, text=black] at (9.5,14.5) {$Z=\cup_{i=1}^{n-k} Z_i=$};
\node[fill=none, draw=none, text=black] at (9.5,13.7) {$\cup_{i=1}^{n-k}\{{z}_{i}^{j}|1\leq j\leq nr'+y\}$};
\filldraw[fill=white, draw=black] (6,20) ellipse (0.55cm and 1.3cm);
\node[fill=none, draw=none, text=black] at (6,20) {$Y'$};
\node[fill=none, draw=none, text=black] at (9.5,20.35) {$Y'=$};
\node[fill=none, draw=none, text=black] at (9.5,19.65) {$\{{y'}_{i}|1\leq i\leq n(k-1)r'\}$};
\end{tikzpicture}
\caption{Illustration of the construction for Theorem~\ref{thm::splitnph}. Black and gray edges denote complete connections, red edges are drawn according to the edges in $E$. The edges connecting the $v_i^j$ to everything else and the edges of the clique $K$ are omitted.}
\label{npfig}
\end{center}
\end{figure}

We claim that $(G,k)$ is a yes-instance for {\sc Dominating Set} if and only if $(G',n,D)$ is a yes-instance for \DGP{} with $D=d_l(|K|,\alpha)+nd_c(\alpha)$. \\

(\textbf{Forward direction}) For the forward direction, let us assume  that $(G,k)$ with $G=(V,E)$ and $V=\{v_1,\dots,v_n\}$  and $k<\frac n2$ is a yes-instance for  {\sc Dominating Set}. Let $S\subset V$ be a dominating set for $G$ with $|S|=k$. Let $S=\{v_{p_1},\dots, v_{p_k}\}$ and let $V_1,\dots, V_k$ be any partition of $V$ such that $\{v_{p_i},u\}\in E$ or $u=v_{p_i}$ for each $u\in V_i$, for all $1\leq i\leq k$. Such a partition exists, because $S$ is a dominating set and thus each $u$ in $V$ is equal or adjacent to at least one $v_{p_i}$-vertex. For every $i\in [k]$, let $l_{i}=|V_{i}|$. Denote further $V\setminus S=\{v_{p_{k+1}},\dots, v_{p_{n}}\}$. Let $Y'_{i}$, $i\in [k]$ be a partition of $Y'$ as follows:
$$Y'_{i}=\left\{y'_{t}\mid r'\left(n(i-1)-\sum_{p=1}^{i-1}l_{p}\right)+1\leq t \leq r'\left(ni-\sum_{p=1}^{i}l_{p}\right)\right\},$$
where $\sum_{p=1}^{0}l_{p}=0$. That is $Y'_{i}$ contains $(n-l_{i})r'$ elements of $Y'$. (Observe that by definition $Y'_{i}\cap Y'_{j}=\emptyset$ for every $i\neq j$. Moreover, since $\sum_{i=1}^{k}|Y_{i}|=|Y'|$, they indeed form a partition of $Y'$.)
We will construct a partition with $n$ sets of density $D$ for the corresponding graph $G'$ as follows. 

We create the partition  $\mathcal P= \{D_1,\dots, D_k, P_1,\dots, P_{n-k}\}$ defined as follows:
For every $i\in [k]$, let
\begin{align*}
D_i & =V^{K}_{p_{i}} & \text{all $r$ copies of $v_{p_i}$ in $K$} \\
&\cup W_{i} & \text{the $h$ vertices of $W_{i}$} \\
&\cup_{v_{t}\in V_{i}}V^{I}_{t} & \text{all copies of vertices assigned to $v_{p_i}$ by $V_i$ that are in $V^{I}$} \\
&\cup Y_{i} & \text{the $y$ vertices of $Y_{i}$}\\ 
&\cup Y'_{i}& \text{buffer from $Y$ to get $r'n+y$ vertices of $I$ in $D_{i}$}\end{align*}

\noindent Moreover, for every $i\in [n-k]$, let
\begin{align*}
P_i &=V^{K}_{p_{k+i}}
 & \text{all copies of $v_{p_{i+k}}$ in $K$}\\
&\cup X_{i} & \text{the $h$ vertices of $X_{i}$} \\ 
&\cup Z_{i} & \text{the $nr'+y$ vertices of $Z_{i}$}
\end{align*}

We now claim that $\mathcal{P}$ is a proportional partition of $G'$ where each part induces a complete split graph.
Towards showing that it is proportional
let first $i\in [k]$. Observe that by construction $K\cap D_{i}=V_{p_{i}}^{K}\cup W_{i}$, $I\cap D_{i}=Y_{i}\cup Y_{i}'\cup \cup_{v_{t}\in V_{i}}V_{t}^{I}$ and thus, also $|K\cap D_{i}|=r+h$ + $|I\cap D_{i}|=y+r'(n-l_{i})+l_{i}r'=nr'+y$.
Let now $i\in [n-k]$. Observe that by construction $K\cap P_{i}=V^{K}_{p_{k+i}}\cup X_{i}$, $I\cap P_{i}=Z_{i}$, and thus, also $|K\cap P_{i}|=r+h$ and $|I\cap P_{i}|=nr'+y$.
Therefore, for every $i\in [k]$ and every $j\in [n-k]$, $|D_i\cap K|=|P_j\cap K|=r+h=\frac{|K|}{n}$ and  $|D_i\cap I|=|P_j\cap I|=nr'+y=\frac{|I|}{n}$.

Towards showing that each part is 
complete let first $i\in [k]$, then in 
order to show that $D_{i}$ induces a 
complete split graph, 
since $K$ is a clique, it is enough to 
show that every vertex in 
$Y_{i}\cup Y_{i}'\cup \cup_{v_{t}\in V_{t}^{I}}$ 
is connected to every vertex of 
$V_{p_{i}}^{K}\cup W_{i}$. 
The edges between $W_{i}$ and $Y_{i}$ 
have been added to $G'$ by the edge 
set $E_{wy}$, the edges between $W_{i}$ 
and $Y_{i}'$ have been added to $G'$ by the edge set $E_{wy'}$, and the 
edges between $W_{i}$ and 
$\cup_{v_{t}\in V_{i}}V_{t}^{I}$ have 
been added to $G'$ by the edge set 
$E_{V^{I}w}$.
The edges between $V_{p_{i}}$ and the  
sets $Y_{i}$ and $Y_{i}'$ have been 
added to $G'$ by the edge set 
$E_{V}$. Furthermore, let $v_{p_{i}}^{q}\in V_{p_{i}}$, for some $q\in [r]$ and $v_{t}^{b} \in V_{t}^{I}$ where $t$ is such that $v_{t}\in V_{i}$ and $b\in [r']$. Then by the construction of $V_{i}$, $\{v_{p_{i}},v_{t}\}$ is an edge of $G$ or $v_{p_{i}}=v_{t}$ and thus $\{v_{p_{i}}^{q},v_{t}^{b}\}$ has been added to $G$ by the edge set $E_{d}$.
Let now $i\in [n-k]$.
As before, in 
order to show that $P_{i}$ induces a 
complete split graph, 
since $K$ is a clique, it is enough to 
show that every vertex in 
$Z_{i}$ 
is connected to every vertex of 
$V_{p_{k+i}}^{K}\cup X_{i}$.
The edges between $X_{i}$ and $Z_{i}$ have been added to $G'$ by the edge
set $E_{xz}$ and the edges between 
$V^{K}_{p_{k+i}}$ and $Z_{i}$ have been added to $G'$ by the edge set $E_{V}$.

Thus $\mathcal P$ is a proportional partition into $n$ complete split graphs, which by Observation~\ref{obs::splitdense_proportional} shows that $d(\mathcal P)=d_l(|K|,\alpha)+nd_c(\alpha)=D$.

Therefore, $(G',n,D)$ is a yes-instance for \DGP{} and this concludes the forward direction of the reduction. \\

(\textbf{Reverse direction}) Let us now assume that  $(G',n,D)$ is a yes-instance and thus there exists a partition $\mathcal P$ of $G'$ with $|\mathcal P|=n$ and $d(\mathcal P)\geq D$.

Before we going into the numerical 
details of the proof, let us first 
illustrate the intuition behind the 
proof. Essentially, our proof can be 
broken down into the following three 
claims. 

We first show the following claim that is going to be used throughout all parts of our proof:
\begin{claim}
\label{clm:missingedges}
Every part $P$ in $\mathcal{P}$ has strictly less than  $n^9$ missing edges.
\end{claim}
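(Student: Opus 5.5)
We prove Claim~\ref{clm:missingedges} by contradiction, comparing $d(\mathcal P)$ with an upper bound on what any $n$-part partition of $G'$ can achieve. The key point is that $D$ equals $d_l(|K|,\alpha)+nd_c(\alpha)$. The leading term $d_l(|K|,\alpha)=\frac12|K|(1+\frac{\alpha}{1+\alpha})$ is exactly the $d_l$-value of the complete split graph on $K\cup I$. The correction $nd_c(\alpha)$ has absolute value less than $n/2$, and $d_c(\alpha)\approx -\frac{1}{2\alpha}$ is tiny since $\alpha\approx n^{13}$.

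\textbf{Step 1: a penalty-free upper bound.} For an arbitrary partition, write each part $P$ as a split graph with clique side $P\cap K$ of size $n_P$ and independent side of size $\alpha_P n_P$, with $r_P$ edges missing between the two sides. Parts with $n_P=0$ are isolated vertices and contribute $0$. By Observation~\ref{obs::splitdense_missing},
\[
d(P)=d_l(n_P,\alpha_P)+d_c(\alpha_P)-\frac{r_P}{n_P(1+\alpha_P)}.
\]
Applying Lemma~\ref{lem::splitdense2} iteratively to merge parts gives $\sum_P d_l(n_P,\alpha_P)\le d_l(|K|,\alpha)$, because the $d_l$-loss from any split is nonnegative. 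Independent vertices left in parts with $n_P=0$ only lower the total, by Lemma~\ref{lemma: complete split density} and monotonicity of $d_l$ in $\alpha$. Since $d_c<0$,
\[
d(\mathcal P)\le d_l(|K|,\alpha)-\sum_P\frac{r_P}{|P|}.
\]

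\textbf{Step 2: the contradiction.} Suppose some part $P$ has $r_P\ge n^9$. Then $d(\mathcal P)\ge D$ forces
\[
\frac{r_P}{|P|}\le -nd_c(\alpha)=\frac{n}{2(1+\alpha)}.
\]
Since $|P|\le |K|+|I|=n(r+h)(1+\alpha)$, we get $r_P\le \frac{n^2(r+h)}{2}\approx \frac{n^9}{2}$, which contradicts $r_P\ge n^9$ for $n$ large.

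The main obstacle is this last bookkeeping. The room available is $\frac{n}{2(1+\alpha)}\cdot|P|$, and the constants must be checked against $r=n^7$, $h=n^4$ and $\alpha=\frac{n^{11}+n^{20}}{n^7+n^4}$. I expect the estimate $r_P<\frac{n^2(r+h)}{2}$, which is at most about $\frac12 n^9+o(n^9)<n^9$, to close it, using $|P|<|V(G')|$ strictly.

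If the paper counts missing edges inside $I$ as well, I would restate the claim for cross edges only, since $I$ is independent.

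The subtle step is justifying $\sum_P d_l(n_P,\alpha_P)\le d_l(|K|,\alpha)$ for parts with arbitrary ratios. I would derive it from concavity: $n\mapsto \frac12 n(1+\frac{\alpha}{1+\alpha})$ with $\alpha=a/n$ equals $\frac12(n+\frac{an}{n+a})$, which is jointly concave and $1$-homogeneous in $(n,a)$, hence subadditive. This is exactly the content of Lemma~\ref{lem::splitdense2} ($c_1,c_2>0$).
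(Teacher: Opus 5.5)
Your argument is correct and is essentially the paper's proof. Both bound $d(\mathcal P)$ by $d_l(|K|,\alpha)$ minus the missing-edge penalty $r_P/|P|$, using Lemma~\ref{lem::splitdense2} to absorb the $d_l$-terms and discarding the negative $d_c$-terms, then compare with $D=d_l(|K|,\alpha)+nd_c(\alpha)$ and use $|P|<|K|+|I|$ together with $n|K|=n^9+n^6<2n^9$. The paper coarsens to the two-part split $\{P,(K\cup I)\setminus P\}$ via Inequality~\eqref{overest} instead of summing over all parts, which is only a cosmetic difference.

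One wording slip: concavity plus $1$-homogeneity makes $(n,a)\mapsto\frac12\bigl(n+\frac{an}{n+a}\bigr)$ \emph{superadditive}, not subadditive. Superadditivity is exactly the inequality $\sum_P d_l(n_P,\alpha_P)\le d_l(|K|,\alpha)$ that you use.
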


Then we argue about the structural 
properties of $\mathcal{P}$ and show 
the interaction of each part with the
sets $I\setminus V^{I}$ and 
$K\setminus V^{K}$. In particular, 
$\mathcal{P}$ has the following 
properties:
\begin{claim}\label{clm:structural}
The partition $\mathcal{P}$ can be written as $\mathcal P=\{D_1,\dots, D_k, P_{1},\dots, P_{n-k}\}$ with $W_{i}\subseteq D_{i}\subseteq W_{i}\cup V^{K}$, $i\in [k]$, and $X_{i}\subseteq P_{i}\subseteq X_{i}\cup V^{K}$, $i\in [n-k]$.
Furthermore, these sets admit the following bounds:
\begin{enumerate}
\item $|D_i\cap I|>nr'+n^{10}$, $i\in [k]$, 
\item $|P_i\cap I|>nr'+n^{10}$, $i\in [n-k]$,
\item $|Y_{i}\cap D_i|\geq \frac{n-1}{n}y$, $i \in [k]$, and
\item $|Z_{i}\cap P_i|>\frac{n-1}{n}y$, $i\in [n-k]$.
\end{enumerate}
\end{claim}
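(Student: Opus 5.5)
The plan is to derive the structure of $\mathcal P$ from two sources: the missing-edge bound of Claim~\ref{clm:missingedges}, and the requirement $d(\mathcal P)\geq D=d_l(|K|,\alpha)+n\,d_c(\alpha)$ with exactly $n$ parts. Throughout I read the statement as describing $D_i\cap K$ and $P_i\cap K$.

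\textbf{Step 1: every part has many $I$-vertices.} Write each part $P$ as a split graph with $n_P=|P\cap K|$ and $\alpha_P n_P=|P\cap I|$. By Observation~\ref{obs::splitdense_missing}, $d(P)\le d_l(n_P,\alpha_P)+d_c(\alpha_P)$. Iterating Lemma~\ref{lem::splitdense2} over the $n$ parts, the total $d_l$-loss relative to $d_l(|K|,\alpha)$ is $\sum_P c_P n_P$, where $c_P$ is of order $\gamma(\alpha_P-\alpha)^2$. The $d_c$-terms lie in $(-\tfrac12,0]$, so the slack available in $D$ is at most $O(n)$. With $\alpha\approx n^{16}$, $n_P\ge 1$ and $\gamma\approx (2\alpha^3)^{-1}$ when $\alpha_P\approx\alpha$, this forces $\alpha_P$ to lie within a factor $1\pm o(1/n)$ of $\alpha$ for every part. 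In particular no part can be $K$-heavy or $I$-poor. A part with $n_P\ge 1$ therefore has $|P\cap I|\gtrsim n_P\,\alpha$, and summing over the parts shows that the $K$-sides are close to $r+h$ each. This gives $|P\cap I|>nr'+n^{10}$ for all parts, which is items 1 and 2.

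\textbf{Step 2: the filler blocks separate the parts.} The set $V^I\cup Y'$ has only about $n^{12}$ vertices. Hence by Step 1 almost all of $P\cap I$, namely at least about $y=n^{20}$ vertices, lies in $Y\cup Z$. Each vertex of $Y_i$ (resp.\ $Z_i$) has $K$-neighbourhood exactly $V^K\cup W_i$ (resp.\ $V^K\cup X_i$). Fix a part $P$ and let $F$ be the filler block ($W_i$ or $X_i$) whose matching $I$-group ($Y_i$ or $Z_i$) is the largest in $P\cap I$; that group contributes at least roughly $n^{20}/n$ vertices. Any vertex of $P$ from a different filler block is non-adjacent to all of these, producing at least $n^{19}\gg n^9$ missing edges, which contradicts Claim~\ref{clm:missingedges}. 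So $P\cap K\subseteq F\cup V^K$ for a single block $F$.

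Next, each part contains some filler vertex: otherwise its $K$-side is contained in $V^K$, and Step 1 together with the imbalance bound excludes this (alternatively, pigeonhole plus the total $K$-mass of the filler blocks). There are exactly $n$ filler blocks and $n$ parts. Moreover, a block split across two parts would force one of those parts to have no dominant group of matching $Y$/$Z$ vertices, again by the missing-edge count. Hence the correspondence between parts and blocks is a bijection with $F\subseteq P$, which gives the decomposition $\{D_1,\dots,D_k,P_1,\dots,P_{n-k}\}$.

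\textbf{Step 3: items 3 and 4.} In $D_i$, every $I$-vertex outside $Y_i\cup V^I\cup Y'$ is a $Z$ or $Y_j$ vertex non-adjacent to all $h=n^4$ vertices of $W_i$. By Claim~\ref{clm:missingedges} there are fewer than $n^9/n^4=n^5$ such vertices. Combining this with $|D_i\cap I|>nr'+n^{10}$ and $|V^I\cup Y'|\le n^{12}$ does not yet give the full $\frac{n-1}{n}y$. For that I would use a sharper form of the proportionality in Step 1: $|D_i\cap I|\ge (1-1/(2n))(nr'+y)$, which should follow from the $O(n)$ slack against a loss of order $n_P(\alpha_P-\alpha)^2/\alpha^3$. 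I expect this quantitative step to be the main obstacle, because the slack must be calibrated against the $c_P$ of Lemma~\ref{lem::splitdense2}, and these can be large as noted in the counterexample. I would first try bounding $\gamma(\alpha_P-\alpha)^2 n_P$ directly, using $n_P\ge h=n^4$, which holds once the filler block lies in the part. The argument for $P_i$ and $Z_i$ is identical.
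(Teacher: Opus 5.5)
Step 1 does not establish what Steps 2 and 3 rely on. The parameters are off: $\alpha=\frac{nr'+y}{r+h}\in\Theta(n^{13})$, not $n^{16}$. The $d_l$-loss you can afford is not $O(n)$ but less than $-n\,d_c(\alpha)=\frac{n}{2(1+\alpha)}\in\Theta(n^{-12})$, since $d(\mathcal P)\le\sum_P\bigl(d_l(P)+d_c(P)\bigr)$ with every $d_c(P)<0$. With an $O(n)$ budget your inequality $n_P(\alpha_P-\alpha)^2/\alpha^3\lesssim n$ is vacuous. With the correct budget, the exact Jensen gap between $P$ and the rest still only confines $\alpha_P$ to roughly $[\alpha/n,\,n\alpha]$ when $|P\cap K|$ is small. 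So ``within $1\pm o(1/n)$ for every part'' does not follow, and estimating $\gamma$ via $\alpha_P\approx\alpha$ is circular. The claim ``the $K$-sides are close to $r+h$ each'' is a non sequitur: ratios near $\alpha$ say nothing about the individual $|P\cap K|$. What you do get, and what suffices for items 1 and 2, is the weak bound $1+\alpha_P>\frac{1+\alpha}{n}$, i.e.\ Inequality~\eqref{overest1}.

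But then the premise of Step 2, that about $y$ vertices of $P\cap I$ lie in $Y\cup Z$, is unavailable. The critical case is left open: $u\in W_i$ and $v\in W_j$ in one part whose $I$-side sits in their common neighbourhood $V^I\cup Y'$, of size $nkr'$. The paper handles it directly. Such a part has at least $|P\cap I|-nkr'$ missing edges, so $|P\cap I|<nkr'+n^9$, and \eqref{overest1} with $|P\cap K|\ge 2$ gives $D-d(\mathcal P)>0$. Your dominant-group argument can be repaired by feeding it $|P\cap I|>2\bigl(\frac{1+\alpha}{n}-1\bigr)\approx 2n^{12}>nkr'$ instead.

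Also, a filler block split between two parts does not force a part without a dominant group: each half of $W_i$ can take half of $Y_i$. The right argument is pigeonhole. Each part meets at most one of the $n$ filler blocks and there are exactly $n$ parts, so every block lies wholly in its own part. This also shows every part meets $K$.

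For items 3 and 4 the key idea is missing, and the sharper proportionality you propose would not close the gap even if proven. $\alpha_{D_i}\approx\alpha$ only gives $|D_i\cap I|\approx\alpha\,|D_i\cap K|$, and you have no lower bound on $|D_i\cap K|$ beyond $h=n^4$. The paper obtains $|P\cap V^K|>r/n^3$ only afterwards, in Claim~\ref{clm:ViVkbounds}, and it uses items 3 and 4 to do so.

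Instead of counting missing edges inside $D_{i_0}$, count them where the displaced vertices end up. If $|Y_{i_0}\cap D_{i_0}|<\frac{n-1}{n}y$, then more than $\frac{y}{n}$ vertices of $Y_{i_0}$ lie in the other $n-1$ parts. So some part $P\neq D_{i_0}$ contains at least $\frac{y}{n(n-1)}$ of them. By the bijection from Step 2, $P$ contains an entire filler block ($W_i$ with $i\neq i_0$, or some $X_j$) of $h$ vertices. None of these is adjacent to $Y_{i_0}$, because $N(Y_{i_0})=V^K\cup W_{i_0}$. Hence $P$ has at least $h\cdot\frac{y}{n(n-1)}>n^9$ missing edges, contradicting Claim~\ref{clm:missingedges}. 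Item 4 is the same argument with $Z_{i_0}$, whose neighbourhood is $V^K\cup X_{i_0}$.
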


Finally, we show how the parts of $\mathcal{P}$ interact with the sets $V^{K}$ and $V^{I}$. In particular,
\begin{claim}\label{clm:ViVkbounds}
The following bounds hold:
\begin{enumerate}
\item $|P_{i} \cap V^{I}| < n^6$ for all $i\in [n-k]$ and
\item $|P\cap V^{K}| > \frac{r}{n^3}$ for each $P\in\mathcal P$.
\end{enumerate}
\end{claim}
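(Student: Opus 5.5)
For the first bound I would argue purely through missing edges, using Claim~\ref{clm:missingedges} and Claim~\ref{clm:structural}. By Claim~\ref{clm:structural}, every $P_i$ with $i\in[n-k]$ contains all of $X_i$ (that is, $h=n^4$ clique vertices) and meets $W$ in no vertex. By construction a vertex of $V^I$ is adjacent only to vertices of $V^K$ (via $E_d$) and of $W$ (via $E_{V^Iw}$), so it has no neighbour in $X$. Hence every vertex of $P_i\cap V^I$ is non-adjacent to all $h$ vertices of $X_i\subseteq P_i\cap K$. So $G'[P_i]$ misses at least $|P_i\cap V^I|\cdot n^4$ edges of the complete split graph on $(P_i\cap K)\cup(P_i\cap I)$. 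If $|P_i\cap V^I|\geq n^6$, this gives at least $n^{10}>n^9$ missing edges, contradicting Claim~\ref{clm:missingedges}. In fact this argument already yields $|P_i\cap V^I|<n^5$.

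For the second bound I would use a density-counting argument rather than missing edges. The reason is that $Z$, $Y$ and $Y'$ are fully adjacent to $V^K$, so a part with few $V^K$ vertices is not penalised by missing edges. Write $a_P=|P\cap K|$ and $b_P=|P\cap I|$. An upper bound on $d(P)$ is the density of the complete split graph on these sets, namely
\[
\tfrac12\Bigl(a_P+\tfrac{a_Pb_P}{a_P+b_P}\Bigr)-\tfrac{b_P}{2(a_P+b_P)},
\]
which is Lemma~\ref{lem::splitdense} rewritten. Summing over the $n$ parts and comparing with $D=d_l(|K|,\alpha)+nd_c(\alpha)$, the linear parts differ by
\[
\Delta=\sum_P\frac{a_P^2}{a_P+b_P}-\frac{A^2}{A+B},\qquad A=|K|,\ B=|I|,
\]
using $\frac{ab}{a+b}=a-\frac{a^2}{a+b}$. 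The function $a^2/(a+b)$ is convex, so $\Delta\geq 0$; this is the content of Lemma~\ref{lem::splitdense2} applied repeatedly. The constant terms differ by at most about $n/(2\alpha)=\Theta(n^{-12})$, since $\alpha=\Theta(n^{13})$ and each $b_P/(a_P+b_P)\le 1$.

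So it suffices to show that a part $P$ with $|P\cap V^K|\leq r/n^3=n^4$ forces $\Delta=\Omega(n^{-6})$. For such a part, Claim~\ref{clm:structural} gives $a_P\leq h+n^4=2n^4$ while $b_P>nr'+n^{10}$, so $P$ takes roughly a $1/n$ fraction of $I$. I would merge all other parts into one group $Q$ and apply convexity once more. This group has $a_Q\geq A-2n^4=\Theta(n^8)$ but $b_Q\leq B(1-\tfrac1n)$ roughly, so
\[
\frac{a_Q^2}{a_Q+b_Q}-\frac{A^2}{A+B}\gtrsim \frac{A^2}{B}\cdot\frac1n=\Theta\!\left(\frac{n^{16}}{n^{21}\cdot n}\right)=\Theta(n^{-6}).
\]
This exceeds the $O(n^{-12})$ slack from the constant terms. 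It would give $d(\mathcal P)<D$, a contradiction.

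The main obstacle is this asymptotic bookkeeping: the linear gap $\Delta$ and the $d_c$-slack are both tiny (negative powers of $n$). I would therefore carry the exact expressions, with $|K|=(r+h)n$ and $|I|=(nr'+y)n$, through the convexity step rather than the rough estimates. I would also make sure that the lower bound on $b_P$ from Claim~\ref{clm:structural} (items 1 and 2) is what drives $b_Q$ away from $B$ by a $1/n$ fraction.
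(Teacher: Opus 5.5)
Your first bound is the paper's argument. For the second bound, your two-group convexity comparison is the same mechanism as the paper's use of \eqref{overest2} with Lemma~\ref{lem::splitdense2}: the paper's gap $\frac{1}{2(1+\alpha)}(\Theta(n^7)-n)$ is your $\Theta(n^{-6})$ against $\Theta(n^{-12})$. The gap in your proposal is in the input you feed into that comparison.

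\textbf{The wrong input.} Items 1 and 2 of Claim~\ref{clm:structural} give only $b_P>nr'+n^{10}=n^{11}+n^{10}$. But $|I|=n(nr'+y)=n^{21}+n^{12}$, so this is roughly an $n^{-10}$ fraction of $I$, not a $1/n$ fraction. You have conflated $nr'+n^{10}$ with $|I|/n=nr'+y$. Moreover, no argument can succeed from items 1--2 alone. Together with $a_P\leq 2n^4$ they are consistent with $b_P/a_P=\alpha$, e.g.\ $b_P\approx\alpha a_P=\Theta(n^{17})$. In that case $P$ versus the rest is a proportional split, $\frac{a_P^2}{a_P+b_P}+\frac{a_Q^2}{a_Q+b_Q}=\frac{A^2}{A+B}$, and your convexity gap is zero.

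\textbf{The fix.} Use items 3 and 4 instead: $P$ contains at least $\frac{n-1}{n}y=(n-1)n^{19}$ vertices of its own $Y_i$ (resp.\ $Z_i$), so $b_P\geq (n-1)n^{19}$.
\begin{itemize}
\item This gives $b_P/a_P\gtrsim n^{16}/2\gg\alpha=\Theta(n^{13})$ and $b_Q\leq B-(n-1)n^{19}$.
\item Hence $\frac{a_Q^2}{a_Q+b_Q}\geq\frac{A^2}{A+B}\bigl(1-O(n^{-4})\bigr)\bigl(1+\frac1n-O(n^{-2})\bigr)$.
\item Since $\frac{A^2}{A+B}=\Theta(n^{-5})$, you get $\Delta=\Omega(n^{-6})$, as you wanted. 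This is precisely the lower bound on $|P\cap I|$ the paper uses.
\end{itemize}

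\textbf{A smaller slip in the constant term.} The complete split density is $\frac12\bigl(a_P+\frac{a_Pb_P}{a_P+b_P}\bigr)-\frac{a_P}{2(a_P+b_P)}$, not the version with $-\frac{b_P}{2(a_P+b_P)}$. Your version is below the true value whenever $b_P>a_P$, so it is not an upper bound. Also, ``$b_P/(a_P+b_P)\le 1$'' would only control the constants up to $n/2$. The correct bookkeeping is:
\begin{itemize}
\item every part's constant term is nonpositive, while $D$'s is $-\frac{n}{2(1+\alpha)}$;
\item the linear parts differ by $\frac12\Delta$, not $\Delta$;
\item so $D-d(\mathcal P)\geq\frac12\Delta-\frac{n}{2(1+\alpha)}$, which is positive once $\Delta=\Omega(n^{-6})$.
\end{itemize}
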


Assuming the above claims hold, we show how to construct a dominating set of $G$ of size at most $k$.\\

{\bf Creating the Dominating Set:}
 By the second item of Claim~\ref{clm:ViVkbounds} each part $D_i$, $i\in [k]$, contains at least $\frac{r}{n^3}$ vertices from $V^{K}$.
 Let $S=\{v_{r_{1}},v_{r_{2
}},\dots, v_{r_{k}}\}$ be some set of $k$ vertices of $G$ such that $v_{r_{i}}^{j}\in D_{i}$ for some $j\in [r]$ (where it might be the case that $v_{r_{i}}=v_{r_{j}}$ for $i\neq j$.)

We prove that $S$ is a dominating set of $G$.
Assume, towards a contradiction, 
that there is some $v_t\in V$ that 
is not equal or adjacent to any of the 
vertices in $S$.

By the first item of 
Claim~\ref{clm:ViVkbounds}, at most 
$(n-k)n^{6}$ vertices of $V^{I}$ belong 
to the parts $P_{i}$, $i\in [n-k]$.
Since the set $V_{t}^{I}$ that contains 
all copies of $v_{t}$ in the set $I$ has
size $r'=n^{10}$,
at least $r'-(n-k)n^{6}$ of the 
vertices of $V_{t}^{I}$ are contained 
in the parts 
$D_{1}$, $D_{2}$, $\dots$, $D_{k}$.
By the pigeonhole principle, there is some $i_{0}\in [k]$ such that $D_{i_{0}}$ contains at least $(r'-(n-k)n^{6})\frac{1}{k}$ vertices of 
$V_{t}^{I}$.

Since $v_{t}$ is not adjacent to any 
of the vertices in $S$, it is also not 
adjacent to $v_{r_{i_{0}}}$.
However, by construction of the set $S$
$v_{r_{i_{0}}^{j}}\in D_{i_{0}}$. 
Moreover, the potential edges between
$v_{r_{i_{0}}^{j}}$ and the 
$(r'-(n-k)n^{6})\frac{1}{k}$ copies 
of $v_{t}$ are missing in the part
$D_{i_{0}}$.
Therefore, $D_{i_{0}}$ has at least
$(r'-(n-k)n^{6})\frac{1}{k}$ missing edges. Recall that $r'=n^{10}$ and that we have assumed that $k<\frac{n}{2}$. Therefore, 
$$(r'-(n-k)n^{6})\frac{1}{k}>n^{10}\frac{2}{n}-\frac{2(n-k)}{n}n^{6}>2n^{9}-n^{7}>n^{9}.$$
Thus, $D_{i_{0}}$ has more that 
$n^{9}$ missing edges, a contradiction 
to Claim~\ref{clm:missingedges}. Thus 
$S$ is a dominating set of $G$ whose 
size is by construction at most $k$, 
implying that $(G,k)$ is a yes-instance 
for \textsc{Dominating Set}. \\

Thus, what remains is to prove the above claims. \\

\noindent We first argue that for any part $P$ of $\mathcal{P}$ the following three inequalities hold.

\begin{eqnarray} 
D-d(\mathcal P) & > & d_l(|K|,\alpha)+nd_c(\alpha)-\left(d(P)+d_l\left(|K|-|P\cap K|,\tfrac{|I|-|P\cap I|}{|K|-|P\cap K|}\right)\right) \text{ and}
\label{overest}\\
D-d(\mathcal P) & > & d_l(|K|,\alpha)+nd_c(\alpha)-d_l\left(|P\cap K|,\tfrac{|P\cap I|}{|P\cap K|}\right)-d_l\left(|K|-|P\cap K|,\tfrac{|I|-|P\cap I|}{|K|-|P\cap K|}\right)\label{overest2}\\
D-d(\mathcal P) & > & \frac{1}{2(1+\frac{|P\cap I|}{|P\cap K|})}-\frac{n}{2(1+\alpha)}.
\label{overest1}
\end{eqnarray} 
Notice that for any partition $\mathcal P$ of a split graph we have $d(\mathcal P)\leq \sum_{P\in\mathcal P} d_l(P)$ with the equality holding only if the graph is complete and $|\mathcal P|=1$. Furthermore, due to Lemma~\ref{lem::splitdense2} for any partition $\mathcal P$ of a graph $G$ we have that $\sum_{P\in\mathcal P}d_l(P)\leq \sum_{P\in\mathcal P} d_l(|V(P)|,\alpha)= d_l(G)\leq d_l(G_c)$, where $G_c$ is the complete split graph on the same vertex set as $G$. Hence, Inequality~\eqref{overest} holds.

Towards Inequalities~\eqref{overest2} and~\eqref{overest1},
observe first that the maximum possible 
density of $P$ is obtained if $P$ is 
a complete split graph which would 
imply that 
\begin{equation*} d(P) = d_l\left(|P\cap K|,\frac{|P\cap I|}{|P\cap K|}\right)+ d_c\left(\frac{|P\cap I|}{|P\cap K|}\right).
\end{equation*}
Then,
\begin{eqnarray*}\displaystyle
    D-d(\mathcal P) & > & d_l(|K|,\alpha)+nd_c(\alpha)-\left(d(P)+d_l\left(|K|-|P\cap K|,\tfrac{|I|-|P\cap I|}{|K|-|P\cap K|}\right)\right)\\
& > & d_l(|K|,\alpha)+nd_c(\alpha)-d_l\left(|P\cap K|,\tfrac{|P\cap I|}{|P\cap K|}\right)-d_c\left(\frac{|P\cap I|}{|P\cap K|}\right)-d_l\left(|K|-|P\cap K|,\tfrac{|I|-|P\cap I|}{|K|-|P\cap K|}\right)
\end{eqnarray*}
Recall that, by definition, $d_c(l)<0$ for any $l$ and thus we can simplify the above expression by removing the positive term $-d_c\left(\frac{|P\cap I|}{|P\cap K|}\right)$ and obtain Inequality~\eqref{overest2}. 
Towards Inequality~\eqref{overest1}, 
by Lemma~\ref{lem::splitdense2}, 
it holds that
\[d_l(|K|,\alpha)\geq d_l\left(|P\cap K|,\tfrac{|P\cap I|}{|P\cap K|}\right)+d_l\left(|K|-|P\cap K|,\tfrac{|I|-|P\cap I|}{|K|-|P\cap K|}\right).\]
 Plugging this and the maximum possible value of $d(P)$ into Inequality~\eqref{overest}, we get:
\[D-d(\mathcal P)>nd_c(\alpha)- d_c\left(\tfrac{|P\cap I|}{|P\cap K|}\right).\]
By the definition of $d_c$ and $\alpha$ this results in Inequality~\eqref{overest1}. 

Before we start proving the claims, note that we will often use asymptotic arguments claiming that those only hold \emph{for large enough $n$}. Whenever we write something like this, we only assume that $n$ is larger than some constant. Since instances of constant size can be solved trivially, this is not a restriction to the correctness of the overall reduction.

Let us now prove Claim~\ref{clm:missingedges}.
\begin{proofofclaim}[Proof of Claim~\ref{clm:missingedges}]
Let us assume that $\mathcal P$ 
contains some part $P$ with at least
$n^9$ missing edges.
Observation~\ref{obs::splitdense_missing}
and Inequality~\eqref{overest} yield
that 
\begin{eqnarray*}
D-d(\mathcal P)& > &  d_l(|K|,\alpha)+nd_c(\alpha)-\left((d(P)+d_l\left(|K|-|P\cap K|,\tfrac{|K|-|P\cap K|}{|I|-|P\cap I|}\right)\right)\\
 & \geq &  d_l(|K|,\alpha)+nd_c(\alpha)-d_l\left(\tfrac{|P\cap I|}{|P\cap K|},|P\cap K|\right)-d_c\left(\tfrac{|P\cap I|}{|P\cap K|}\right)\\ & & +\tfrac{n^{9}}{|P|}-d_l\left(|K|-|P\cap K|,\tfrac{|I|-|P\cap I|}{|K|-|P\cap K|}\right).\end{eqnarray*}
 
\noindent By 
Lemma~\ref{lem::splitdense2} it holds
that $d_l(|K|,\alpha)\geq d_l(|P\cap K|,\tfrac{|P\cap I|}{|P\cap K|})+d_l(|K|-|P\cap K|,\tfrac{|I|-|P\cap I|}{|K|-|P\cap K|})$
and thus the $d_l$-parts cancel each other out. Therefore,
 \begin{equation}D-d(\mathcal P) >nd_c(\alpha)-d_c\left(\frac{|P\cap I|}{|P\cap K|}\right)+\frac{n^{9}}{|P|}.\end{equation}
Recall that, by definition, $d_c(l)<0$ for any $l$ and thus we can simplify the above expression by removing the positive term $-d_c\left(\frac{|P\cap I|}{|P\cap K|}\right)$ and obtain that:  
\[D-d(\mathcal P) \geq \frac{n^{9}}{|P|}-\frac{n}{2(1+\frac{|I|}{|K|})}=\frac{n^{9}}{|P|}-\frac{n|K|}{2(|K|+|I|)}\]
 
 Since $P$ is a part of a partition of a 
 graph with $|K|+|I|$ vertices 
 $|P|<|K|+|I|$. Moreover, recall that 
 the size of $K$ is $n(r+h)=n^{8}+n^{5}$
 and thus $n|K|=n(n^8+n^5)<2n^{9}$. 
 We conclude that $D-d(\mathcal{P})>\frac{n^{9}}{|K|+|I|}-\frac{n|K|}{2(|K|+|I|)}>0$, 
 a contradiction to the assumption that $d(\mathcal{P})\geq D$. 
\end{proofofclaim}

We now proceed with the next proof.
\begin{proofofclaim}[Proof of Claim~\ref{clm:structural}]
This proof will be broken down into three parts.

First, we show that any part $P$ of 
$\mathcal{P}$ that contains at least one 
vertex of $K$, it also contains at least 
$nr'+n^{10}$ vertices of $I$ (for large enough values of $n$).
Towards a contradiction let $P$ be some 
part of $\mathcal{P}$ with 
$|P\cap K|\geq 1$ and 
$|P\cap I|\leq nr'+n^{10}$ and thus, 
$\frac{|P\cap I|}{|P\cap K|}\leq nr'+n^{10}$. 
Recall that $\alpha=\frac{nr'+y}{r+h}$. 
By plugging the above values in 
Inequality~\eqref{overest1}, we obtain that:

\begin{eqnarray*}\displaystyle
D-d(\mathcal P) & > & \frac{1}{2\left(1+\frac{|P\cap I|}{|P\cap K|}\right)}-\frac{n}{2(1+\alpha)}\\
& > & \frac{1}{2(1+nr'+n^{10})}-\frac{n}{2(1+\frac{y+nr'}{r+h})}\end{eqnarray*}

\noindent We claim that 
$$\frac{1}{2(1+nr'+n^{10})}-\frac{n(r+h)}{2(r+h+y+nr')}>0$$ which is equivalent to 
$ 2(r+h+y+nr')>2(1+nr'+n^{10})n(r+h)$.
Plugging in the definitions of $r,h,y,r'$ this is equivalent to:
\[2(n^7+n^4+n^{20}+nn^{10})>2(1+nn^{10}+n^{10})n(n^7+n^4).\]
Looking at the largest exponent on $n$, we see 
that this is ${20}$ on the left and only $19$ on 
the right. Thus for $n$ large enough, it holds 
that $D-d(\mathcal P)>0$, a contradiction to the 
assumption that $d(\mathcal{P})\geq D$. Therefore, for every part $P$ of $\mathcal{P}$ with $|P\cap K|\geq 1$ it holds that $|P\cap I|> nr'+n^{10}$.  

In particular, we have shown that any partition  
without this property has a density strictly less
than $D$.

For the second part of the proof we will show that for each $P\in\mathcal P$, either $P\cap(W\cup X)\subseteq W_{i}$ for some $i\in [k]$ or  $P\cap(W\cup X)\subseteq X_{i}$ for some $i\in [n-k]$.
Let $P$ be a part of $P$ and let 
$u,v\in V(P\cap K)$. 
We partition $P\cap I$ in the following four 
(possibly empty) sets:
$I_{P,u}$, $I_{P,v}$, $I_{P,u,v}$, and $I_{P,\emptyset}$, where $I_{P,u}$ denotes the vertices of $P\cap I$ that are in the neighborhood of $u$ but not in the neighborhood of $v$, $I_{P,v}$ denotes the vertices of $P\cap I$ that are in the neighborhood of $v$ but not in the neighborhood of $u$, $I_{P,u,v}$ denotes the vertices of $P\cap I$ that are in the neighborhood of both $u$ and $v$ and, finally, $I_{P,\emptyset}$ denotes the vertices of $P\cap I$ that are not a neighbor of either $u$ or $v$.
Observe that for any given $u,v\in V(P\cap K)$ the missing edges of $P$ are at least 
$|I_{P,u}|+|I_{P,v}|+|I_{P,\emptyset}|$ and also, 
$|I_{P,v}|+|I_{P,u}|+|I_{P,\emptyset}|=|P\cap I|-|I_{P,u,v}|$. 
In particular, using the first part of the proof 
the missing edges are at least 
$nr'+n^{10}-|I_{P,u,v}|$.
In the case where $u\in W_{i}$ and 
$v\in W_{j}$, the common neighborhood of $u$ and $v$ in $I$ is $V^{I}\cup Y'$ and thus $|I_{P,u,v}|\leq nkr'$. In the case where $u\in X_{i}$ and 
$v\in X_{j}$, for $i\neq j$, it holds that $|I_{P,u,v}|=0$ as the neighborhoods of the sets $X_{i}$ and $X_{j}$ in $I$ are disjoint.
Moreover, in the case where $u\in W_{i}$ and $v\in X_{j}$ for $i\in [k]$,
$j\in [n-k]$ it holds that $|I_{P,u,v}|=0$ since the neighborhoods of the sets
$W_{i}$ and $X_{j}$ in $I$ are disjoint.
In the last two cases the part $P$ has at least 
$nr'+n^{10}$ missing edges, contradicting 
Claim~\ref{clm:missingedges}. In the first case each vertex of $P\cap I$
outside of $I_{P,u,v}$ contributes one missing edge. Thus, by 
Claim~\ref{clm:missingedges}, 
$$|P \cap I| < |I_{P,u,v}| + n^9 \leq nkr' + n^9.$$
Moreover $u,v \in P \cap K$ implies $|P \cap K| \geq 2$, and since $k < \frac{n}{2}$
and $r' = n^{10}$ we obtain
$$
  \frac{|P \cap I|}{|P \cap K|} \leq \frac{nkr' + n^9}{2} \leq \frac{n^{12}}{4} + \frac{n^9}{2} .
$$
Recall that $\alpha = \frac{y + nr'}{r+h}$, so that
$$  \frac{n}{2(1+\alpha)} \;<\; \frac{n}{2\alpha}
  = \frac{n(r+h)}{2(y+nr')}
  < \frac{n(r+h)}{2y}
  = \frac{n^{8}+n^{5}}{2n^{20}}
  = \frac{1+n^{-3}}{2n^{12}}
  \leq \frac{3}{4n^{12}},$$
where the last inequality holds for every $n \geq 2$.

By plugging the aforementioned 
values into Inequality~\eqref{overest1}, we obtain that
$$
  D - d(\mathcal{P})
  > \frac{1}{2\left(1 + \frac{|P \cap I|}{|P \cap K|}\right)}
        - \frac{n}{2(1+\alpha)}
  \geq \frac{1}{2 + \frac{n^{12}}{2} + n^9}
        - \frac{3}{4n^{12}}
  \geq \frac{1}{n^{12}} - \frac{3}{4n^{12}}
  = \frac{1}{4n^{12}}
  > 0
$$
for large enough $n$ (the middle inequality uses
$2 + \frac{n^{12}}{2} + n^9 \leq n^{12}$, which holds for every $n \geq 2$).
This is a contradiction to the assumption that $d(\mathcal{P}) \geq D$.
%
This concludes the 
second part of the proof.
Since $\mathcal{P}$ has $n$ parts it holds that 
for every $i\in [k]$ there exists some distinct 
of $\mathcal{P}$, say $D_{i}$ that contains all 
of $W_{i}$ and for every $i\in [n-k]$ there exists
some distinct part say $P_{i}$, that contains all 
of $X_{i}$. This proves the rewriting of 
$\mathcal{P}$ in the statement of the claim.
Moreover, since every part of $P$ contains at 
least one vertex from $K$, the first
part of the proof also yields the first two items 
of the claim.

For the third and last part of the proof we show
that the bounds of the third and fourth item also
hold. We prove that the third item holds as the 
fourth item is symmetrical. Towards a 
contradiction let us assume that there exists 
some $i_{0}\in [k]$ such that 
$|Y_{i_{0}}\cap D_{i_{0}}|<\frac{n-1}{n}y$.
Let $Y^{m}_{i_{0}}=Y_{i_{0}}\setminus Y_{i_{0}}\cap D_{i_{0}}$. 
Then $|Y^{m}_{i_{0}}|>\frac{y}{n}$ and 
the vertices of $Y^{m}_{i_{0}}$ are contained 
in the other parts of $\mathcal{P}$. 
By the pigeonhole principle there
is a part $P$ of $\mathcal{P}$ different from 
$D_{i_{0}}$ that contains at least 
$\frac{1}{n(n-1)}y$ vertices of $Y^{m}_{i_{0}}$.
However $P$ also contains $S$ of $h$ vertices 
(either of a set $W_{i}$, $i\neq i_{0}$ or of a 
set $X_{j}$). Since the neighborhood of 
$Y_{i_{0}}$ is, by construction, $W_{i_{0}}\cup V^{K}$, the 
edges between $S$ and $Y^{m}_{i_{0}}$ are missing
edges of $P$. Thus, $P$ has at least
$h\frac{1}{n(n-1)}y>n^{9}$ missing edges, a 
contradiction to Claim~\ref{clm:missingedges}.
The proof of the fourth item follows by 
essentially replacing 
$D_{i_{0}}$ by $P_{i_{0}}$ and noticing that the
neighborhood of $Z_{i_{0}}$ is $X_{i_{0}}$ which
is entirely contained in $P_{i_{0}}$.
This concludes the proof of the claim.
\end{proofofclaim}

We are now ready to move on to the proof of the 
last claim that will complete the reverse 
direction of our reduction and the overall proof.

\begin{proofofclaim}[Proof of Claim~\ref{clm:ViVkbounds}]
For the first bound of the claim, towards a
contradiction, let us assume that for some $t$, 
$|P_{t}\cap V^{I}|>n^6$. Since the vertices of
$V_{I}$ are, by construction, not adjacent to 
any vertex in $Z$, and $P_{t}$ contains $h$ 
vertices of $X_{t}$, it follows that $P_{t}$ contains 
at least $hn^6=n^{10}$ missing edges, which is a 
contradiction to Claim~\ref{clm:missingedges}.

For the second bound of the claim, towards a 
contradiction again, let us assume that for some 
part $P\in \mathcal{P}$ it holds that 
$|P\cap V^{K}|<\frac{r}{n^3}$. 
By Claim~\ref{clm:structural}, it holds that 
\begin{equation}
n^{4}=h\leq |P\cap K|=h+|P\cap V^{K}|<h+\frac{r}{n^3}=n^4+n^4.
\label{prop6}
\end{equation}
Meanwhile, from the third (or fourth) item of 
Claim~\ref{clm:structural} (depending on whether
$P=D_{i}$ for some $i\in [k]$ or $P=P_{j}$ for 
some $j\in [n-k]$, it is true that
$$|P\cap I|\geq \frac{n-1}{n}y=(n-1)n^{19}.$$
We prove that this yields a split that is too far 
from proportional to satisfy the claimed density. 
Recall Inequality~(\ref{overest2}), that is, 
\begin{equation*} 
D-d(\mathcal P) 
 > nd_c(\alpha)+d_l(|K|,\alpha)-d_l\left(|P\cap K|,\tfrac{|P\cap I|}{|P\cap K|}\right)-d_l\left(|K|-|P\cap K|,\tfrac{|I|-|P\cap I|}{|K|-|P\cap K|}\right)
 \end{equation*}

\noindent By Lemma~\ref{lem::splitdense2},
\[d_l(|K|,\alpha)-d_l\left(|P\cap K|,\tfrac{|P\cap I|}{|P\cap K|}\right)-d_l\left(|K|-|P\cap K|,\tfrac{|I|-|P\cap I|}{|K|-|P\cap K|}\right)=\gamma(\alpha_1,\alpha_2,\alpha)n_2(\alpha_2-\alpha_1)(\alpha_2-\alpha)\]
where  $\alpha=\frac{y+nr'}{r+h},  \alpha_1=\frac{|I|-|P\cap I|}{|K|-|P\cap K|},  \alpha_2=\frac{|P\cap I|}{|P\cap K|} $ and $n_2=|P\cap K|$.

By Claim~\ref{clm:structural}, $n_2=|P\cap K|\geq h=n^4$ and, furthermore, by its third and fourth items, it also holds that $\frac{n-1}{n}y\leq|P\cap I|\leq |I|-\frac{(n-1)^2}{n}y$. Combining the above two inequalities with Inequality~\eqref{prop6} we obtain the following upper and lower bounds for $\alpha_1$ and $\alpha_2$, keeping only the dominant terms of the respective polynomials
\[\alpha_1=\frac{|I|-|P\cap I|}{|K|-|P\cap K|}\geq \frac{\frac{(n-1)^2}{n}y}{nr}=\frac{(n-1)^2n^{19}}{n^8}=(n-1)^2n^{11}\in \Omega (n^{13})\]
\[\alpha_1=\frac{|I|-|P\cap I|}{|K|-|P\cap K|}\leq \frac{ny-\frac{n-1}{n}y}{nr}=\left(n-\frac{n-1}{n}\right)n^{12}\in \mathcal{O}(n^{13})\]
\[\alpha_2=\frac{|P\cap I|}{|P\cap K|} \geq \frac{\frac{n-1}{n}y}{2n^4}\in  \Omega(n^{16})\]
\[\alpha_2=\frac{|P\cap I|}{|P\cap K|} \leq \frac{|I|-\frac{(n-1)^2}{n}y}{n^4}\in \Omega\left(\frac{y(n-\frac{(n-1)^2}{n})}{n^4}\right)= \mathcal{O}(n^{17})\] 
This means $\alpha_1\in\Theta(n^{13})$ and $\alpha_2\in\Theta(n^{17})$. By definition $\alpha=\frac{y+nr'}{r+h}\in \Theta(n^{13})$, thus 
\begin{align*}D-d(\mathcal P) > &  \ \gamma(\alpha_1,\alpha_2,\alpha)n_2(\alpha_2-\alpha_1)(\alpha_2-\alpha)+nd_c(\alpha)\\
= & \ \frac{1}{2(1+\alpha)(1+\alpha_1)(1+\alpha_2)}n_2(\alpha_2-\alpha_1)(\alpha_2-\alpha)-\frac{n}{2(1+\alpha)}\\
= & \ \frac{1}{2(1+\alpha)}\left(\frac{1}{(1+\alpha_1)(1+\alpha_2)}n_2(\alpha_2-\alpha_1)(\alpha_2-\alpha)-n\right)\end{align*}

By the above bounds on $\alpha, \alpha_1$ and $\alpha_2$ we have \[\displaystyle \frac{1}{(1+\alpha_1)(1+\alpha_2)}n_2(\alpha_2-\alpha_1)(\alpha_2-\alpha)\in \Theta\left(\frac{1}{n^{13}n^{16}}n^4n^{16}n^{16}\right)=\Theta(n^{7})\,.\]
Thus we can conclude that $D-d(\mathcal P)>0$ and therefore any partition of density at least $D$ needs to satisfy the second item of Claim~\ref{clm:ViVkbounds}. 
\end{proofofclaim}

We have concluded the proofs of all the claims of 
the reverse direction, concluding the reverse 
direction and the overall proof of the reduction.
\end{proof}

\section{Bounded Treewidth Connected Partition}\label{section: treewidth connected partition}
In this section we develop an algorithm such that given a graph $G$ we are able to compute any connected partition of size $k$ with parameters the size of the partition and the treewidth of $G$. Moreover our algorithm receives as an input $k$ integers $n_1,\ldots, n_k$ that sum up to the order of $G$ and are the required size of each part of the desired connected partition.

\conpart

Adding also one specific vertex to each part is possible and as a result this algorithm can also realize Gy\H{o}ri-Lov\'{a}sz theorem on graphs with bounded treewidth. We will also discuss how small adjustments to the arguments used for this result yields a similar runtime for MDGP.

The main difficulty of a dynamic programming algorithm for computing a balanced partition of a graph $G$ is preserving the connectivity of each part even though we process local information at each step. 
Specifically it is possible that a path connecting two vertices has been forgotten making them locally appear as non-connected. In order to tackle this, our algorithm assigns each vertex of a bag in one of the $k$ possible parts and also keeps track of how many vertices belonging in each part have already been forgotten. To ensure that the connectivity information is preserved, every time a vertex is forgotten we {\it force}
some of its neighbors to be a clique by adding as many edges as required. Join nodes combine two independent subgraphs by taking into account both forgotten vertex counts and adjacency information.

The running time of the algorithm occurs from enumerating all possible ways in which the vertices of a bag can be distributed among $k$ parts of the partition and all possible added edges to encode connectivity inside the bag. Join nodes are particularly expensive, as they require combining states from two subtrees. This however is based on the size of the bags and the numbers of parts. 

Formally, given a graph $G=(V,E)$ and a nice tree decomposition $\mathcal T=(\mathcal X, \mathcal E)$, we define for each node $t\in\mathcal X$ the dynamic programming table state 
\begin{eqnarray*}
  \mathcal T_t&=&(S_1,\ldots, S_k, f_1,\ldots,f_k, E)  
\end{eqnarray*}
where $(S_1,\ldots, S_k)$ is a partition of the vertices of the bag $X_t$ into $k$ parts, $f_i$ is the number of vertices each of those parts has already forgotten below $t$ and $E$ the set of edges that inside the bag that encodes the connectivity. Each state stores a value 
\begin{eqnarray*}
    T_t[S_1,\ldots, S_k, f_1,\ldots, f_k, E]\in\{0,1\}
\end{eqnarray*}
indicating whether the partial solution is potentially valid.

Moreover, during the algorithm we process the nodes of the nice tree decomposition bottom-up.
\paragraph*{Leaf nodes.}
Let $t$ be a leaf node of $\mathcal T$ with $X_t=\{v\}$. For each $i\in [k]$, initialize $S_i=\{v\}$, $S_j=\emptyset$ for all $j\in [k]$ with $j\neq i$ and $ v_1=\cdots=v_k=0$ and $E=\emptyset$. Denote also this state as a possibly valid partial solution. Formally for some $i\in [k]$, $\mathcal T_t[S_1,\ldots, S_{i-1},v,S_{i+1},\ldots, S_k, 0_1,\ldots, 0_k, \emptyset]=1$.
Set all other states to $0$.

\paragraph*{Introduce nodes.}
Let $t'\in\mathcal X$ be an introduce node with a child node $t$ and hence $X_{t'}=X_{t}\cup \{v\}$. For every feasible state $\mathcal T_t[S_1,\ldots, S_k, f_1,\ldots, f_k, E]=1$ and every $i\in[k]$ assign $v$ to part $S_i$ meaning: $S_i'=S_i\cup \{v\}$, $S_j'=S_j$ for $j\neq i$, and $E'=E\cup\{vu\mid u\in S_i\cap N_G(v)\}$. The new state is feasible if and only if both $|S_i'|+f_i\leq n_i$ and if $S_i'=\{v\}$ then $f_i=0$. Then, $T_t'[S_1',\ldots, S_k', f_1,\ldots, f_k, E']=1$.

\paragraph*{Forget Nodes}
Let $t'$ be a forget node of $\mathcal T$ with child $t$ and hence $X_{t'}=X_t\setminus \{v\}$. For each feasible state $\mathcal T_{t'}[S_1,\ldots, S_k,f_1,\ldots, f_k,E]=1$ with $v\in S_i$ we update $S_i'=S_i\setminus\{v\}$, $f_i'=f_i+1$ and $f_j'=f_j$ for $j\in[k]$ and $j\neq i$, we remove all edges incident to $v$ from $E$, $E^-=E\setminus\{vu\in E \mid u\in X_{t}\}$ and also add all missing edges in order for $N_{S_i}(v)$ to induce a clique, $E^+=\{uv:u,v\in N_{S_i}(v), u\neq v\}$. Denote $E'=E^-\cup E^+$. The state $\mathcal T_{t'}[S_1',\ldots, S_k', f_1',\ldots, f_k', E']$ is always denoted as valid unless forgetting $v$ results in losing a part of a connected component that can never be reconnected later and has not yet met its size demand, formally $N_{S_i}(v)=\emptyset$ and, ($S_i'\neq \emptyset$ or $f_i+1<n_i$).

\paragraph*{Join Nodes}
Let $t$ be a join node with children $t_1$ and $t_2$, meaning $X_t=X_{t_1}=X_{t_2}$. For each pair $\mathcal T_{t_1},\mathcal T_{t_2}$ of feasible states such that $\mathcal T_{t_1}[S_1,\ldots, S_k, f_1^1,\ldots, f_k^1, E^1]=1$, $\mathcal T_{t_2}[S_1, S_k, f_1^2,\ldots, f_k^2, E_2]=1$ define $f_i=f_i^1+f_i^2$ and $E=E^1\cup E^2$. The combined state is feasible if and only if for all $i\in[k]$ it is not the case that part $i$ is empty in the bag while both subtrees forgot one of its vertices. Formally, $S_i\neq\emptyset$ or $f_i^1=0$ or $f_i^2=0$.

\paragraph*{Root} At the root, $r$, of $\mathcal T$ we accept if and only if there exists a state $\mathcal T_r[\emptyset_1,\ldots, \emptyset_k, f_1,\ldots, f_k, E]=1$, where $f_i=n_i$ for all $i\in[k]$.

\paragraph*{Correctness}
Now we are ready to show that our algorithm outputs yes if and only if $G$ admits a partition $V_1,\ldots, V_k$ with $|V_i|=n_i$ for all $i\in [k]$ and each $G[V_i]$ is connected. We prove correctness on the nice tree decomposition showing that the following four conditions are met throughout the valid states of our algorithm. For every node $t$ of $\mathcal T$ and every dynamic programming table entry $\mathcal T_{t}[S_1,\ldots, S_k, f_1, \ldots, f_k, E]$ that is denoted as feasible there exists an assignment of the vertices of $G$ to parts $V_1,\ldots, V_k$ such that
\begin{enumerate}
    \item $S_i=V_i\cap X_t$
    \item $|V_i\cap (V(G_t)\setminus X_t)|=f_i$
    \item $|S_i|+f_i\leq n_i$
    \item Let $H_i$ be the induced subgraph $G_t[V_i]$.
    If $|S_i|+f_i<n_i$, then every connected component of $H_i$, intersects $S_i$ and if $|S_i|+f_i=n_i$, then $H_i$ is connected.
\end{enumerate}

Each dynamic programming state at a node $t$ represents a partial assignment of the vertices of $G_t$ to the $k$ parts, together with the counters indicating the amount of vertices that have already been forgotten for each part. Notice that the size constraints are respected and that, for every incomplete part, each connected component still has at least one vertex in the current bag, while a part may become fully forgotten only after it has reached its prescribed size and is already connected.

It is straightforward to verify that the four conditions hold for the initial leaf states. Each introduce step preserves the conditions by assigning the new vertex to exactly one part and checking the size constraint.
At forget nodes, by forcing the remaining neighbors to induce a clique, we preserve all the possible connections through that vertex and also the dynamic programming algorithm rejects all the states where an incomplete part loses its last vertex. At join nodes, the algorithm combines two partial solutions only when no incomplete part is separated across the two subtrees without a vertex in the bag to connect them and hence the four conditions stated above are again met.
Lastly, at the root all vertices are forgotten. Thus a feasible state at the root satisfies the four conditions stated if and only if each part has exactly the required size and induces a connected subgraph in $G$.
\paragraph*{Running time}
Let $w=\text{tw}(G)$. Assigning the vertices of some bag of the tree decomposition into $k$ parts can be done in at most $k^{w+1}$ ways. Moreover the amount of forgotten vertices $f_i$ is at most $n_i$ for all $i\in [k]$ giving $n+1$ possibilities for each $f_i$, while the edge set that ensures that the global connectivity is encoded correctly even after forgetting vertices can be done in at most $2^{\binom{w+1}{2}}$ ways. Hence the amount of states per bag is $k^{w+1}(n+1)^k2^{\binom{w+1}{2}}$.
Introduce and forget nodes process each state in time polynomial in $w$ and $k$. Join nodes are the ones that are mainly responsible for the running time of the algorithm, as they require combining pairs of compatible states from two children. Hence, for each $n_i$ there are $n_i^2$ possible pairs of amount of forgotten vertices to be considered. In general since the possible states of a bag are $k^{w+1}(n+1)^k2^{\binom{w+1}{2}}$ then at a join node we need to examine all possible pairs and hence get a running time of $\mathcal O(k^{2(w+1)}(n+1)^{2k}4^{\binom{w+1}{2}})$. Lastly, we process each bag of the decomposition once and at each forget node we add at most $\binom{w}{2}$ edges and also at the join nodes we examine those edges to transfer the connectivity to the union of the parts giving an factor of $w^2$ to the running time. Hence the overall running time of our algorithm is $\mathcal O(nw^2k^{2(w+1)}(n+1)^{2k}4^{\binom{w+1}{2}})$.

Notice that it is quite straightforward to modify this algorithm to realize Gy\H{o}ri-Lov\'{a}sz theorem by simply adding a condition that rejects a state if a terminal vertex is assigned to a ``wrong'' part.

\glpart

Moreover, by slightly modifying it to not encode at each state (and receive as an input) the size of each part, but to encode the amount of forgotten edges and vertices from each part it is also possible to obtain an algorithm for dense partition on graphs with bounded treewidth. Exploiting the fact that each part in an optimum MDGP solution induces a connected subgraph, one can also solve MDGP without given $k$ this way. More precisely, within each bag, at most treewidth many parts are \emph{active}, in the sense that some vertices are already chosen, and more might be added; thus replacing $k$ with $w$ in the runtime. These observations directly yield the following.

\mdgpkk



\section{Conclusion}\label{sec:conclusion}
As we showcase in this work, dense partitioning problems on graphs are of structural and algorithmic interest even when restricted to graph classes that are close to trees, such as well partitioned chordal graphs and thick forests. We provided a structural understanding of optimal dense partitions on well partitioned chordal graphs showing that we can only focus on parts of diameter $2$ and specifically generalized stars and cliques. Building on this, we developed a polynomial time algorithm for MDGP on thick forests. We complemented this, by showing that $\text{MDGP}_k$ is $\mathsf{NP}$-hard even when restricted to split graphs. In addition we developed a dynamic programming framework for constructing connected partitions into $k$ parts of specific sizes on graphs of bounded treewidth.

Our polynomial time algorithm for thick forests relies on the structural results for well partitioned chordal graphs and hence on the freedom to choose the number of parts of our partition. Hence, it cannot be extended to $\text{MDGP}_k$. Thus, whether $\text{MDGP}_k$ is computable in polynomial time on thick forests  remains open.

On the complexity result, our reduction for split graphs uses the fact that the number of parts of the partition is fixed and hence cannot be directly extended to decide whether MDGP is $\mathsf{NP}$-hard on split graphs which remains open. We do believe however that MDGP remains hard when restricted to split graphs. Actually, we even do believe that our proposed reduction also works for MDGP, it only remains to show that a partition into more than $k$ parts for the graph constructed in the reduction cannot have the target density. 

The question if MDGP is in FPT parameterized by treewidth also remains open. Generally, one could investigate the efficiency of the bounded treewidth algorithm for Connected Partition with specific part sizes; specifically whether the dependency on the size of the partition and the treewidth of the graph can be improved. The W[1]-hardness of Equitable Connected Partition parameterized by pathwidth~\cite{equi} shows that  the tracking-the-partition approach we took in this paper  cannot be successful to place MDGP in FPT.
\bibliography{refs}

\end{document}